\documentclass[a4paper,12pt]{amsart}
\usepackage[left=1.2in, right=1.2in, bottom=1.2in, top=1.2in]{geometry}

\usepackage[utf8]{inputenc}

\usepackage{amsmath, amssymb}
\usepackage{mathtools}
\usepackage{bm}

\usepackage{amsaddr}

\usepackage[all]{xy}

\usepackage{csquotes}
\usepackage{enumerate}

\usepackage{graphicx}

\usepackage{caption}
\usepackage{subcaption} 

\usepackage{color}
\usepackage{mwe}       
\usepackage{appendix}
\usepackage{arydshln}

\usepackage{multirow}
\usepackage{rotating}

\usepackage[linesnumbered,ruled]{algorithm2e}
\usepackage{comment}
\usepackage{cite}

\usepackage{url}

\usepackage{thmtools, thm-restate}

\usepackage{soul}
\usepackage{xcolor}
\newcommand\independent{\protect\mathpalette{\protect\independenT}{\perp}}
\def\independenT#1#2{\mathrel{\rlap{$#1#2$}\mkern2mu{#1#2}}}

\usepackage{tikz}
\usetikzlibrary{arrows.meta,shapes}
\usetikzlibrary{arrows,shapes.arrows,shapes.geometric,shapes.multipart,
decorations.pathmorphing,positioning,shapes.swigs,}
\usepackage{setspace}

\newtheorem{lemma}{Lemma}
\newtheorem{remark}{Remark}
\newtheorem{theorem}{Theorem}

\newtheorem{definition}{Definition}

\newtheorem{assumption}{Assumption}

\MakeOuterQuote{"}\EnableQuotes
\DeclareUnicodeCharacter{00A0}{ }

\title[]{On the choice of using raw or demographically-corrected scores}

\author{Ignacio Gonz\'alez-P\'erez, Mats Julius Stensrud, Marco Piccininni} \address{ \small  Institute of Mathematics, \'Ecole Polytechnique Fédérale de Lausanne, Switzerland
}
\allowdisplaybreaks
\DeclareMathOperator{\expit}{expit}
\DeclareMathOperator{\logit}{logit}

\DeclareMathOperator{\diag}{diag}
\begin{document}
\newcommand{\cor}{\overline{c}=0,\overline{r}=1}
\newcommand{\defeq}{:=}

\begin{abstract}
Demographic corrections are routinely performed in many disciplines, including psychology. Yet, there are ongoing debates about whether these corrections are appropriate and improve classification accuracy. Here, we focus on cognitive screening tests, and show that common demographic corrections, like the z-score correction, can be detrimental for classification in some settings. Formally, we present  sufficient conditions ensuring that  raw scores outperform the demographically-corrected ones, and give a substantive interpretation of this result. We also investigate the claim that using demographically-corrected scores results in more fair decisions compared to using raw scores. We apply our results to the Mini-Mental State Examination in the OASIS-3 dataset. 
\end{abstract}

\maketitle

\textit{Keywords:} cognitive screening, age-education correction, accuracy, demographic insensitivity

\section{Introduction}\label{sec:Intro}

Demographic corrections are widely used in the health sciences, particularly in psychology. For example, tests assessing age-related cognitive impairment are usually corrected for age and education \cite{Petersen2003-na, Quaranta2016-cn, Mitrushina2005-gb}. Such corrections are also used in the clinical definitions of cognitive impairment \cite{Bradfield2020-ae}, such as the DSM-5 definition of Mild Neurocognitive Disorder \cite{American_Psychiatric_Association2013-ax} and the older Mild Cognitive Impairment criteria \cite{Petersen1997-od}.

The consequences of demographic corrections are particularly relevant to  screening for cognitive impairment. Cognitive screening tests are used to decide whether an individual should be referred to further clinical assessments \cite{Cullen2007-qt, Belle1996-ws, o2004correcting}. The validity of these tests is usually assessed by their classification accuracy, i.e., sensitivity, specificity, and area under the ROC curve, in categorizing individuals against the gold-standard clinical diagnoses of mild cognitive impairment and/or dementia \cite{Cullen2007-qt, Larner2016-si, Mitrushina2005-gb, kraemer1998adjusting, o2004correcting, Larouche2016-fd, Gregory2004-ic}.

Cognitive screening tests are designed to give a numerical result that summarizes the performance of the individual who undertakes the evaluation. This result, e.g., representing the number of correct answers or a rating, is commonly referred to as the “raw” test score \cite{Gregory2004-ic, Mitrushina2005-gb}.
Sometimes, a demographically-corrected score is preferred over the raw score \cite{magni1996mini, Larner2016-si, Larouche2016-fd, Mitrushina2005-gb, Gregory2004-ic}. Specifically, instead of interpreting raw scores directly, practitioners often draw conclusions by comparing the raw score to reference norms of individuals without cognitive impairment and with similar demographic characteristics \cite{Crum1993-hk, Mitrushina2005-gb, sliwinski1997effect, Larouche2016-fd, Larner2016-si}. In practice, this comparison is made by numerically transforming the raw score into a “corrected” version; this process is known as adjustment \cite{berkman1986association, Larner2016-si, magni1996mini} or correction \cite{o2010age}. The most widespread type of demographic correction is the “z-score” correction \cite{Mitrushina2005-gb, Larouche2016-fd}.

While demographic corrections are ubiquitous in neuropsychology, empirical results suggest that demographically-corrected scores have lower classification accuracy compared to raw scores. For example, a large 1996 study of older persons found that demographic corrections did not improve classification accuracy for dementia compared to using raw scores \cite{Belle1996-ws}. Other empirical studies reported no benefit of demographic corrections on the classification accuracy of the Mini-Mental State Examination (MMSE) \cite{kraemer1998adjusting} and the modified 3MS \cite{o2004correcting} screening tests. In these studies, accuracy was measured with respect to clinical diagnoses. Similar results were found when assessing the prognostic accuracy of the test scores in longitudinal studies \cite{hessler2014age, Quaranta2016-cn}, or when assessing the association between unadjusted or adjusted criteria and dementia biomarkers \cite{Hemminghyth2025}. 

Some researchers have advocated the claim that the suboptimal accuracy of demographically-corrected scores can be explained by the fact that demographic characteristics influence not only the raw scores but also the probability of the condition of interest (e.g., cognitive impairment). Already in 1986, \cite{berkman1986association} used this argument to question the use of correction for age and education in cognitive screening. A decade later,  \cite{sliwinski1997effect} claimed that since age and education were both risk factors for dementia, eliminating their effect on the raw test scores resulted in lower classification accuracy of the test, which was later supported by simulations and real data analyses \cite{o2010age}. More recently, it has been shown that, under a simplified causal model and strong parametric assumptions, corrected scores have a lower area under the ROC curve compared to raw scores, if age and education have sufficiently strong effects on both the test performance and the risk of having cognitive impairment \cite{piccininni2023should}. Simulations have suggested that using raw scores leads to higher marginal sensitivity (or specificity)  for the same specificity (or sensitivity) level, while using demographically-corrected scores leads to more homogeneous performance across demographic groups \cite{wischmann2024consequences}.

These empirical and theoretical results have, however, been met with scepticism by practitioners. Some have questioned whether it is true that raw scores have better classification accuracy compared to corrected scores, because of the small differences found in practice and the inevitable uncertainty of finite sample inference \cite{ilardi2025no, Hemminghyth2025}. Also, the existing simulation results have been questioned due to their strong assumptions, and the fact that there is no guarantee that the simulated data are realistic \cite{ilardi2025no}. Finally, some researchers have informally argued that corrected scores should be preferred over raw scores even if their classification accuracy is lower, because in some sense they are more "fair" and avoid overdiagnosis in vulnerable demographic groups \cite{ilardi2025no, Aiello2025}.

The aim of this work is to bring clarity to this debate. We will show, without relying on restrictive data generating models, parametric assumptions, or empirical results, that it is possible for raw scores to have better classification accuracy compared to corrected scores. Specifically, we will present sufficient conditions for the superiority in terms of classification accuracy of the raw scores over the corrected ones, and give an interpretation of this result.  In addition, we will engage with the claims that corrected scores are more fair. In this context, we formalize the notion of a score being insensitive to demographics, and assess whether demographically-corrected scores fulfil this property. Finally, we apply our results to the OASIS-3 dataset \cite{LaMontagne2019OASIS3}. In this dataset, we find evidence for the superiority of raw MMSE scores in terms of classification accuracy and no evidence of demographic sensitivity for age-corrected scores.

\section{Comparison of classification accuracy}\label{sec:ComparingScores}

\subsection{Notation and setup}\label{sec:Setup}

Consider a screening setting designed to assess whether an individual has a condition of interest ($D=1$) or not ($D=0$). As an example, let $D$ indicate the presence of cognitive impairment, possibly according to a clinical examination or some biomarkers. Let $X$ indicate the raw score resulting from the administration of a cognitive screening test, for example the MMSE \cite{Folstein1975}, with support on a subset of the real line, $\mathcal{X} \subseteq \mathbb{R}$. The variable $V$ represents  the demographic characteristics of the individual.  To simplify the exposition, assume that $V$ is a univariate covariate supported on (a subset of) the real line, $\mathcal{V}$. For example, $V$ could represent age, education, sex, or age-education stratum. We  present our results in full generality in Appendix \ref{sec:AppComparingGeneral}. For simplicity, we will assume that  the support of $(X,V)\mid D=d$ and of $X|V=v,D=d$ do not depend on $v,d$, avoiding positivity issues. In practice, this assumption can be relaxed, as we discuss in Section \ref{sec:OASIS3_Superiority}.

Let $Z$ denote the \textit{corrected score}. More formally, we define the demographic correction as follows.

\begin{definition}[Correction of the raw score]\label{def:Correction}
    Let $\mathcal{F}\subseteq\{F\colon \mathbb{R}\to[0,1] \text{ a c.d.f.}\}$, and let $T\colon\mathcal{F}\times\mathcal{V}\times\mathbb{R}\to\mathbb{R}$ be a map such that $T(F,v,\cdot)$ is strictly increasing and differentiable for all $F\in\mathcal{F}$ and all $v\in\mathcal{V}$. Then the corrected score is constructed as $Z=g_V(X)$ almost surely, with $g_v(x)=T(F_{X,0}(\cdot\mid v),v,x)$ for all $v$ in $\mathcal{V}$.
\end{definition}

Here, $F_{X,d}(\cdot\mid v)$ is  the c.d.f.\ of $X\mid D=d,V=v$ (following \cite{krzanowski2009roc}). While Definition \ref{def:Correction} is fairly general, most corrections found in practice either  make use only of $v$ or of $F_{X,0}(\cdot\mid v)$. Thus, as they will show slightly different behaviours, we formally distinguish them. 
\begin{definition}[Correction types]\label{def:CorrectionTypes}
    Consider a correction of the raw score defined by a map $T$ as per Definition \ref{def:Correction}. We say that this correction is:
    \begin{itemize}
        \item of Type I if there exists $\Tilde{T}\colon\mathcal{F}\times \mathbb{R}\mapsto\mathbb{R}$, such that $T(F,v,x)=\Tilde{T}(F,x)$ for all $(F,v,x)$ in $\mathcal{F}\times \mathcal{V}\times\mathbb{R}$.
        \item of Type II if there exists ${T'}\colon\mathcal{V}\times \mathbb{R}\mapsto\mathbb{R}$, such that $T(F,v,x)={T'}(v,x)$ for all $(F,v,x)$ in $\mathcal{F}\times \mathcal{V}\times\mathbb{R}$.
        \item of Type III if it is neither of Type I nor II.
    \end{itemize}
\end{definition}
Unless explicitly stated otherwise, we are agnostic to the type of correction we consider. To fix ideas, let $X$ be the MMSE score and let $V$ represent age-education stratum. A common age-education correction consists in calculating the ``z-score'' \cite{Mitrushina2005-gb, Larouche2016-fd} 
\begin{equation}\label{eq:AgeEduCorrection}
    Z=\frac{X-\mathbb{E}[X\mid V,D=0]}{\sqrt{Var[X\mid V,D=0]}},
\end{equation}
that is, we subtract from the observed raw score $X$ the expected score from a normative sample (without impairment, $D=0$) with the same demographic characteristics $V$, and divide by the standard deviation of the normative sample \cite{Mitrushina2005-gb, Petersen2003-na}.
We see how the ``z-score'' correction \eqref{eq:AgeEduCorrection} is a Type I correction with $T(F,v,x)=(x-\mu(F))/\sqrt{Var(F)}$ and $\mathcal{F}$ would be the set of univariate c.d.f.'s with positive finite variance. The percentile transformation, the "t-score", and other standardized scores \cite{Gregory2004-ic, Mitrushina2005-gb} are also Type I corrections. The normalized standard score \cite{Gregory2004-ic} is also a Type I correction with  $T(F,v,x)=\Phi^{-1}(F(x))$, where $\Phi$ is the  standard normal c.d.f.\ and $F$ has a continuous density w.r.t.\ the Lebesgue measure, and its strictly increasing. Nevertheless, not all corrections used in practice are of Type I. For example, the MoCA test is sometimes corrected for education status by adding 1 point if the individual has 12 years of education or less \cite{Nasreddine2005}. This is a  Type II  correction defined of the form $T(F,v,x)=x+I(v\leq 12)$, where $I(\cdot)$ denotes the indicator function, and $v$ years of formal education.

We adopt the convention that the raw score $X$ is constructed in such a way that smaller scores are associated with a greater risk of impairment. This convention aligns with several widely adopted cognitive screening tests such as the MMSE \cite{Folstein1975} and the MoCA \cite{Nasreddine2005}.
We therefore consider classification rules of the type
\begin{equation}\label{eq:BasicDecisionRule}
    I(X\leq t)
\end{equation}
and we adopt the same directionality for the corrected score as well. The threshold $t$ is often chosen to achieve a certain misclassification rate amongst the non-impaired, commonly referred to as False-Positive Rate $$FPR(t)\defeq\mathbb{P}(X\leq t\mid D=0)=F_{X,0}(t),$$ where  we denote by $F_{X,d}$ the cumulative distribution function of $X\mid D=d$. The FPR is also referred to as the false-alarm rate or one minus the specificity. The counterpart of the FPR is the probability of correctly classifying an individual as impaired: the True-Positive Rate $$TPR(t)\defeq\mathbb{P}(X\leq t\mid D=1)=F_{X,1}(t),$$ also referred to as the hit rate or sensitivity. 
If we reframe the classification problem as testing the hypothesis $\mathcal{H}_0\colon D=0$ against $\mathcal{H}_1\colon D=1$ \cite[Ch. 2.3.2]{krzanowski2009roc}, the false- and true-positive rates correspond to the  type-I error and  the power (one minus type-II error), respectively.

To understand the performance of the diagnostic test, it is common to plot the pairs of false- and true-positive rates for all possible thresholds, that is, the \textit{Receiver Operating Characteristic} (ROC) curve \cite{green1966signal} 
\begin{equation}\label{eq:ROCset}
    \{(FPR(t), TPR(t))\colon t\in\mathbb{R}\}\subset [0,1]^2.
\end{equation}
This curve is always contained in $[0,1]^2$, tracing between (0,0) and (1,1), which are associated with the threshold behaviour $t\to -\infty$ and $t\to \infty$  respectively.

Rather than parametrizing the ROC curve w.r.t. the classification threshold, we can consider the TPR corresponding to a  FPR $\alpha\in (0,1)$. If $X|D$ has an absolutely continuous distribution w.r.t.\ the Lebesgue measure, the ROC is continuous on $[0,1]$ as we have   $ROC_X(\alpha)=F_{X,1}(F_{X,0}^{-1}(\alpha))$ for all $\alpha$ in $(0,1)$, where we  define the generalized inverse of a c.d.f.\ $F$ as $F^{-1}(\alpha)\defeq\inf\{x\in\mathbb{R}\colon F(x)\geq\alpha\}$. However, if $X|D$ has a discrete distribution, as is the case for example of the MMSE, the ROC set \eqref{eq:ROCset} represents a ``disconnected'' set of points. In this case the ROC set does not represent a \textit{curve}, in the sense that for some point one can find an $\epsilon$-ball centred on it not containing any other points of the ROC set. It is not possible in this case to attain exactly every FPR $\alpha \in (0,1)$ with a decision rule as the one in Equation \eqref{eq:BasicDecisionRule}.

To overcome this issue, and propose a general framework which allows us to remain agnostic about the distribution of the score, for $\alpha\in(0,1)$ we consider the randomized decision rule 
\begin{equation}\label{eq:RandomizedDecision}
    \psi_\alpha(X)\defeq I(X<F_{X,0}^{-1}(\alpha)) + \underbrace{\frac{\alpha-\mathbb{P}(X<F_{X,0}^{-1}(\alpha)|D=0)}{\mathbb{P}(X=F_{X,0}^{-1}(\alpha)|D=0)}}_{=:\rho_\alpha(X)}    I(X=F_{X,0}^{-1}(\alpha)).
\end{equation}
Equation \eqref{eq:RandomizedDecision} can be interpreted as a randomized decision rule: classifying a unit as positive with probability $\psi_\alpha(X)$ corresponds to follow the rule ``when $X<F_{X,0}^{-1}(\alpha)$ classify as positive, when $X=F_{X,0}^{-1}(\alpha)$ classify as positive with probability $\rho_\alpha(X)$''. When $X$ has an absolutely continuous distribution, we adopt the convention that $\rho_\alpha(X)=1$ and therefore $\psi_\alpha(X)=I(X\leq F_{X,0}^{-1}(\alpha))$. In addition, even in the discrete case, if $\alpha$ is chosen such that $F_{X,0}(F_{X,0}^{-1}(\alpha))=\alpha$, then $\rho_\alpha(X)=1$ and again $\psi_\alpha(X)=I(X\leq F_{X,0}^{-1}(\alpha))$ like in the absolutely continuous case. The choice of such an $\alpha$ is equivalent to choosing a threshold $t\in\text{supp}(X)$, like in the non-randomized rule \eqref{eq:BasicDecisionRule}, and then considering $\alpha= F_{X,0}(t)$.

Because by construction $\mathbb{E}[\psi_\alpha(X)|D=0]=\alpha$, we say that  $\psi_\alpha(X)$ has FPR $\alpha$. We define the ROC curve at FPR $\alpha\in(0,1)$ as the TPR (power) of this decision rule 
\begin{align*}
    ROC_X(\alpha)&\defeq\mathbb{E}[\psi_\alpha(X)|D=1]\\&=\mathbb{P}(X<F_{X,0}^{-1}(\alpha)|D=1)+\rho_\alpha(X)\mathbb{P}(X=F_{X,0}^{-1}(\alpha)|D=1)\\&=\mathbb{P}(X<F_{X,0}^{-1}(\alpha)|D=1)\\&\quad+\frac{\mathbb{P}(X=F_{X,0}^{-1}(\alpha)|D=1)}{\mathbb{P}(X=F_{X,0}^{-1}(\alpha)|D=0)}  (\alpha-\mathbb{P}(X<F_{X,0}^{-1}(\alpha)|D=0))\\\text{(in the continuous case) }&=F_{X,1}(F_{X,0}^{-1}(\alpha)).
\end{align*}
The ROC curve when $X$ is discrete corresponds therefore to a linear interpolation of the ROC set points \eqref{eq:ROCset}, and the slope of the interpolated ROC at $\alpha$ (when not at one point of the ROC set) is the marginal likelihood ratio at $F_{X,0}^{-1}(\alpha)$ \cite{gneiting2022receiver}. Constructing the ROC curve as a linear interpolation of points \cite{fawcett2006introduction, gneiting2022receiver} generalizes results in the continuous case, such as the probabilistic interpretation of the area under the ROC curve or the connection between slopes of the ROC curve and likelihood ratios \cite{muschelli2020roc,fawcett2006introduction} (see Lemma \ref{lemma:SuperGradient} in Appendix \ref{sec:AppProofs}).

The rule $\psi_\alpha(\cdot)$ can be seen as a randomized test for $\mathcal{H}_0\colon D=0$ \cite[Ch. 2.3.2]{krzanowski2009roc}. In practice, most cognitive screening test, like the MMSE, are used in a non-randomized (hard thresholding) way, like Equation \eqref{eq:BasicDecisionRule}. We introduce the randomized rule in Equation \eqref{eq:RandomizedDecision} as an auxiliary rule which allows us to study the ROC curves regardless of the support of the score they are based on.

Global performance measures of a diagnostic score, which do not depend on the choice of a threshold or FPR, can be obtained by summarizing the ROC curve.  The most common of these measures is the area under the ROC curve (AUC) \cite[Ch. 2.4.1]{krzanowski2009roc}, defined as $$AUC(X)=\int_0^1ROC_X(\alpha)d\alpha.$$ The AUC is upper-bound by 1, with this value representing a perfect classifier. Hence, higher AUC values indicate better classification accuracy. 

When demographic covariates are available, we can also define covariate-specific ROC curves \cite{janes2009adjusting} in a stratum of $V$,  $$ROC_{X\mid V}(\alpha\mid v)=\mathbb{E}[\psi_\alpha(X|v)|D=1,V=v],$$ where 
\begin{align*}
    \psi_\alpha(X|v)\defeq& I(X<F_{X,0}^{-1}(\alpha|v)) \\&+ {\frac{\alpha-\mathbb{P}(X<F_{X,0}^{-1}(\alpha|v)|D=0,V=v)}{\mathbb{P}(X=F_{X,0}^{-1}(\alpha|v)|D=0,V=v)}}    I(X=F_{X,0}^{-1}(\alpha|v)).
\end{align*}
We use the same formalism for the construction of ROC curves of the corrected score.
\subsection{Main Results}

The main goal of this section is to compare the accuracy of classification rules based on raw and corrected scores. 
To  make a fair comparison, we compare marginal ROC curves of raw and corrected scores at an arbitrary value $\alpha$ of the FPR.
Differently from the existing literature handling covariates and ROC curves \cite{pepe1997regression,pepe1998three,cai2002semiparametric,GonzalezManteiga2011ROC,huang2013logistic,sewak2022transformation,inacio2021robust}, we do not focus on how to ``optimally'' adjust for covariates, but rather we study how the common ways in which covariates are accounted for in demographic corrections affect classification accuracy.

Under  Definition \ref{def:Correction}, classification based on raw or corrected scores conditional on a certain $V$-stratum is equivalent, in the sense that covariate-specific ROC curves  are identical, as we can see in the following lemma.
 
\begin{restatable}{lemma}{lemmaEqualCovariateSpecificRocs}\label{lemma:EqualCovariateSpecificRocs}
    Under Definition \ref{def:Correction}, covariate-specific ROC curves for raw and corrected scores are identical, in the sense that $ROC_{Z\mid V}(\alpha\mid v)=ROC_{X\mid V}(\alpha\mid v)$ for all $\alpha\in (0,1)$ and $v$ in $\mathcal{V}$.
\end{restatable}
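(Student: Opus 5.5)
Fix a stratum $v\in\mathcal{V}$. Conditionally on $V=v$, the corrected score is $Z=g_v(X)$ almost surely, with $g_v(x)=T(F_{X,0}(\cdot\mid v),v,x)$. The key point is that $g_v$ depends only on $v$ and on the (fixed) conditional c.d.f.\ $F_{X,0}(\cdot\mid v)$, so within the stratum it is a deterministic, strictly increasing function. By Definition \ref{def:Correction} it is also continuous. The plan is therefore to show that the randomized rule for $Z$ at level $\alpha$ in stratum $v$ coincides pointwise with the rule for $X$. Concretely, we aim for
\[
\psi_\alpha\bigl(g_v(X)\mid v\bigr)=\psi_\alpha(X\mid v)\quad\text{almost surely on }\{V=v\}.
\]
Taking conditional expectations given $D=1,V=v$ then yields $ROC_{Z\mid V}(\alpha\mid v)=ROC_{X\mid V}(\alpha\mid v)$ for every $\alpha\in(0,1)$.

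The first step is to relate the conditional distributions. Strict monotonicity of $g_v$ gives $\{g_v(X)\le g_v(x)\}=\{X\le x\}$, and likewise with $<$ and $=$. Hence for $d\in\{0,1\}$ we have
\[
F_{Z,d}(g_v(x)\mid v)=F_{X,d}(x\mid v),
\]
together with the analogous identities for the point masses $\mathbb{P}(Z=g_v(x)\mid D=d,V=v)=\mathbb{P}(X=x\mid D=d,V=v)$.

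The second step is the quantile identity
\[
F_{Z,0}^{-1}(\alpha\mid v)=g_v\bigl(F_{X,0}^{-1}(\alpha\mid v)\bigr).
\]
Write $q=F_{X,0}^{-1}(\alpha\mid v)$. Because $Z$ is supported on $g_v(\mathcal{X})$, the infimum defining the generalized inverse of $F_{Z,0}(\cdot\mid v)$ can be taken over values $g_v(x)$ with $F_{X,0}(x\mid v)\ge\alpha$. Right-continuity of $F_{X,0}(\cdot\mid v)$ gives $F_{X,0}(q\mid v)\ge\alpha$, and strict monotonicity of $g_v$ then identifies the infimum as $g_v(q)$.

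I expect this step to be the main obstacle. The difficulty arises when $g_v(\mathcal{X})$ is not an interval, as for a discrete $X$, or when $F_{Z,0}(\cdot\mid v)$ is flat between atoms. One must then check that the infimum over all reals $z$ with $F_{Z,0}(z\mid v)\ge\alpha$ is attained at an atom $g_v(q)$ and not at some smaller non-atom point. I would argue this by contradiction. Suppose some $z<g_v(q)$ had $F_{Z,0}(z\mid v)\ge\alpha$. Let $x^*=\sup\{x:g_v(x)\le z\}$; this satisfies $x^*<q$ by continuity and strict monotonicity of $g_v$. Then $F_{X,0}(x^*\mid v)\ge\alpha$, contradicting the definition of $q$.

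The third step combines the two. With the quantile identity, the indicators $I(Z<F_{Z,0}^{-1}(\alpha\mid v))=I(X<q)$ and $I(Z=F_{Z,0}^{-1}(\alpha\mid v))=I(X=q)$ agree. The randomization weight $\rho$ for $Z$ is built from the probabilities of these same events under $D=0,V=v$, so it equals the weight for $X$. In the absolutely continuous case both weights are $1$ by convention. Hence the two rules coincide, and the expectations under $D=1,V=v$ agree. This proves the lemma for all $\alpha\in(0,1)$ and all $v\in\mathcal{V}$.
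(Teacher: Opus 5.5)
Your proof is correct and takes the same route as the paper. The paper simply observes that within each stratum $Z=g_v(X)$ almost surely with $g_v$ strictly increasing, and concludes that the covariate-specific ROC curves coincide; you supply the details it leaves implicit, namely the quantile identity $F_{Z,0}^{-1}(\alpha\mid v)=g_v\bigl(F_{X,0}^{-1}(\alpha\mid v)\bigr)$ (where the continuity of $g_v$, coming from differentiability, is genuinely needed) and the resulting equality of the randomized rules and their weights $\rho$.
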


For a proof see Appendix \ref{sec:AppProofs_ComparingScores}. Following Lemma \ref{lemma:EqualCovariateSpecificRocs}, to simplify notation, we denote $ROC_{X\mid V}(\alpha\mid v)$ by $ROC_v(\alpha)$ for all $v$ in the support of $V$ and all $\alpha$ in (0,1). 
Another property implied by Definition \ref{def:Correction}, which we also prove in Appendix \ref{sec:AppProofs_ComparingScores}, relates to a special case where raw scores are conditionally independent from the covariates.

\begin{restatable}{lemma}{lemmaIndependenceImplication}\label{lemma:IndependenceImplication}
    If $X\independent V\mid D$ and the corrected score is obtained under a Type I correction as per Definitions \ref{def:Correction} and \ref{def:CorrectionTypes}, then $Z\independent V\mid D$.
\end{restatable}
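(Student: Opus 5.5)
The plan is to show that under a Type I correction with $X\independent V\mid D$, the map $g_v$ does not depend on $v$, so $Z$ is a fixed measurable function of $X$. Conditional independence is then inherited directly.

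\emph{Step 1: the normative c.d.f.\ is constant in $v$.} Since $X\independent V\mid D$, the conditional law of $X$ given $D=0,V=v$ equals the law of $X$ given $D=0$. Hence $F_{X,0}(\cdot\mid v)=F_{X,0}(\cdot)$ for all $v\in\mathcal{V}$. The standing assumption that the support of $X\mid V=v,D=d$ does not depend on $(v,d)$ means there are no null strata to worry about. If one insists on versions of conditional distributions, equality holds for $P_V$-almost every $v$, which suffices because $Z=g_V(X)$ is itself only defined almost surely.

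\emph{Step 2: $g_v$ is constant in $v$.} By the Type I property, there is $\tilde T$ with
\[
g_v(x)=T(F_{X,0}(\cdot\mid v),v,x)=\tilde T(F_{X,0}(\cdot\mid v),x)=\tilde T(F_{X,0},x)=:h(x).
\]
Thus $Z=h(X)$ almost surely, where $h$ is a fixed function. It is strictly increasing and differentiable, hence Borel measurable.

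\emph{Step 3: conclude.} For any Borel set $B$ and each $d$, we have $\{Z\in B\}=\{X\in h^{-1}(B)\}$ up to a null set. Therefore
\[
P(Z\in B\mid V,D=d)=P(X\in h^{-1}(B)\mid V,D=d)=P(X\in h^{-1}(B)\mid D=d)=P(Z\in B\mid D=d),
\]
where the middle equality uses $X\independent V\mid D$. This gives $Z\independent V\mid D$.

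The only delicate point is Step 1. There one must make precise that the conditional c.d.f.\ is determined only up to $P_V$-null sets, and check that $Z$ is unaffected by that choice. This is handled by the almost-sure definition of $Z$ together with the positivity/support assumption.
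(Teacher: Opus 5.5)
Your proof is correct and takes essentially the same route as the paper. Conditional independence gives $F_{X,0}(\cdot\mid v)=F_{X,0}(\cdot)$, the Type I property then yields a single strictly increasing function $\tilde T(F_{X,0},\cdot)$ with $Z=\tilde T(F_{X,0},X)$ almost surely, and $Z\independent V\mid D$ follows. One small precision: the c.d.f.\ equality holds for $P_{V\mid D=0}$-almost every $v$ (not $P_V$), and the positivity assumption you invoke is what carries $Z=\tilde T(F_{X,0},X)$ over to the $D=1$ stratum.
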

This lemma states that if the demographic covariates are not associated with the raw score among impaired and non-impaired, then performing a Type I correction  does not introduce a dependence with the demographic covariates. The reason why  Lemma \ref{lemma:IndependenceImplication} requires a type I correction is to ensure that the corrected score is only affected by $V$ through the covariate-specific distribution of the raw score amongst the non-impaired. This is the case for all the corrections introduced before, except for the simple (additive) education-adjustment of the MoCA test. 
\begin{remark}\label{remark:ReciprocateNotTrue}
    Under  Type I correction, $Z\independent V\mid D$ does not imply $X\independent V\mid D$; i.e., the reciprocate of Lemma \ref{lemma:IndependenceImplication} is not true. Take for example the following additive model  $X\stackrel{d}{=}h(V)+\beta D+\varepsilon$, with $\varepsilon\sim\mathcal{N}(0,\sigma^2)$ independent of $V$ and $D$, $\sigma>0$,  $\beta\in\mathbb{R}$ and $h\colon\mathcal{V}\to\mathbb{R}$ measurable. Then for the transformation in Equation \eqref{eq:AgeEduCorrection} we have $Z\stackrel{d}{=}(\beta D+\varepsilon)/\sigma$, meaning $Z\independent V\mid D$ (as $\varepsilon\independent V\mid D$), but  $X\independent V\mid D$ does not hold.
\end{remark} 

\begin{restatable}{corollary}{coroEquality}\label{coro:Equality}
    Under the conditions of Lemma \ref{lemma:IndependenceImplication}, if $X\independent V\mid D$ then for $\alpha\in(0,1)$ $$ROC_X(\alpha)=ROC_Z(\alpha)$$
\end{restatable}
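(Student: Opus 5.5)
Under $X\independent V\mid D$ the conditional c.d.f. $F_{X,0}(\cdot\mid v)$ equals the marginal $F_{X,0}$ for every $v$. For a Type I correction this makes $g_v(x)=\Tilde{T}(F_{X,0}(\cdot\mid v),x)=\Tilde{T}(F_{X,0},x)=:g(x)$ the same function for all $v$. Hence $Z=g(X)$ almost surely, with $g$ strictly increasing and not depending on $V$. The plan is to show that the marginal ROC curve is invariant under a strictly increasing transformation of the score; the claim $ROC_Z(\alpha)=ROC_X(\alpha)$ then follows. Lemma \ref{lemma:IndependenceImplication} gives $Z\independent V\mid D$ as a sanity check but is not needed beyond this observation. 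Alternatively, one can argue via Lemma \ref{lemma:EqualCovariateSpecificRocs}: under conditional independence the covariate-specific ROCs of $X$ and of $Z$ both coincide with their marginal ROCs, and the lemma equates the covariate-specific ones. I will use the direct invariance argument, since it avoids writing the marginal decision rule as a mixture over $v$.

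\textbf{Invariance under a strictly increasing map.} First I relate the quantiles. Since $g$ is strictly increasing, $\{Z\le g(x)\}=\{X\le x\}$, so $F_{Z,0}(g(x))=F_{X,0}(x)$. The key claim is that, for each $\alpha\in(0,1)$,
\begin{equation*}
F_{Z,0}^{-1}(\alpha)=g\bigl(F_{X,0}^{-1}(\alpha)\bigr) \quad\text{on the support of } Z\mid D=0.
\end{equation*}
More precisely, the point $q_Z:=F_{Z,0}^{-1}(\alpha)$ is an atom or a support point that corresponds to $q_X:=F_{X,0}^{-1}(\alpha)$. To see this, note that $F_{Z,0}$ only increases at points of $g(\text{supp}(X))$, and the infimum defining $q_Z$ is attained at the image of $q_X$. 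Because $g$ may fail to be onto, I will argue directly with events rather than with $g^{-1}$. Then $\{Z<q_Z\}=\{X<q_X\}$ and $\{Z=q_Z\}=\{X=q_X\}$ up to null sets under both $D=0$ and $D=1$. This holds because the support of $X\mid D=d$ does not depend on $d$, which is assumed in the setup.

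\textbf{Conclusion.} Substituting these event identities into the definition \eqref{eq:RandomizedDecision} gives $\rho_\alpha(Z)=\rho_\alpha(X)$ and $\psi_\alpha(Z)=\psi_\alpha(X)$ almost surely under $D=1$. Taking the conditional expectation given $D=1$ yields $ROC_Z(\alpha)=ROC_X(\alpha)$. In the absolutely continuous case the same identities reduce to $F_{Z,1}(F_{Z,0}^{-1}(\alpha))=F_{X,1}(g^{-1}(g(F_{X,0}^{-1}(\alpha))))$, which gives the same conclusion.

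\textbf{Main obstacle.} The hardest step is the quantile identity $F_{Z,0}^{-1}(\alpha)=g(F_{X,0}^{-1}(\alpha))$ for the generalized inverse when $X$ is discrete or has gaps in its support. I will prove it by showing two inequalities. First, $F_{Z,0}(g(q_X))=F_{X,0}(q_X)\ge\alpha$, which gives $q_Z\le g(q_X)$. Second, for any $z<g(q_X)$, $F_{Z,0}(z)=\mathbb{P}(g(X)\le z\mid D=0)\le \mathbb{P}(X<q_X\mid D=0)<\alpha$, using that $g$ is strictly increasing. Together these give equality, and the event identities follow immediately.
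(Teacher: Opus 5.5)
The overall route is the paper's. Under $X\independent V\mid D$ and a Type~I correction, $Z=\tilde g(X)$ almost surely with $\tilde g$ strictly increasing. The conclusion then is invariance of the ROC curve under strictly increasing transformations. The paper simply cites this (Krzanowski and Hand, Ch.~2.2.3, Property~2), whereas you try to prove it for the randomized/interpolated ROC. Your first inequality, $F_{Z,0}(g(q_X))=F_{X,0}(q_X)\geq\alpha$, is fine. So is the final substitution of the event identities into $\rho_\alpha$ and $\psi_\alpha$.

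\textbf{The gap is in your second inequality.} This is the step you yourself flag as the crux, and it uses a false claim: $\mathbb{P}(X<q_X\mid D=0)<\alpha$.

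\begin{itemize}
\item In fact $\mathbb{P}(X<q_X\mid D=0)=\lim_{x\uparrow q_X}F_{X,0}(x)\leq\alpha$. Whenever $F_{X,0}$ is continuous at $q_X$, in particular in the absolutely continuous case, it equals $\alpha$ exactly.
\item Your bound therefore only gives $F_{Z,0}(z)\leq\alpha$ for $z<g(q_X)$, which does not exclude $q_Z<g(q_X)$.
\item As written, the argument only covers scores whose support has no accumulation from the left at $q_X$ (e.g.\ finite-support scores like the MMSE).
\item Your continuous-case formula $F_{X,1}(g^{-1}(g(F_{X,0}^{-1}(\alpha))))$ rests on this unproven identity.
\end{itemize}

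Strict monotonicity alone is not enough for the quantile identity. Take $g(x)=x+I(x\geq q_X)$ with $X\mid D=0$ continuous. Then $F_{Z,0}(z)=\alpha$ on $[q_X,q_X+1)$, so $q_Z=q_X\neq g(q_X)$. The ROC happens to be preserved here, but your identity is false, so some regularity of $g$ must enter.

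\textbf{Fix.} Use the continuity implied by differentiability in Definition~\ref{def:Correction}: $g\colon\mathbb{R}\to\mathbb{R}$ is continuous and strictly increasing. Take any $z<g(q_X)$.
\begin{itemize}
\item If $z\leq\inf_x g(x)$, then $F_{Z,0}(z)=0<\alpha$.
\item Otherwise, by the intermediate value theorem $z=g(x)$ for some $x<q_X$. Hence $F_{Z,0}(z)=F_{X,0}(x)<\alpha$ by the definition of $q_X$ as an infimum.
\end{itemize}
With this replacement the rest of your argument goes through.

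\textbf{Minor point.} Your appeal to $d$-invariance of the support of $X\mid D=d$ is unnecessary. Once $q_Z=g(q_X)$, the identities $\{Z<q_Z\}=\{X<q_X\}$ and $\{Z=q_Z\}=\{X=q_X\}$ hold up to null sets under each conditional law, simply because $Z=g(X)$ almost surely and $\mathbb{P}(D=d)>0$.
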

This Corollary shows that when the covariates $V$ are independent  of  the raw score in the impaired and non-impaired groups, performing a Type I correction  is irrelevant for classification accuracy.  For example, if we correct the MMSE raw score for age using the z-score correction \eqref{eq:AgeEduCorrection} but age is not associated with the raw MMSE score in the groups with and without impairment, then raw and corrected scores will have the same marginal classification accuracy. 

To compare adjusted and unadjusted ROC curves in less trivial scenarios, we will make three assumptions. The first two are likely to be accepted in most real-life settings.

\begin{assumption}[The test is well-designed]\label{ass:TestWellDesigned}
    The conditional likelihood ratio of the raw score given the demographic covariates, $$LR_{X\mid V}(\cdot\mid v)=\dfrac{f_{X\mid V,D}(\cdot\mid v,1)}{f_{X\mid V,D}(\cdot\mid v,0)},$$ is  non-increasing for all $v$ in $\mathcal{V}$.
\end{assumption}
 By $f_{X\mid V,D}$ we denote the density (w.r.t. a certain dominating measure, usually the Lebesgue or the counting measure) of $X| V,D$. Assumption \ref{ass:TestWellDesigned} is equivalent to the covariate-specific ROC curves $ROC_v$ being concave for all $v$ in $\mathcal{V}$ \cite[Thm. 2 \& 3]{gneiting2022receiver}. Assumption \ref{ass:TestWellDesigned} is expected to hold when the cognitive screening test from which the raw score $X$ is obtained is designed to discriminate impaired individuals using a ``smaller than'' threshold. This is the case for the MMSE, the MoCA and several other cognitive screening tests. If Assumption \ref{ass:TestWellDesigned} does not hold, there would be a $V$-stratum and a certain FPR for which the optimal decision rule (the likelihood ratio test) classifies as ``positive'' those with a score in a set not described by a  ``smaller than'' threshold, meaning the clinical test is flawed in its construction. This assumption aligns with comments made in \cite{pesce2010convexity} that non-concave ROC curves ``must be considered irrational''. 

\begin{assumption}[The correction is well-designed]\label{ass:WellDesignedCorrection}
    When the corrected score is constructed under Definition \ref{def:Correction}, then $$Z\independent V\mid D=0.$$
\end{assumption}

This assumption can be expected to hold in practice, as achieving the conditional independence in Assumption \ref{ass:WellDesignedCorrection} is arguably the purpose of any correction under Definition \ref{def:Correction}. For example, when describing age-correction, \cite{Petersen2003-na} write that the correction removes ``the effect of age so that the average test score across persons of different ages will be invariant''. The claim suggests that the procedure makes the corrected score $Z$ independent of the demographic variable $V$ in the $D=0$ group.
The most common demographic correction, the z-score correction \eqref{eq:AgeEduCorrection}, is often justified by the assumption that the raw score is normally distributed conditional on the demographic variables, among individuals in the normative sample \cite{Mitrushina2005-gb}. If $X\mid V,D=0$ is normally distributed, or more generally belongs to a location-scale family, the z-score correction is guaranteed to satisfy Assumption \ref{ass:WellDesignedCorrection}. In Appendix \ref{sec:AppComparingGeneral_GeneralCase} we show that Assumption \ref{ass:WellDesignedCorrection} can be omitted by strengthening the other assumptions (see Assumption \ref{ass:SufficientConditionsGeneral} in Appendix \ref{sec:AppComparingGeneral_GeneralCase}).

Assumptions \ref{ass:TestWellDesigned} and \ref{ass:WellDesignedCorrection} are plausible in most settings. However the third assumption will not hold just based on basic principles. Yet this assumption helps us describe when performing a correction leads to worse classification accuracy. 

\begin{assumption}[Probabilistic co-monotonicity]\label{ass:Comonotonicity}
    For a given $\alpha\in(0,1)$ it holds that $\alpha_v^X\defeq\mathbb{E}[\psi_\alpha(X)|D=0,V=v]$ and $\mathbb{P}(D=1\mid {X=F_{X, 0}^{-1}(\alpha)},V=v)$ are co-monotone\footnote{By co-monotone we mean they are either both increasing or decreasing. For a more general definition see \cite{armstrong1993chebyshev}.} in~$v$.
\end{assumption}

Informally, this assumption establishes that the association between $V$ and $X$ amongst non-impaired individuals  ($D=0$) and the association between $V$ and $D$ after conditioning on a certain raw score must have the same ``direction'' or ``sign''. We further examine this Assumption in Section \ref{sec:CausalInterpretation}.

Assumptions \ref{ass:TestWellDesigned}-\ref{ass:Comonotonicity} can be checked empirically if the $D$-status of individuals is known. This is the case, for example, in many studies aimed to determine classification accuracy of cognitive screening tests \cite{Cullen2007-qt}, where  cognitive impairment is also assessed. If parametric models are posited, the conditions in Assumptions \ref{ass:TestWellDesigned}-\ref{ass:Comonotonicity} could be translated to conditions on the coefficients of the models, which then could be checked empirically using the fitted parameters.  We give an example in Appendix \ref{sec:AppParametricModel_X}, related to Remark \ref{remark:ReciprocateNotTrue}.

Under  Assumptions \ref{ass:TestWellDesigned} and \ref{ass:WellDesignedCorrection}, it can be shown that when the co-monotonicity in Assumption \ref{ass:Comonotonicity} holds,  raw scores have equal or better marginal classification accuracy than corrected scores.

\begin{restatable}{theorem}{thmSuperiorityExtended}\label{thm:Superiority}
    Fix $\alpha\in (0,1)$. Under Definition \ref{def:Correction} and Assumptions \ref{ass:TestWellDesigned} and \ref{ass:WellDesignedCorrection} it holds that
    \begin{equation}\label{eq:LowerBound}
        ROC_X(\alpha)-ROC_Z(\alpha)\geq \mathbb{E}\left[LR_{(X,V)}(F_{X, 0}^{-1}(\alpha),V)(\alpha_V^X-\alpha)\mid D=0\right].
    \end{equation}
    If furthermore Assumption \ref{ass:Comonotonicity}  holds, then $$ROC_X(\alpha)\geq ROC_Z(\alpha).$$
\end{restatable}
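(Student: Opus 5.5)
The plan is to write both marginal ROC values as expectations of decision rules under $D=1$ and compare them stratum by stratum, using the concavity of the covariate-specific ROC curves.

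\emph{Step 1: write both ROCs as mixtures over strata.} By definition,
\[
ROC_X(\alpha)=\mathbb{E}\bigl[\mathbb{E}[\psi_\alpha(X)\mid D=1,V]\mid D=1\bigr].
\]
Within stratum $v$, the rule $\psi_\alpha(X)$ is a threshold-type rule on $X$ with FPR $\alpha_v^X$. Hence its stratum-specific TPR equals $ROC_v(\alpha_v^X)$. Here we use that a (possibly randomized) lower-threshold rule attaining FPR $\alpha_v^X$ in stratum $v$ reaches the point of the interpolated covariate-specific ROC curve.

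For $Z$, Assumption \ref{ass:WellDesignedCorrection} gives that the law of $Z\mid D=0,V=v$ equals that of $Z\mid D=0$. So the marginal rule $\psi_\alpha(Z)$ has stratum FPR exactly $\alpha$ in every stratum. Since $g_v$ is strictly increasing, it is a threshold rule on $X$ within stratum $v$. By Lemma \ref{lemma:EqualCovariateSpecificRocs}, its stratum TPR is $ROC_v(\alpha)$. Therefore
\[
ROC_X(\alpha)-ROC_Z(\alpha)=\mathbb{E}\bigl[ROC_V(\alpha_V^X)-ROC_V(\alpha)\mid D=1\bigr].
\]

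\emph{Step 2: concavity bound.} By Assumption \ref{ass:TestWellDesigned}, each $ROC_v$ is concave. The conditional likelihood ratio at the threshold point $x_\alpha:=F_{X,0}^{-1}(\alpha)$ is a supergradient of $ROC_v$ at $\alpha_v^X$ (Lemma \ref{lemma:SuperGradient}). Using the supergradient at $\alpha_v^X$ rather than at $\alpha$ gives
\[
ROC_v(\alpha)\le ROC_v(\alpha_v^X)+LR_{X\mid V}(x_\alpha\mid v)\,(\alpha-\alpha_v^X).
\]
This yields $ROC_v(\alpha_v^X)-ROC_v(\alpha)\ge LR_{X\mid V}(x_\alpha\mid v)(\alpha_v^X-\alpha)$.

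Then change measure from $D=1$ to $D=0$ on $V$ via $f_{V\mid D=1}(v)/f_{V\mid D=0}(v)$. The product $LR_{X\mid V}(x\mid v)\,f_{V\mid 1}(v)/f_{V\mid 0}(v)$ is precisely the joint likelihood ratio $LR_{(X,V)}(x,v)$, which gives \eqref{eq:LowerBound}.

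\emph{Main obstacle: discrete and boundary cases.} The delicate part is the supergradient at $\alpha_v^X$ when $X$ is discrete. In that case $x_\alpha$ is an atom, and $\alpha_v^X$ lies inside the linearly interpolated segment corresponding to that atom. The slope there is exactly $LR_{X\mid V}(x_\alpha\mid v)$, so concavity gives a valid supergradient. One must also check that the randomization $\rho_\alpha$ keeps the stratum rule on the interpolated ROC, which holds because it is a convex combination of the two adjacent thresholds.

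\emph{Step 3: sign under Assumption \ref{ass:Comonotonicity}.} By Bayes,
\[
LR_{(X,V)}(x_\alpha,v)=\frac{\pi(v)}{1-\pi(v)}\cdot\frac{\mathbb{P}(D=0)}{\mathbb{P}(D=1)},
\]
with $\pi(v)=\mathbb{P}(D=1\mid X=x_\alpha,V=v)$. This is increasing in $\pi(v)$, hence co-monotone with $\alpha_v^X$. Moreover,
\[
\mathbb{E}[\alpha_V^X-\alpha\mid D=0]=\mathbb{E}[\psi_\alpha(X)\mid D=0]-\alpha=0.
\]
The lower bound is therefore a covariance under $D=0$ of two co-monotone functions of $V$. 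By Chebyshev's sum (covariance) inequality \cite{armstrong1993chebyshev}, this covariance is nonnegative, so $ROC_X(\alpha)\ge ROC_Z(\alpha)$.
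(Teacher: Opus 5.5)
Your proposal is correct and follows the paper's proof essentially step for step. The shared steps are: the stratum decomposition $\mathbb{E}[\psi_\alpha(S)\mid V=v,D=1]=ROC_v(\alpha_v^S)$, with $\alpha_v^Z=\alpha$ under Assumption \ref{ass:WellDesignedCorrection}; the supergradient $LR_{X\mid V}(F_{X,0}^{-1}(\alpha)\mid v)$ of the concave $ROC_v$ at $\alpha_v^X$; the change of measure to $D=0$ that yields the joint likelihood ratio; and Bayes plus Chebyshev's weighted inequality using $\mathbb{E}[\alpha_V^X-\alpha\mid D=0]=0$. The only cosmetic difference is that you apply Bayes to $LR_{(X,V)}$ directly, whereas the paper factors out $LR_X(F_{X,0}^{-1}(\alpha))$ and works with $LR_{V\mid X}$; the two are equivalent.
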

We give a proof in Appendix \ref{sec:AppProofs_ComparingScores}. When both scores are ``given an equal starting point'' by selecting thresholds leading to the same FPR, Theorem \ref{thm:Superiority} gives concrete conditions guaranteeing that the raw score has higher TPR. Furthermore,  Theorem \ref{thm:Superiority} motivates an implementable test for superiority of the raw score: let $\tau(\alpha)$ be the expectation on the right-hand-side of Equation \eqref{eq:LowerBound}. Under Assumptions \ref{ass:TestWellDesigned} and \ref{ass:WellDesignedCorrection}, Theorem \ref{thm:Superiority} establishes that rejecting the null hypothesis $\mathcal{H}_0\colon\tau(\alpha)< 0$ provides evidence for the superiority of the raw score over the demographically-corrected score. 

Theorem \ref{thm:Superiority} has practical consequences. In particular, it can be used to explain observed empirical results from previous studies \cite{Belle1996-ws, kraemer1998adjusting, o2004correcting, hessler2014age, Quaranta2016-cn, Hemminghyth2025} showing better classification accuracy of raw scores compared to corrected scores. 
Theorem \ref{thm:Superiority} can also justify the intuition that corrections are undesirable when the demographic variables are strongly associated with cognitive impairment \cite{berkman1986association, sliwinski1997effect}. We elaborate more on this point in Section \ref{sec:CausalInterpretation}. Previous simulations and analytical results have suggested that corrections are detrimental when demographic variables are strongly associated with cognitive impairment under strict parametric models \cite{o2010age, piccininni2023should, wischmann2024consequences}. Here, we improve on these results, providing sufficient conditions for the superiority of the raw score, without relying on parametric models, or vague notions of ``strong'' associations.

Some practitioners believe that corrected scores outperform raw ones as they leverage more information, and therefore they have higher classification accuracy \cite{piccininni2023should}. We give an information-theoretic argument for why this belief is misleading in Appendix~\ref{sec:AppInformationTheory}.

The assumptions in Theorem \ref{thm:Superiority} can be modified to determine sufficient conditions for superiority of the corrected score. Under Definition \ref{def:Correction}, Assumption \ref{ass:TestWellDesigned} holds for the raw score, in the sense that $LR_{X|V}(\cdot\mid v)$ is decreasing for all $v$ in $\mathcal{V}$ if and only if $LR_{Z|V}(\cdot\mid v)$ is decreasing. This follows from Lemma \ref{lemma:EqualCovariateSpecificRocs}. To achieve an analogous result to Theorem \ref{thm:Superiority}, we also  need to modify Assumption \ref{ass:Comonotonicity}.

\begin{assumption}[Probabilistic counter-monotonicity]\label{ass:Countermonotonicity}
    For a given $\alpha\in(0,1)$ it holds that $\alpha^X_v$ and $\mathbb{P}(D=1\mid {Z=F_{Z, 0}^{-1}(\alpha)},V=v)$ are counter-monotone in~$v$.
\end{assumption}

Informally, Assumption \ref{ass:Countermonotonicity} states that the association between $V$ and $X$ among individuals with $D=0$ goes in the opposite direction compared to the association between $V$ and $D$ for a certain value of $Z$. When this is the case, corrected scores will have higher classification accuracy.  

\begin{restatable}{theorem}{thmSuperiorityZ}\label{thm:Superiority_Z}
    Fix $\alpha\in (0,1)$. Under Definition \ref{def:Correction} and Assumptions \ref{ass:TestWellDesigned}, \ref{ass:WellDesignedCorrection} and \ref{ass:Countermonotonicity}, it holds that $$ROC_Z(\alpha)\geq ROC_X(\alpha).$$ 
\end{restatable}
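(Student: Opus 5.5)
Under Assumption \ref{ass:WellDesignedCorrection}, the decision rule $\psi_\alpha(Z)$ has FPR exactly $\alpha$ in every $V$-stratum. We therefore run the argument of Theorem \ref{thm:Superiority} with the roles of $X$ and $Z$ exchanged: treat the raw-score rule as the competitor, and use concavity of the covariate-specific ROC curves $ROC_v$ at the point $\alpha$ (rather than at $\alpha^X_v$).

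First decompose both marginal TPRs stratum by stratum. Since $Z\independent V\mid D=0$, we have $\mathbb{E}[\psi_\alpha(Z)\mid D=0,V=v]=\alpha$ for all $v$. By Lemma \ref{lemma:EqualCovariateSpecificRocs}, within stratum $v$ the rule $\psi_\alpha(Z)$ is a threshold-type rule in $Z$, hence in $X$, since $g_v$ is strictly increasing. Its stratum TPR is therefore $ROC_v(\alpha)$. The rule $\psi_\alpha(X)$ is also a threshold rule in $X$ within stratum $v$, with FPR $\alpha_v^X$, so its stratum TPR is $ROC_v(\alpha^X_v)$. This gives
\begin{equation*}
ROC_Z(\alpha)-ROC_X(\alpha)=\mathbb{E}\left[ROC_V(\alpha)-ROC_V(\alpha^X_V)\mid D=1\right].
\end{equation*}
By Assumption \ref{ass:TestWellDesigned}, each $ROC_v$ is concave. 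Take the supergradient at $\alpha$ given by Lemma \ref{lemma:SuperGradient}, namely the likelihood ratio $LR_{Z\mid V}(F_{Z,0}^{-1}(\alpha)\mid v)$. Concavity yields $ROC_v(\alpha)-ROC_v(\alpha_v^X)\geq LR_{Z\mid V}(F_{Z,0}^{-1}(\alpha)\mid v)(\alpha-\alpha^X_v)$. Next change measure from $D=1$ to $D=0$ using $f_{V\mid D}(v\mid 1)/f_{V\mid D}(v\mid 0)$. The product of the two ratios is the joint likelihood ratio $LR_{(Z,V)}(F_{Z,0}^{-1}(\alpha),v)$. This gives the analogue of \eqref{eq:LowerBound}:
\begin{equation*}
ROC_Z(\alpha)-ROC_X(\alpha)\geq \mathbb{E}\left[LR_{(Z,V)}(F_{Z,0}^{-1}(\alpha),V)(\alpha-\alpha^X_V)\mid D=0\right].
\end{equation*}

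Then rewrite $LR_{(Z,V)}(z,v)$ via Bayes as $\frac{\pi_0}{\pi_1}\cdot\frac{p(v)}{1-p(v)}$ with $p(v)=\mathbb{P}(D=1\mid Z=F_{Z,0}^{-1}(\alpha),V=v)$. The map $p\mapsto p/(1-p)$ is increasing, so this weight is co-monotone with $p(v)$. Under Assumption \ref{ass:Countermonotonicity}, $p(v)$ is counter-monotone with $\alpha^X_v$, hence co-monotone with $\alpha-\alpha^X_v$. We also need that $\mathbb{E}[\alpha^X_V-\alpha\mid D=0]=0$, which holds because $\mathbb{E}[\psi_\alpha(X)\mid D=0]=\alpha$. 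Chebyshev's association (covariance) inequality for co-monotone functions of the single variable $V$ then gives
\begin{equation*}
\mathbb{E}[w(V)(\alpha-\alpha^X_V)\mid D=0]\geq \mathbb{E}[w(V)\mid D=0]\,\mathbb{E}[\alpha-\alpha_V^X\mid D=0]=0,
\end{equation*}
which concludes the proof.

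The main obstacle is the supergradient step in the discrete, randomized case. When $\alpha$ coincides with a vertex of the interpolated $ROC_v$, the relevant slope at $F_{Z,0}^{-1}(\alpha)$ must be a valid supergradient. The joint likelihood ratio at the point mass $Z=F_{Z,0}^{-1}(\alpha)$ must also correspond to the conditional probability appearing in Assumption \ref{ass:Countermonotonicity}. I would handle this exactly as in the proof of Theorem \ref{thm:Superiority}, invoking Lemma \ref{lemma:SuperGradient} for the interpolated curve, and check that the transformation $g_v$ maps the point $F_{Z,0}^{-1}(\alpha)$ to the stratum threshold $F^{-1}_{X,0}(\alpha\mid v)$.
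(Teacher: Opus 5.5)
Your proposal is correct and follows the paper's proof. It is exactly the argument of Theorem~\ref{thm:Superiority} (via Lemma~\ref{lemma:Superiority}) with $X$ and $Z$ interchanged: $\alpha^Z_v=\alpha$ from Assumption~\ref{ass:WellDesignedCorrection}, the supergradient of $ROC_v$ at $\alpha$, and the weighted Chebyshev inequality with the sign flip supplied by Assumption~\ref{ass:Countermonotonicity}. Writing the weight directly as $\frac{\pi_0}{\pi_1}\frac{p(v)}{1-p(v)}$, instead of factoring out the marginal likelihood ratio as the paper does, is only a cosmetic difference.
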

This gives a result similar to Corollary \ref{coro:Equality}, though not restricted to Type I corrections. 
\begin{restatable}{corollary}{coroEqualityweak}\label{coro:Equality_weak}
    Fix $\alpha\in (0,1)$. Under Definition \ref{def:Correction} and Assumptions \ref{ass:TestWellDesigned}-\ref{ass:Countermonotonicity}, it holds that $$ROC_Z(\alpha)= ROC_X(\alpha).$$
\end{restatable}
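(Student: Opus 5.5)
This corollary follows by combining Theorem \ref{thm:Superiority} and Theorem \ref{thm:Superiority_Z}. Assumptions \ref{ass:TestWellDesigned}--\ref{ass:Countermonotonicity} include Assumptions \ref{ass:TestWellDesigned}, \ref{ass:WellDesignedCorrection} and \ref{ass:Comonotonicity}. So under Definition \ref{def:Correction}, Theorem \ref{thm:Superiority} gives $ROC_X(\alpha)\geq ROC_Z(\alpha)$ at the fixed $\alpha$. The same list also includes Assumption \ref{ass:Countermonotonicity} together with Assumptions \ref{ass:TestWellDesigned} and \ref{ass:WellDesignedCorrection}, so Theorem \ref{thm:Superiority_Z} gives $ROC_Z(\alpha)\geq ROC_X(\alpha)$ at the same $\alpha$. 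Antisymmetry of $\leq$ on $\mathbb{R}$ then yields $ROC_Z(\alpha)=ROC_X(\alpha)$.

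The only point to check is that both theorems apply to the same objects. Both are stated for the same fixed $\alpha\in(0,1)$. Both use the same randomized rules $\psi_\alpha(X)$ and $\psi_\alpha(Z)$, and the same $\alpha_v^X$. Definition \ref{def:Correction} is in force throughout, so no further hypotheses are needed.

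I would add an interpretive remark rather than a further argument. Assumptions \ref{ass:Comonotonicity} and \ref{ass:Countermonotonicity} involve different conditional probabilities: one conditions on $X=F_{X,0}^{-1}(\alpha)$, the other on $Z=F_{Z,0}^{-1}(\alpha)$. Holding both at once is therefore not contradictory. It does, however, essentially force degenerate situations, for example $\alpha_v^X$ constant in $v$ (the analogue of Corollary \ref{coro:Equality}). This would motivate reading the corollary as a consistency check showing that the two sufficient conditions are not in tension.

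No step is a real obstacle here. The substantive work sits in the two theorems being invoked.
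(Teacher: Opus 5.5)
Your proposal is correct and matches the paper's proof, which simply combines Theorem \ref{thm:Superiority} (giving $ROC_X(\alpha)\geq ROC_Z(\alpha)$) with Theorem \ref{thm:Superiority_Z} (giving the reverse inequality) at the same fixed $\alpha$. Your side remark, that the two monotonicity assumptions can hold together only in degenerate cases, agrees with the paper's parametric example in Appendix \ref{sec:AppParametricModel_Z}: there, for $\mu_2<0$, both conditions hold exactly when $\mu_1=0$.
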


We elaborate on these results with a parametric example in Appendix \ref{sec:AppParametricModel_Z}, related to Remark \ref{remark:ReciprocateNotTrue}. 
The fact that the comparison provided by Theorems \ref{thm:Superiority} and \ref{thm:Superiority_Z} is point-wise can be seen as a strength or a weakness. If the goal is to compare the classifiers at a certain FPR which has been deemed optimal in the past, then our formalism is sufficient. However, Theorems \ref{thm:Superiority} and \ref{thm:Superiority_Z} do not provide a comparison of a global accuracy measure, i.e, one that does not depend on the FPR $\alpha$. Nevertheless, there are some settings where  Assumption \ref{ass:Comonotonicity} holds for either all $\alpha$ or none. See for example the  parametric model studied in Appendix~\ref{sec:AppParametricModel}.   A straightforward application of these theorems is to compare overall classification accuracy in terms of AUCs. 

\begin{restatable}{corollary}{coroAUC}\label{coro:AUC}
    Under Definition \ref{def:Correction}, assume Assumptions \ref{ass:TestWellDesigned} and \ref{ass:WellDesignedCorrection} hold. Furthermore, if
    \begin{itemize}
        \item Assumption \ref{ass:Comonotonicity}  holds for all $\alpha\in(0,1)$, then $AUC(X)\geq AUC(Z)$.
        \item Assumption \ref{ass:Countermonotonicity}  holds for all $\alpha\in(0,1)$, then $AUC(Z)\geq AUC(X)$.
    \end{itemize}
\end{restatable}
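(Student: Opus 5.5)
The plan is to integrate the pointwise comparisons of Theorem \ref{thm:Superiority} and Theorem \ref{thm:Superiority_Z} over $\alpha\in(0,1)$, using the definition $AUC(X)=\int_0^1 ROC_X(\alpha)\,d\alpha$ and similarly for $Z$.

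For the first bullet, Assumptions \ref{ass:TestWellDesigned} and \ref{ass:WellDesignedCorrection} hold by hypothesis, and they do not depend on $\alpha$. If Assumption \ref{ass:Comonotonicity} holds for every $\alpha\in(0,1)$, then Theorem \ref{thm:Superiority} applies at each fixed $\alpha$ and gives $ROC_X(\alpha)\geq ROC_Z(\alpha)$ for all $\alpha\in(0,1)$. For the second bullet, Assumption \ref{ass:Countermonotonicity} holding for all $\alpha$ lets Theorem \ref{thm:Superiority_Z} give $ROC_Z(\alpha)\geq ROC_X(\alpha)$ pointwise on $(0,1)$. Integrating these pointwise inequalities over $(0,1)$, and using monotonicity of the Lebesgue integral, yields $AUC(X)\geq AUC(Z)$ and $AUC(Z)\geq AUC(X)$ respectively. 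The endpoints $\alpha\in\{0,1\}$ form a null set and do not matter.

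The main thing to verify is measurability and integrability of $\alpha\mapsto ROC_X(\alpha)$ and $\alpha\mapsto ROC_Z(\alpha)$, so that both AUCs are well defined and the integral inequality is legitimate.

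\begin{itemize}
\item Boundedness is immediate: $ROC_X(\alpha)=\mathbb{E}[\psi_\alpha(X)\mid D=1]\in[0,1]$, since $\psi_\alpha$ takes values in $[0,1]$. Here $\rho_\alpha\in[0,1]$ by the definition of the generalized inverse.
\item For measurability, I would use that the randomized ROC is the linear interpolation of the ROC set points \eqref{eq:ROCset}. Equivalently, $ROC_X(\alpha)$ is the TPR of the Neyman--Pearson-type threshold rule, which is non-decreasing in $\alpha$. Non-decreasing functions on $(0,1)$ are Borel measurable, so the integrals exist.
\end{itemize}

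I expect no further obstacle: the corollary is a direct consequence of the two theorems combined with monotonicity of the integral.
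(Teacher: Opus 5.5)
Your proposal is correct and matches the paper's proof: apply Theorems \ref{thm:Superiority} and \ref{thm:Superiority_Z} at every $\alpha\in(0,1)$, then integrate the pointwise inequalities using monotonicity of the integral. Your extra check that $\alpha\mapsto ROC_S(\alpha)$ is bounded and non-decreasing, hence measurable, is a valid detail that the paper leaves implicit.
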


This Corollary is useful as it yields sufficient conditions for superiority in terms of the AUC, a global classification performance measure. Still, it relies on some strong assumptions: it does not allow raw and corrected ROC curves to cross. For example, it is possible that ROC curves cross and still $AUC(X)\geq AUC(Z)$. This is considered a disadvantage of the AUC as a comparative tool \cite{pepe2003statistical}. However, our previous results like Theorems \ref{thm:Superiority} and \ref{thm:Superiority_Z} are not affected by this caveat, as they perform local (at a specific FPR) comparisons.

\subsection{A causal interpretation of the superiority conditions}\label{sec:CausalInterpretation}
So far we have made no causal assumptions about our problem of interest. Definition \ref{def:Correction} and Assumptions \ref{ass:TestWellDesigned}-\ref{ass:Comonotonicity} (as well as all assumptions in Appendix \ref{sec:AppComparingGeneral}) refer only to observational quantities. In this section, we use formal causal reasoning to clarify the meaning of the assumptions and results, as also suggested by \cite{piccininni2023should}. 

We represent variables within a Finest Fully Randomized Causally Interpretable Structured Tree Graphs (FFRCISTG) causal model \cite{robins1986new,richardson2013single}, which strictly generalizes Pearl's popular non-parametric structural equation model with independent errors (NPSEM-IE) \cite{robins2010alternative,richardson2013single,pearl}. We use directed acyclic graphs (DAGs) to represent a statistical model induced by an FFRCISTG \cite{robins1986new,richardson2013single}. Recall that $V$ represents a set of demographic characteristics. As cognitive screening tests are designed to identify individuals who have cognitive impairment, a reasonable topological ordering for the variables of our system would be $(V,D,X)$. This ordering is also compatible with the fact that demographic characteristics like age and education are often considered ``confounders'' \cite{hessler2014age}, or ``having a confounding effect'' \cite{magni1996mini}, or being ``confounding factors'' \cite{sliwinski1997effect} of the test-impairment relationship. Suppose, for simplicity of exposition, that no additional common causes of these variables exist. This causal model is reflected in the causal DAG in Figure~\ref{fig:BasicDAG}.

\begin{figure}
    \centering
    {\begin{tikzpicture}
\begin{scope}[every node/.style={thick,draw=none}]
    \node (V) at (1,0) {$V$};
    \node (X) at (3,1) {$X$};
    \node (D) at (3,-1) {$D$};
\end{scope}

\begin{scope}[>={Stealth[black]},
              every node/.style={fill=white,circle},
              every edge/.style={draw=black,very thick}]
    \path [->] (V) edge (X);
    \path [->] (V) edge (D);
    \path [->] (D) edge (X);

\end{scope}
\end{tikzpicture}
}

\caption{\small DAG representing the causal structure of the variables in our system. }
\label{fig:BasicDAG}
\end{figure}

This causal DAG allows us to give a clear interpretation to the key Assumption~\ref{ass:Comonotonicity}. First, how $\alpha_v^X=\mathbb{E}[\psi_\alpha(X)|D=0,V=v]$ varies with $v$ can be seen as a statement about (direct) effects of $V$ on $X$. Indeed, for a fixed $\alpha$, $\psi_\alpha(X)$ is a deterministic function of $X$. Conditioning on $D=0$ blocks the causal path mediated by $D$, $V\to D\to X$, so that the association between $V$ and $X$ is only induced by the direct path $V\to X$. The notion that this conditional FPR is monotone in $v$ is closely related to the notion of weak monotonic effects \cite{vanderweele2009minimal,vanderweele2010signed}. Second, in Assumption \ref{ass:Comonotonicity} we consider how the conditional probability $\mathbb{P}(D=1\mid {X=F_{X, 0}^{-1}(\alpha)},V=v)$ varies with $v$. This means that we are making a statement about the association between $V$ and $D$ conditional on $X$. When conditioning on $X$, the association between $V$ and $D$ is explained by two paths: the causal path $V\to D$, and the non-causal collider path $V\to X \leftarrow D$.
Conceptually, Assumption \ref{ass:Comonotonicity} requires the combination of these two associations, causal and non-causal, to be in the same direction as the direct effect of $V$ on $X$ among the non-impaired.

Consider the example of age-correction for the MMSE, so that $X$ represents the MMSE score, $V$ represents age, and $D$ represents dementia status. There is clear empirical evidence indicating that non-impaired individuals have lower MMSE scores as they age, see for example \cite{magni1996mini, grigoletto1999norms}. Therefore we know that $\alpha^X_v$ is increasing in $v$. We also know that the prevalence of dementia increases with age \cite{nichols2022estimation}. Therefore, the conditional probability $\mathbb{P}(D=1\mid V=v)$ increases in $v$. 
Then, Assumption \ref{ass:Comonotonicity} would require that the collider bias induced on the association between $V$ and $D$ by opening the path $V\to X \leftarrow D$ when conditioning on $X=F_{X, 0}^{-1}(\alpha)$ is not strong enough to reverse the association between $V$ and $D$ owed to the causal effect $V \to D$. 
Conceptually, when the collider bias is not strong enough to ``flip'' the association between age and dementia, the raw score has better classification accuracy than the age-corrected score, according to Theorem \ref{thm:Superiority}. This result aligns with the idea that age correction can be detrimental if age has a strong effect on cognitive impairment \cite{piccininni2023should, wischmann2024consequences, sliwinski1997effect, o2010age}. This is reflected in the parametric example we present in Appendix \ref{sec:AppParametricModel}.

\section{Other reasons for using corrections}\label{sec:FDP}
\subsection{Equal classification accuracy across demographic groups}

Marginal classification accuracy is not the only criterion to compare raw and demographically-corrected scores. Already in 1998, \cite{kraemer1998adjusting} recognized that different researchers were using different criteria to evaluate screening strategies for dementia. Specifically, \cite{kraemer1998adjusting} illustrated how proponents of demographic corrections were not concerned with maximizing classification accuracy, but with ensuring equal true- and false-positive rates across demographic groups.

Demographic corrections of cognitive screening tests are often proposed as a remedy to an undesirable association between demographic variables and test scores. For example, corrections of the MMSE are motivated by the observation that age and education are associated with MMSE scores \cite{kraemer1998adjusting, magni1996mini}. Similarly, age-education corrections of the MoCA test have been motivated by the fact that age and education explain up to 49\% of the MoCA scores variance in normative samples \cite{Larouche2016-fd}. 

Demographic corrections have been proposed as a way to avoid overdiagnosis in vulnerable groups. Proponents of education corrections were concerned about a "detection bias caused by under-education": individuals without impairment who have low education tend to get low scores in cognitive screening tests and are therefore erroneously classified as impaired \cite{Belle1996-ws}. For example, \cite{grigoletto1999norms} argued that without an education correction "many persons may be unfairly labelled as demented (or at risk for dementia) when the only problem was a lack of schooling". Similarly, \cite{mungas1996age} considered the  age-and-education-corrected MMSE a preferable screening tool for low-education and minority individuals. More recently, some authors have argued that demographic corrections can be  justified as a way to fulfil certain fairness criteria in cognitive screening \cite{piccininni2023should,wischmann2024consequences}. This point has been echoed by supporters of demographically-corrected scores, who find desirable to have similar classification accuracy  in different demographic strata \cite{Aiello2025}, avoiding overdiagnosis in vulnerable groups \cite{ilardi2025no}.

As we have discussed in Section \ref{sec:ComparingScores}, a well-designed demographic correction makes the corrected score independent of the demographic covariates amongst individuals without impairment; that is, we expect $Z\independent V | D=0$ (Assumption \ref{ass:WellDesignedCorrection}). In other words, we expect the corrected score to have the same FPR (and the same specificity) in all demographic groups \cite{wischmann2024consequences}. This would solve,  for example, the above-mentioned issue of a higher FPR for older individuals with lower education, which can lead to overdiagnosis and social discrimination for vulnerable groups. More formally, demographic corrections can be seen as a way to achieve the "False-positive error rate balance" criterion \cite{verma2018fairness}, which is one of a large number of different algorithmic fairness criteria. Furthermore, in the specific setting where $X$ follows an additive model, the z-score correction ensures that both FPR and TPR are constant across demographic strata, see Remark \ref{remark:ReciprocateNotTrue}, achieving the so-called "equalized odds" criterion \cite{verma2018fairness}.

\subsection{Removal of demographic effects}\label{sec:FDP_Main} 
Demographic corrections have also been described as a way to ``remove the effect'' of the demographic variables from the test score \cite{sliwinski1997effect, Petersen2003-na, piccininni2023should, scheffels2023influence}. \cite{scheffels2023influence} explicitly argued that, in an ideal world, cognitive screening tests should be  ``independent of culture, language and education'' and ``free of socio-demographic effects''; because it is practically impossible to design cognitive screening tests that are free of these effects, demographic corrections are used to remove them. This suggests that demographic corrections are intended as a tool to remove effects of variables that are determined by demographics.

To formalize this plain-English argument, which is nevertheless based on causal reasoning, we consider an extension of the causal model introduced in Section~\ref{sec:CausalInterpretation}, where again we represent variables within  a FFRCISTG model and use DAGs to represent a statistical model induced by this FFRCISTG \cite{robins1986new,richardson2013single}. The assumptions we make are not necessarily plausible in practice; rather, we present results that clarify sufficient conditions under which certain procedures have the properties often attributed to them in discussions.

Consider the example of age-and-education correction of the MMSE. Like other screening tests, the MMSE is designed to rely on communication, language, and arithmetic to assess the cognitive ability of the screenees. The concern is that the test scores depend on age and education as these demographic variables affect sensory abilities and levels of background knowledge. For a given screenee, let $V$ denote the demographic characteristics of interest, like age and education, and consider some score $S$. For now, we are agnostic as to whether this is the raw MMSE or any corrected (MMSE) score. As before $D$ indicates the impairment status. We also consider a set of baseline covariates $L$, which could possibly affect other variables. 

To fix ideas, consider a screenee with low education, who is not familiar with basic arithmetic and has a narrow vocabulary. This screenee is expected to have a low score. Similarly, an older screenee, who has impaired vision and hearing, is expected to score low. We introduce a new variable $V_S$, potentially unmeasured, which can be interpreted as the level of sensory ability/background knowledge to which the screening test score is \textit{sensitive}\footnote{Our use of the term \textit{sensitivity} here and in this Section is not related to the notion of true positive rate in Section \ref{sec:ComparingScores}.}, due to a fundamental design limitation. We reason that $V_S$ is exactly the variable that practitioners wish would not affect the test score. 

We will assume that, in the observed data, $V$ and $V_S$ are deterministically related, i.e., $V=V_S$ almost surely. In other words, every level of age-education corresponds to one distinct level of sensory ability/background knowledge.  For example, a person who earned a PhD has a very specific PhD-level knowledge relevant to the test. Clearly, this assumption can be implausible. For example, there can be individuals with the same years of education who have different relevant background knowledge levels. The determinism is often implicit in demographic corrections as, for example, individuals with the same age and number of years of education will have their scores corrected in the same way, regardless of their actual amount of background knowledge or sensory abilities.

There might also exist other variables, beyond impairment status, affected by $V$, which a desirable test could be sensitive to (see Appendix \ref{sec:App_FDP_Identification} for  further elaboration). We denote these additional covariates as $W$. In the context of the age-and-education correction of the MMSE, $W$ could indicate the presence of another neurological disease not included in $D$. Therefore, we consider the topological ordering  $(L,V,W,D,S)$ in our system. In Appendix \ref{sec:App_FDP} we will assume $L$ and $W$ to have discrete supports $\mathcal{L}$ and $\mathcal{W}$ respectively, but our arguments  extend straightforwardly to the continuous case w.r.t.\ the Lebesgue measure.

Interventions on age and education $V$ are hard to conceive, but we can hypothetically intervene on $V_S$ by having the screenee take a different version of the test, changing what knowledge or sensory skills are advantageous and relevant. For example, we could replace original words with simpler synonyms to reduce the ``sensitivity'' to education, or administering the test with a hearing aid to alter the ``sensitivity'' to age. These interventions would change the sensory skills or background knowledge that are relevant to the test, $V_S$, rather than altering the true demographics $V$. The variable $V_S$ is what \cite{wen2024causal}  label an \textit{intervening} variable  (see \cite[Ex.\ 2]{wen2024causal} for an example of race discrimination closely related to our discussion). As we have said, here we interpret demographic correction as an attempt to correct for $V_S$. This specification makes no difference in the observed world, where $V\equiv V_S$. Yet, in the hypothetical experiment where $V_S$ is intervened, we envision that demographic corrections of the raw score would be based on $V_S$ rather than on $V$, since $V_S$ are the consequences of the demographics where the undesired effects are rooted.

We will use superscripts to denote interventions on $V_S$. To be precise, $S^{v_S}$ denotes the counterfactual score if, possibly contrary to fact, $V_S$ had taken the value $v_S$. Analogous notation will apply to $W,D$. Even though $V$ and $V_S$ are deterministically related in the observed data, we can identify the causal effect of $V_S$ using results on the separable effects  \cite{stensrud2021generalized,stensrud2022separable,stensrud2023conditional,robins2010alternative, robins2020interventionist}. The key assumption is that $V_S$ does not affect cognitive impairment $D$ neither directly nor indirectly. We formalize this assumption in Appendix \ref{sec:App_FDP_Identification}. To represent causal assumptions in the observed data involving $V_S$, we will use an {extended} causal DAG, which includes both $V$ and $V_S$, together with a bold arrow $V\boldsymbol{\rightarrow} V_S$ to represent the observed data determinism. An example of such a DAG can be seen in Figure \ref{fig:FDP_ExtendedDAG}.

\begin{figure}
    \centering    
\begin{subfigure}{0.5\textwidth}
\centering
\begin{tikzpicture}
\begin{scope}[every node/.style={thick,draw=none}]
    \node (L) at (-1,0) {$L$};
    \node (V) at (1,-1) {$V$};
    \node (S) at (5,1) {$S$};
    \node (W) at (3,0) {$W$};
    \node (D) at (5,-1) {$D$};
\end{scope}

\begin{scope}[>={Stealth[black]},
              every node/.style={fill=white,circle},
              every edge/.style={draw=black,very thick}]
	\path [->] (L) edge (V);
    \path [->] (L) edge[bend left=50] (S);
    \path [->] (L) edge[bend right=50] (D);
    \path [->] (V) edge[bend left=50] (S);
    \path [->] (V) edge (D);
    \path [->] (W) edge (D);
    \path [->] (V) edge (W);
    \path [->] (W) edge (S);
    \path [->] (L) edge (W);
    \path [->] (D) edge (S);

\end{scope}
\end{tikzpicture}
\caption{}
\end{subfigure}%
\hfill
\begin{subfigure}{0.5\textwidth}
\centering
\begin{tikzpicture}
\begin{scope}[every node/.style={thick,draw=none}]
    \node (L) at (-1,0) {$L$};
    \node (V) at (1,-1) {$V$};
    \node (VS) at (1,1) {$V_S$};
    \node (S) at (5,1) {$S$};
    \node (D) at (5,-1) {$D$};
    \node (W) at (3,0) {$W$};
\end{scope}

\begin{scope}[>={Stealth[black]},
              every node/.style={fill=white,circle},
              every edge/.style={draw=black,very thick}]
	\path [->] (L) edge (V);
    \path [->] (L) edge (VS);
    \path [->] (L) edge (W);
    \path [->] (L) edge[bend left=50] (S);
    \path [->] (L) edge[bend right=50] (D);
    \path [->] (V) edge[line width=0.85mm] (VS);
    \path [->] (VS) edge (S);
    \path [->] (V) edge (D);
    \path [->] (D) edge (S);
    \path [->] (W) edge (S);
    \path [->] (W) edge (D);
    \path [->] (V) edge (W);

\end{scope}
\end{tikzpicture}
\caption{}
\end{subfigure}

\caption{\small Example of a causal DAG for the scenario introduced in Section \ref{sec:FDP}  (A) and its associated extended causal DAG (B). }
\label{fig:FDP_ExtendedDAG}
\end{figure}

We can now formalize the notion of a desirable test expressed in \cite{scheffels2023influence}. We consider a score $S$ to be insensitive to demographics $V$, when is not affected by interventions on the sensitive  attributes $V_S$. Informally, if we were to perform an intervention by changing the screening test to alter the knowledge and sensory ability relevant to its undertaking, a desirable test would result in the same score distribution.  To be more precise, we will say that a score $S$ satisfies demographic insensitivity when all true and false covariate specific positive rates are not affected by an intervention on the sensitive attributes.

\begin{definition}[Strong demographic insensitivity]\label{def:FDP}
    We say that a score $S$ is insensitive to the demographics $V$ in the strong sense   if the distribution of $S^{v_S}\mid \{V=v,D=d\}$ does not depend on $v_S$ for all $(v,d)$ in~$\mathcal{V}\times\{0,1\}$.
\end{definition}
Yet, in our screening problem we are not necessarily interested in the full distribution of the score, but rather in the decision that is taken based on it. Thus, we define a weaker notion of demographic insensitivity.
\begin{definition}[Weak demographic insensitivity]\label{def:FDP_weak}
    Let $S$ be a score. We say that an $S$-based decision rule $\varphi\colon{\rm supp}(S)\to\{0,1\}$ is insensitive to the demographics $V$ if  $\mathbb{P}(\varphi(S^{v_S})=1\mid V=v,D=d)$ does not depend on $v_S$ for all $(v,d)$ in~$\mathcal{V}\times\{0,1\}$.
\end{definition}
We refer to this notion  as weak, in contrast to the strong version (Definition \ref{def:FDP}), as it is implied by the latter as long any randomness in $\varphi$ is not affected by the intervention $V_S=v_S$\footnote{This is the case for example of the practical implementation of randomized decision rules $\varphi(S)=I(\mathcal{U}\leq\psi_\alpha(S))$ where $\psi_\alpha(S)$ is defined as in Equation \eqref{eq:RandomizedDecision} and $\mathcal{U}\sim$Unif(0,1).  }. This is the case for deterministic decision rules, and specifically for thresholdings $\varphi(S)=I(S\leq t)$ with a pre-specified $t$. As these are the kind of decision rules we are interested in, we will implicitly consider those in Definition \ref{def:FDP_weak}, but the formalism we  deploy in Appendix \ref{sec:App_FDP} applies in general as long as the decision rule is pre-specified.

In practice, assessing whether a score $S$ is insensitive to demographics $V$ requires identification and estimation of the counterfactual probabilities $\{\mathbb{P}(S^{v_S}\leq s\mid V=v,D=d)\colon (v,v_S)\in\mathcal{V}^2,\ d\in\{0,1\}\}$ for a fixed $s$ or for (almost-surely) all $s$ in supp$(S)$. Furthermore, as long as $V$ is not almost surely constant, testing for demographic insensitivity will be a multiple testing problem, which has to be adequately dealt with to ensure proper error control. We discuss all these issues in Appendix \ref{sec:App_FDP}.

As a simple example, consider the  scenario  in Section \ref{sec:CausalInterpretation}, where $(L,W)=\emptyset$. Fix $t\in$ supp$(S)$ and take the decision rule $\varphi(\cdot)=I(\cdot\leq t)$. Under the conditions discussed in Appendix \ref{sec:App_FDP_Identification}, the screening strategy $\varphi(S)$ is insensitive to $V$ in the weak sense (Definition \ref{def:FDP_weak}) if and only if $\mathbb{P}(S\leq t\mid V=v_S,D=d)$ does not depend on $v_s$ for each $d$ in $\{0,1\}$. In other words, if and only if the TPR and the FPR calculated at the threshold $t$ do not change with $V$. As we have discussed, while the demographic correction is expected to ensure equal FPR across demographic strata, it does not guarantee equal TPR. Therefore, except in special cases (for example, see Appendix \ref{sec:AppParametricModel_FDP}), demographic corrections do not ensure demographic insensitivity.

\section{Data application: the OASIS-3 study}\label{sec:OASIS3}
\subsection{The OASIS-3 data}
We analysed data from OASIS-3 \cite{LaMontagne2019OASIS3}, which is part of the OASIS (Open Access Series of Imaging Studies) project. Its goal is to make neuroimaging data sets of the brain freely available to the scientific community. In particular, the OASIS-3 dataset ``is a retrospective compilation of data for 1378 participants that were collected across several ongoing projects through the WUSTL Knight ADRC over the course of 30 years''.

At study entry, age, sex, and socio-economic status were recorded. Moreover, participants were administered a MMSE. For each participant, neurological status was assessed at baseline using the \textit{Clinical Dementia Rating} (CDR) scale. We only included in our analyses participants who had no missing values in any of these variables, resulting in a data set including 1317 individuals. We considered participants to have cognitive impairment ($D=1$) when their CDR was above zero.
Approximately 28\% of the individuals were therefore classified as having cognitive impairment. The median MMSE score was  29 in the $D=0$ group, and 27 in participants with cognitive impairment $D=1$. We denote the MMSE raw score with the letter $X$ for consistency. We also define the demographic covariate $V$ to be a categorical variable representing age groups: below 65, 65-70, 70-75 and above 75 years of age at baseline. This partitioning corresponds roughly to the age quartiles of the OASIS-3 cohort.

We have two goals. First, we want to study  the  classification superiority of the MMSE raw score over the $V$-corrected scores. Specifically,  we will focus on  Assumption \ref{ass:Comonotonicity}, crucial to determine the classification superiority. Second, we will study whether the raw and age-corrected MMSE scores are insensitive to the age-quartiles group indicator $V$. 

\subsection{Classification accuracy  comparison} \label{sec:OASIS3_Superiority}
To study whether raw MMSE scores have a better classification accuracy than age-corrected scores, we apply the formalism developed in Section \ref{sec:ComparingScores}. Under Assumptions \ref{ass:TestWellDesigned} and \ref{ass:WellDesignedCorrection}, Theorem \ref{thm:Superiority} tells us that a sufficient condition for superiority of the raw MMSE score at FPR $\alpha$ is that $$\tau(\alpha)\defeq \mathbb{E}\left[LR_{(X,V)}(F_{X, 0}^{-1}(\alpha),V)(\alpha_V^X-\alpha)\mid D=0\right]>0.$$ Rather than fixing a false-positive rate $\alpha$, as $X$ has a discrete support, we fix $t\in\text{supp}(X)$ and consider its associated marginal FPR $\alpha(t)=F_{X,0}(t)$. We then reframe the former quantity in terms of the  threshold $t$, such that $$\tau(t)= \mathbb{E}\left[LR_{(X,V)}(t,V)(\alpha_V^X(t)-\alpha(t))\mid D=0\right].$$
Fixing a threshold $t$ rather than a FPR $\alpha$ not only enhances interpretability, but also serves a theoretical purpose. As $t\in\text{supp}(X)$ and $\alpha(t)=F_{X,0}(t)$, it holds that   $F_{X,0}(F_{X, 0}^{-1}(\alpha(t)))=\alpha(t)$, that implies $\rho_{\alpha(t)}(X)=1$,  $\psi_{\alpha(t)}(X)=I(X\leq t)$, and most importantly $\alpha^X_v(t)=\mathbb{P}(X\leq t|D=0,V=v)=F_{X,0}(t|v)$. Furthermore, this allows us to relax the positivity assumptions in Section \ref{sec:ComparingScores}, as we only need $t$ to belong to the support of $X|V=v,D=d$ for all $v,d$, rather than the full support being invariant.

As $X$ and $V$ have a discrete support and $D$ is binary, all the quantities involved in $\tau(t)$ can be consistently estimated by sample averages, as for example $\widehat{\alpha}(t)=\widehat{\mathbb{E}}_n[I(X\leq t)|D=0]$. Thus, for a range of thresholds $t\in\{25,26,27,28,29\}$ we estimated $\tau(t)$ and computed a 95\% right-sided asymptotic confidence interval. These results, together with the estimates of the associated marginal FPRs at each thresholds can be found in Table \ref{tab:OASIS_Superiority_NP}. Further details on the estimation method and the central limit theorem used to compute the confidence intervals can be found in Appendix \ref{sec:AppOASIS_SuperiorityCLT}.

\begin{table}
\centering
\begin{tabular}{cc|cc}
$t$  & $\widehat{\alpha}(t)$ & $\widehat{\tau}(t)$ & $LCL$    \\ \hline
25 & 0.015      & 0.062    & -0.018 \\
26 & 0.052      & 0.030    & -0.001 \\
27 & 0.100      & -0.012   & -0.091 \\
28 & 0.209      & 0.043    & 0.009 \\
29 & 0.512      & 0.024    & 0.011
\end{tabular}
\caption{ Point estimates and the lower confidence limit $LCL$ of 95\% right-sided asymptotic confidence intervals for the superiority lower bound $\tau(t)$ for different thresholds $t$ of the raw MMSE score, together with estimates of the associated FPRs.  }
\label{tab:OASIS_Superiority_NP}
\end{table}

Recall that ${\tau}(t)>0$ implies $ROC_X(\alpha(t))>ROC_Z(\alpha(t))$. We see from the estimates in Table \ref{tab:OASIS_Superiority_NP} that for most thresholds $t$ we get point estimates $\widehat{\tau}(t)>0$, pointing towards superiority of the raw score in comparison to any age-corrected score derived from it under Definition \ref{def:Correction} satisfying Assumption \ref{ass:WellDesignedCorrection}. We reject the null hypothesis $\mathcal{H}_0\colon \tau(t)<0$ at a 5\% level for the threshold $t\in\{28,29\}$, as 95\% right-sided confidence intervals do not contain 0. The confidence intervals intersect the null region for $t \in \{25,26,27\}$, so no conclusion can be made with regard of these thresholds. Overall, this means that the OASIS-3 data points towards superiority of the raw score  with respect to any correction using age-quartiles under Definition \ref{def:Correction} and Assumption \ref{ass:WellDesignedCorrection}. However, using non-parametric methods, the precision is too low to obtain definitive conclusion for all the thresholds we considered.

In an alternative analysis, we assumed the following hierarchical model
\begin{equation}\label{eq:OASIS_SEM}
    \begin{cases}
        D\mid V^*\sim Ber(\expit(\beta_0+\beta_1V^*))\\
        X\mid D,V^*\sim\mathcal{N}(\mu_0+\mu_1V^*+\mu_2D,\sigma^2),
    \end{cases}
\end{equation}
where $V^*$ is the non-categorized age at time of assessment, $Ber(p)$ is a Bernoulli random variable with success probability $p$ and $\mathcal{N}(\mu,\sigma^2)$ is a normal distribution with mean $\mu$ and variance $\sigma^2$. This model is  the one  introduced  and studied in Appendix \ref{sec:AppParametricModel}. While this model is  parametric, the parametric assumptions are not too far-fetched. The demographic correction is often implemented assuming a linear model for the raw score, at least among individuals without cognitive impairment \cite{Kornak2019, Larouche2016-fd}. The reasoning  in Appendix \ref{sec:AppParametricModel_X} allows to translate the assumptions of Theorem \ref{thm:Superiority} into conditions on the coefficients of the model. A sufficient condition for the superiority of the raw score is that the following null hypothesis is false
\begin{equation}\label{eq:Oasis_Superiority_Null}
    \mathcal{H}_0: \mu_1\left(\beta_1-\frac{\mu_1\mu_2}{\sigma^2}\right)>0 \text{ or } \mu_2>0.
\end{equation}
We tested this non-superiority condition  using 10 000 bootstrap repetitions \cite[Chs. 4 \& 5]{Davison_Hinkley_1997}. This parametric analysis led to rejection of  the null hypothesis in Equation \eqref{eq:Oasis_Superiority_Null}, with p-values considerably below the 5\% level. Consequently, we have statistical evidence in favour of the superiority of the raw score regarding classification accuracy at any FPR, under the  posited parametric model.

An alternative way to study whether Assumption \ref{ass:Comonotonicity} is satisfied is to  specify the following  conditional (non-causal) model 
\begin{equation*}
    \begin{cases}
        D\mid X,V^*\sim Ber(\expit(\delta_0+\delta_1V^*+\delta_2 X))\\
        X\mid D=0,V^*\sim\mathcal{N}(\eta_0+\eta_1V^*,\sigma^2),
    \end{cases}
\end{equation*}
which is compatible with the hierarchical model in Equation \eqref{eq:OASIS_SEM}, but does not require specification of $X\mid D=1,V^*$. In this case, a sufficient condition for Assumption \ref{ass:Comonotonicity} to hold is that the null hypothesis $\mathcal{H}_0\colon \eta_1\delta_1>0$ is false. A 95\% bootstrapped percentile confidence interval for $\eta_1\delta_1$, based on 10 000 bootstrap repetitions, is $(-0.0025, -0.0012)$. Thus, again, we have evidence in favour of Assumption \ref{ass:Comonotonicity} being true, under the specified conditional models. Overall, these analyses strengthen our confidence in Assumption \ref{ass:Comonotonicity} being true and in the superiority of the raw score to the age-corrected score in terms of classification accuracy.

\subsection{Demographic insensitivity}\label{sec:OASIS3_FDP}
We also investigated whether the raw and age-corrected MMSE scores are  insensitive to the age-quartiles group indicator $V$. In this analysis, we will consider the individual's sex as baseline covariate $L$, and as $W$ the indicator of whether an individual belonged to the highest socio-economic status group at the time of entry in the OASIS-3 cohort. Socio-economic status might be influenced by the individual's age at the time of study entry, as older individuals may be in different financial circumstances than younger ones. Our choice of including the socio-economic status indicator as $W$ and not as another component of $L$ or $V$ is due to the fact that in this application we are only concerned with the demographic effect of age, and do not consider "problematic" the effect of socio-economic status on the test score.

First, we will study whether the raw MMSE score is insensitive to the age-group demographics $V$ in the weak sense at a threshold of $t=26$. Therefore, our goal is to jointly estimate the counterfactual probabilities $\{\mathbb{P}(X^{w}\leq 26\mid V=v,D=d)\colon (v,w)\in\mathcal{V}^2,d=0,1\}$. This is done, under the identification assumptions in Appendix \ref{sec:App_FDP_Identification}, using the simple estimator described in Appendix \ref{sec:FDP_Estimation}. Testing whether the raw score is demographic insensitive follows the procedure discussed in Appendix \ref{sec:FDP_Testing} with p-values derived using a Central Limit Theorem (Theorem \ref{thm:CLT_FPD} in Appendix \ref{sec:App_OASIS_FDP_CLT}). As a result of this analysis we reject the hypothesis that the raw MMSE score is age insensitive. This means that we have significant evidence that an individual could have had a different probability of being classified as having cognitive impairment, given their age group and  cognitive impairment status, had their sensitive attributes  been different at the time of their MMSE assessment. This is in line with our knowledge that age affects MMSE scores regardless of impairment status \cite{magni1996mini, grigoletto1999norms}. For further details of the analysis see Appendix \ref{sec:App_Oasis_FDP_raw}.

We now turn to age-corrected scores.  We consider the z-score correction \eqref{eq:AgeEduCorrection}, and target whether the age-corrected MMSE  is insensitive to the age-group $V$ in the weak sense at a threshold of $t=-1.5$, commonly used in practice \cite[Tab. 1]{Bradfield2020-ae}. In this analysis, we did not find evidence against the null hypothesis of age insensitivity. Informally, this means that we cannot rule out that the age-correction achieved insensitivity to age. This result is consistent with our discussion in Section \ref{sec:FDP}. We give further details of this analysis in Appendix \ref{sec:App_Oasis_FDP_corrected}.

\section{Conclusion}\label{sec:Conclusion}
Our results contribute to the ongoing methodological debate about the use of demographic corrections. We present a general sufficient condition for the classification accuracy superiority of raw scores over demographically-corrected scores. This result is useful as it allows reasoning about the consequences of demographic correction on classification accuracy using subject-matter knowledge. For example, our result explain why age-correction can be detrimental in certain cognitive screening settings, justifying formally some criticisms that have been moved against demographic correction in cognitive screening \cite{berkman1986association, sliwinski1997effect, o2010age, piccininni2023should, wischmann2024consequences}. 

We have also discussed claims about demographically-corrected scores being more "fair". Demographic corrections are expected to result in equal false-positive rates across demographic groups, a property that some practitioners find desirable. Yet, we have shown that demographic corrections are, in general, not sufficient to ensure  demographic insensitivity.

While in our work we have focused on cognitive screening, we highlight that our methods and results apply more broadly. Demographic corrections are ubiquitous in health sciences. For example,  race corrections are debated in neuropsychological assessments and neurological diagnoses, illustrated by the legal controversy surrounding the National Football League case \cite{possin2021nfl}. Demographic corrections are also often used when  assessing malnutrition in children \cite{who_malnutrition_children}, identifying \textit{small for gestational age} newborns \cite{hokken2023international}, assessing aortic root diameter \cite{dascenzi2022aortic}, measuring respiratory functions \cite{stanojevic2022ers}, or diagnosing osteoporosis \cite{sheu2016bone}.

\section*{Acknowledgements}
Data were provided  by OASIS-3: Longitudinal Multimodal Neuroimaging: Principal Investigators: T. Benzinger, D. Marcus, J. Morris; NIH P30 AG066444, P50 AG00561, P30 NS09857781, P01 AG026276, P01 AG003991, R01 AG043434, UL1 TR000448, R01 EB009352. AV-45 doses were provided by Avid Radiopharmaceuticals, a wholly owned subsidiary of Eli Lilly.

The authors were supported by the Swiss National Science Foundation (SNSF Starting Grants, Grant number: 211550). This project was funded by the Dementia Research Switzerland -- Synapsis Foundation (2026-BF01).

\clearpage
\bibliography{references_accuracy,references_FDP}
\bibliographystyle{unsrt}

\clearpage
\appendix

\section{An information-theoretic justification of the plausibility of demographic corrections worsening classification accuracy}\label{sec:AppInformationTheory}

It might seem ``intuitive'' that corrected scores cannot perform worse than raw scores, as they funnel more information into the scoring rule \cite{piccininni2023should}. This intuition is erroneous, as shown in Section \ref{sec:ComparingScores}, and it can be clarified by an information-theoretic argument. Consider the hypothetical case where we have access to the likelihood ratios of both $X$ and $Z$, but also to the joint likelihood ratio of $(X,V)$. The likelihood-ratio test provides the classification rule with the highest TPR for any FPR (as it is the uniformly most powerful test thanks to the Neyman-Pearson lemma, see \cite[Thm. 10.30]{wasserman2013all} or \cite[Ch. 2.3.2]{krzanowski2009roc} for the connection to ROC analysis). As both $X$ and $Z$ are garblings (post-processings) of $(X,V)$, the Blackwell-Sherman-Stein theorem \cite{blackwell1954theory} implies that $(X,V)$ has higher Blackwell order than both $X$ and $Z$, meaning the joint likelihood ratio test for $(X,V)$ cannot have lower TPR than that of $X$ or $Z$ at the same FPR. This is expected due to the optimal use of a larger amount of information. Nevertheless, even under Definition \ref{def:Correction}, $X$ and $Z$ are not comparable in Blackwell's order without additional assumptions, meaning it is not trivial to ensure which one performs better. For an introduction to the Blackwell-Sherman-Stein theorem see \cite{le1996comparison}, or \cite[Ch. 3]{torgersen1991comparison} for further detail.

\section{Proofs of the results in the main text}\label{sec:AppProofs}\label{sec:AppProofs_ComparingScores}

\lemmaEqualCovariateSpecificRocs*
\begin{proof}
     Under Definition \ref{def:Correction},  $X\mid V=v,D$ and $Z\mid V=v,D$ are almost surely related by a strictly increasing differentiable transformation $g_v$. Therefore, covariate-specific ROC curves of $X$ and $Z$ are identical across all $V$ levels.
\end{proof}
\lemmaIndependenceImplication*
\begin{proof}
    This trivially holds for the standardization in \eqref{eq:AgeEduCorrection}, but it also holds in general, as in this case $F_{X,0}(\cdot\mid v)\equiv F_{X,0}(\cdot)$ and therefore $g_v(x)=T(F_{X,0}(\cdot\mid v),v,x)=\Tilde{T}(F_{X,0}(\cdot),x)=\Tilde{g}(x)$ for all $x$ and $v$, meaning $Z=\Tilde{g}(X)$ almost surely, and therefore $Z\independent V|D$.
\end{proof}
\coroEquality*
\begin{proof}
    Under the conditions of Lemma \ref{lemma:IndependenceImplication}, if $X\independent V\mid D$, then  $Z=\Tilde{g}(X)$ almost surely with $\Tilde{g}$ strictly increasing. The result then follows from \cite[Ch. 2.2.3, Property 2]{krzanowski2009roc}.
\end{proof}
Before giving the proof of Theorem \ref{thm:Superiority}, we state and prove three auxiliary Lemmas, which will also be useful in the general study provided in Appendix \ref{sec:AppComparingGeneral}.
\begin{lemma}\label{lemma:CovariateRocConnection}
    Let $\alpha\in (0,1)$. For  $S\in\{X,Z\}$ define the marginal randomized rule $\psi_\alpha(S)$ as per Equation \eqref{eq:RandomizedDecision}. Then for all $v\in\mathcal{V}$  $$\mathbb{E}[\psi_\alpha(S)|V=v,D=1]=ROC_v(\alpha^S_v),$$ where $\alpha^S_v=\mathbb{E}[\psi_\alpha(S)|D=0,V=v]$.
\end{lemma}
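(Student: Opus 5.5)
The idea is to show that, inside the stratum $V=v$, the marginal rule $\psi_\alpha(S)$ is itself an optimal covariate-specific threshold rule. It has some FPR $\alpha^S_v$ in that stratum, so its TPR there equals the covariate-specific ROC evaluated at $\alpha^S_v$.

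\textbf{Case $S=X$.} Write $c=F_{S,0}^{-1}(\alpha)$ and $\rho=\rho_\alpha(S)$, the marginal cut-point and randomization weight. Both are constants: they depend on the marginal law of $S\mid D=0$, not on $v$. Then
\[
\psi_\alpha(S)=I(S<c)+\rho\, I(S=c),
\]
which is a (possibly randomized) threshold rule ``classify positive below $c$, and at $c$ with probability $\rho$''. Conditioning on $V=v,D=0$ defines $\alpha^S_v$ as this rule's stratum-specific FPR. I claim that, for $D=0$ in stratum $v$, the rule coincides with the covariate-specific rule $\psi_{\alpha^S_v}(X\mid v)$, up to a set where both have the same conditional expectation. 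Let $c_v=F_{X,0}^{-1}(\alpha^S_v\mid v)$.
\begin{itemize}
\item If $\mathbb{P}(X=c\mid D=0,V=v)>0$ and $0<\rho<1$, the stratum FPR $\alpha^S_v$ lies strictly between $F_{X,0}(c^-\mid v)$ and $F_{X,0}(c\mid v)$. Hence $c_v=c$, and the stratum weight equals $(\alpha^S_v-F_{X,0}(c^-\mid v))/\mathbb{P}(X=c\mid 0,v)=\rho$. The two rules are identical.
\item In the boundary cases ($\rho=1$, $\rho=0$, or no mass at $c$ in stratum $v$), $\alpha^S_v$ equals $F_{X,0}(c\mid v)$ or $F_{X,0}(c^-\mid v)$. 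The covariate-specific rule may then choose a different cut-point $c_v\le c$. However, $X\mid V=v,D=0$ puts no mass strictly between $c_v$ and $c$, and by the common-support assumption neither does $X\mid V=v,D=1$. So both rules have the same expectation under $D=1$ as well.
\end{itemize}
Taking $\mathbb{E}[\cdot\mid V=v,D=1]$ of both sides gives $\mathbb{E}[\psi_\alpha(X)\mid V=v,D=1]=ROC_{X\mid V}(\alpha^X_v\mid v)$.

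\textbf{Case $S=Z$.} The same argument, with $c=F_{Z,0}^{-1}(\alpha)$, gives $ROC_{Z\mid V}(\alpha^Z_v\mid v)$. Lemma~\ref{lemma:EqualCovariateSpecificRocs} identifies this with $ROC_v(\alpha^Z_v)$. This step needs only that, within stratum $v$, $Z$ is a strictly increasing transform $g_v$ of $X$, so a threshold on $Z$ is a threshold on $X$. Lemma~\ref{lemma:EqualCovariateSpecificRocs} already encodes this.

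\textbf{Main obstacle.} The main difficulty is the bookkeeping in the discrete, boundary case: showing the two randomized rules agree in $D=1$-expectation even when the cut-points differ. The argument relies on the positivity/common-support assumption, that $X\mid V=v,D=d$ has support not depending on $d$, to exclude $D=1$ mass sitting on atoms where $D=0$ has none. An alternative I would keep in reserve uses the linear-interpolation description of the ROC. The map $\beta\mapsto\mathbb{E}[\text{threshold rule with FPR }\beta\mid V=v,D=1]$ is exactly $ROC_v$, and any rule of the form $I(S<c)+\rho I(S=c)$ traces a point on this piecewise-linear curve. This holds because on each atom both the FPR and the TPR are linear in $\rho$ with slope ratio equal to the stratum likelihood ratio at $c$.
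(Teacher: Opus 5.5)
Your proof is correct and follows the paper's route: the paper likewise shows that, within stratum $v$, the marginal cut-point $F_{S,0}^{-1}(\alpha)$ and weight $\rho_\alpha$ coincide with the covariate-specific ones at FPR $\alpha^S_v$, via $\mathbb{P}(S<t_\alpha\mid D=0,V=v)<\alpha^S_v\leq F_{S,0}(t_\alpha\mid v)$. Your boundary-case argument, which reconciles a cut-point $c_v\le c$ through the within-stratum common support across $D$, matters when stratum $v$ has no $D=0$ mass at the marginal cut-point (possible for a discrete $Z$ whose atoms shift with $v$), where the paper's strict left inequality fails; in that case you need $D=1$ to have no mass on $(c_v,c]$, including $c$ itself, not only strictly between $c_v$ and $c$.
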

\begin{proof}
    Recall that $\psi_\alpha(S)=I(S<t_\alpha)+\rho_\alpha I(S=t_\alpha)$, where $t_\alpha=F_{S,0}^{-1}(\alpha)$ and 
    \begin{equation*}
        \rho_\alpha=\frac{\alpha-\mathbb{P}(S<t_\alpha|D=0)}{\mathbb{P}(S=t_\alpha|D=0)}.
    \end{equation*}
    Furthermore we have $$\alpha_v^S=\mathbb{P}(S<t_\alpha|D=0,V=v)+\rho_\alpha \mathbb{P}(S=t_\alpha|D=0,V=v).$$ On the one hand  consider the conditional power of the marginal test $$\mathbb{E}[\psi_\alpha(S)|V=v,D=1]=\mathbb{P}(S<t_\alpha|D=1,V=v)+\rho_\alpha \mathbb{P}(S=t_\alpha|D=1,V=v).$$ On the other hand, for $\alpha'\in(0,1)$, the $V$-specific randomized decision rule with covariate specific FPR $\alpha'$ is $\psi_{\alpha'}(S|v)=I(S<t_{\alpha'}(v))+\rho_{\alpha'}(v) I(S=t_{\alpha'}(v))$ where $t_{\alpha'}(v)=F_{S,0}^{-1}(\alpha'|v)$ and $$\rho_{\alpha'}(v)=\frac{\alpha'-\mathbb{P}(S<t_{\alpha'}(v)|V=v,D=0)}{\mathbb{P}(S=t_{\alpha'}(v)|V=v,D=0)},$$ which leads to $$ROC_v(\alpha')=\mathbb{P}(S<t_{\alpha'}(v)|D=1,V=v)+\rho_{\alpha'}(v) \mathbb{P}(S=t_{\alpha'}(v)|D=1,V=v).$$ Therefore, it suffices to show that 1) $t_{\alpha_v^S}(v)=t_\alpha$, and 2) $\rho_{\alpha^S_v}(v)=\rho_\alpha$.  Consider first the case of a discrete score. For 1) we first note that as $\rho_\alpha\in(0,1]$ we have 
    \begin{equation}\label{eq:ProbGap}
        \mathbb{P}(S< t_\alpha|D=0,V=v)<\alpha^S_v\leq F_{S,0}(t_\alpha|v)
    \end{equation}
    and therefore $t_\alpha=F_{S,0}^{-1}(\alpha^S_v|v)=:t_{\alpha^S_v}(v)$. The proof of 2) follows from 1) as 
    \begin{align*}
        \rho_{\alpha^S_v}(v)&=\frac{\alpha^S_v-\mathbb{P}(S<t_{\alpha^S_v}(v)|V=v,D=0)}{\mathbb{P}(S=t_{\alpha^S_v}(v)|V=v,D=0)}\\&=\frac{\mathbb{P}(S<t_\alpha|D=0,V=v)+\rho_\alpha \mathbb{P}(S=t_\alpha|D=0,V=v)}{\mathbb{P}(S=t_{\alpha^S_v}(v)|V=v,D=0)}\\&\quad -\frac{\mathbb{P}(S<t_{\alpha^S_v}(v)|V=v,D=0)}{\mathbb{P}(S=t_{\alpha^S_v}(v)|V=v,D=0)}\\&=\frac{\mathbb{P}(S<t_\alpha|D=0,V=v)+\rho_\alpha \mathbb{P}(S=t_\alpha|D=0,V=v)}{\mathbb{P}(S=t_\alpha|V=v,D=0)}\\&\quad -\frac{\mathbb{P}(S<t_\alpha|V=v,D=0)}{\mathbb{P}(S=t_\alpha|V=v,D=0)}\\&=\rho_\alpha.
    \end{align*}
    In the absolutely continuous case,  Equation \eqref{eq:ProbGap} does not hold with inequalities but rather with two equalities. Yet, as in this case we consider $\rho_\alpha=1$ we have $\alpha^S_v= F_{S,0}(t_\alpha|v)$ what leads to 1), and 2) holds trivially as by convention $\rho_\alpha=1=\rho_{\alpha^S_v}(v)$.
\end{proof}
\begin{lemma}\label{lemma:SuperGradient}
    Consider a score $S$ such that its associated ROC curve is concave. Fix a FPR $\alpha\in(0,1)$. Then $LR_S(F_{S,0}^{-1}(\alpha))$ is a valid super-gradient at $\alpha$, meaning $$LR_S(F_{S,0}^{-1}(\alpha))\in\partial ROC_S(\alpha).$$
\end{lemma}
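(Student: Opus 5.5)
The plan is to prove the super-gradient inequality directly,
$$ROC_S(\beta)-ROC_S(\alpha)\leq LR_S(t_\alpha)\,(\beta-\alpha)\quad\text{for all }\beta\in(0,1),\qquad t_\alpha:=F_{S,0}^{-1}(\alpha),$$
by a change of measure from $D=1$ to $D=0$ and a monotonicity argument on the likelihood ratio.

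\textbf{Step 1 (reduce concavity to a monotone likelihood ratio).} By \cite[Thm.~2 \& 3]{gneiting2022receiver}, concavity of $ROC_S$ is equivalent to $LR_S=f_{S,1}/f_{S,0}$ being non-increasing. Here $LR_S$ is understood on the support of $S\mid D=0$, with a suitable version of the densities chosen. In the discrete case this is pointwise on the support. In the absolutely continuous case I would fix a version of the densities so that $LR_S$ is non-increasing everywhere and $t_\alpha$ is a legitimate evaluation point.

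\textbf{Step 2 (change of measure).} Using the decision rule \eqref{eq:RandomizedDecision}, write
$$ROC_S(\beta)-ROC_S(\alpha)=\mathbb{E}[\psi_\beta(S)-\psi_\alpha(S)\mid D=1]=\mathbb{E}\big[(\psi_\beta(S)-\psi_\alpha(S))\,LR_S(S)\mid D=0\big].$$
This needs $S\mid D=1$ to be absolutely continuous w.r.t.\ $S\mid D=0$, which follows from the common-support assumption of Section~\ref{sec:Setup}. Also, by construction $\mathbb{E}[\psi_\beta(S)-\psi_\alpha(S)\mid D=0]=\beta-\alpha$.

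\textbf{Step 3 (sign and support of $\psi_\beta-\psi_\alpha$).} Since $\beta\mapsto\psi_\beta(s)$ is non-decreasing for every $s$, the difference has a fixed sign. If $\beta>\alpha$, then $\psi_\beta-\psi_\alpha\geq0$ and it vanishes on $\{S<t_\alpha\}$, because $\psi_\alpha=1$ there. On its support we have $S\geq t_\alpha$, so $LR_S(S)\leq LR_S(t_\alpha)$. This gives
$$ROC_S(\beta)-ROC_S(\alpha)\leq LR_S(t_\alpha)\,\mathbb{E}[\psi_\beta-\psi_\alpha\mid D=0]=LR_S(t_\alpha)(\beta-\alpha).$$
If $\beta<\alpha$, then $\psi_\beta-\psi_\alpha\leq0$ and it vanishes on $\{S>t_\alpha\}$, because $\psi_\alpha=0$ there. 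On its support $LR_S(S)\geq LR_S(t_\alpha)$, and multiplying by a non-positive quantity yields the same inequality. Both cases together give exactly $LR_S(t_\alpha)\in\partial ROC_S(\alpha)$.

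\textbf{Main obstacle.} The delicate step is the measure-theoretic bookkeeping in Steps 1--3. The densities, and hence $LR_S$, are only defined almost everywhere, while we need $LR_S(t_\alpha)$ at a single point. In the discrete case this is harmless, since $t_\alpha$ is an atom of $S\mid D=0$. In the continuous case I would first choose the monotone version of $LR_S$ from Step 1. I would then check the support claims for the randomized rules, including the case where $t_\alpha$ and $t_\beta$ coincide and only the randomization weights $\rho$ differ; there the difference is concentrated on $\{S=t_\alpha\}$ and the inequality holds with equality.
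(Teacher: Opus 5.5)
Your proof is correct. It is a unified version of the paper's argument rather than a copy of it.

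\textbf{How the paper proceeds.} The paper splits into three cases.
(i) For absolutely continuous $S$ with $ROC_S$ differentiable at $\alpha$, it cites the identity $ROC_S'(\alpha)=LR_S(F_{S,0}^{-1}(\alpha))$.
(ii) Without differentiability, it writes $ROC_S(\beta)-ROC_S(\alpha)=\int_{t_\alpha}^{t_\beta}LR_S(s)f_{S|D}(s|0)\,ds$ and bounds $LR_S$ by its value at $t_\alpha$. This is exactly your Step 3 specialized to $\psi_\alpha=I(S\le t_\alpha)$.
(iii) For discrete $S$, it argues geometrically. The left slope of the linearly interpolated ROC at $\alpha$ equals $LR_S(t_\alpha)$, and by concavity this is the largest element of $\partial ROC_S(\alpha)$.

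\textbf{What your route buys.} You write the increment as $\mathbb{E}[(\psi_\beta(S)-\psi_\alpha(S))LR_S(S)\mid D=0]$. You then use two facts: $\beta\mapsto\psi_\beta(s)$ is non-decreasing, and $\mathbb{E}[\psi_\beta(S)-\psi_\alpha(S)\mid D=0]=\beta-\alpha$. This handles all three cases with one computation. You need neither the differentiability result nor the piecewise-linear geometry. Your bound also holds for every non-increasing version of $LR_S$, which disposes of the almost-everywhere issue cleanly.

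\textbf{Extremality.} The paper's case (iii) makes explicit that $LR_S(t_\alpha)$ is the maximal supergradient. The remark following the proof of Theorem~\ref{thm:Superiority} uses this when it identifies the whole superdifferential interval. In your framework, extremality follows from your equality case $t_\beta=t_\alpha$. For discrete $S$, every $\beta\in(\mathbb{P}(S<t_\alpha\mid D=0),\alpha)$ has $t_\beta=t_\alpha$, so the left derivative equals $LR_S(t_\alpha)$.

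\textbf{One small overstatement.} A common support does not by itself imply that $S\mid D=1$ is absolutely continuous with respect to $S\mid D=0$. Supports are closed sets, so a set of positive $D=1$ probability can still have zero $D=0$ probability. Absolute continuity does hold in the finite-support or positive-density settings the paper has in mind. The paper relies on the same condition when writing $f_{S|D}(s|1)=LR_S(s)f_{S|D}(s|0)$.
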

\begin{proof}
    If $S|D$ has an absolutely continuous distribution and $ROC_S$ is differentiable at $\alpha$, then we know that the super-differential is precisely the derivative $\partial ROC_S(\alpha)=\{ROC_S'(\alpha)\}$ \cite[Part I, Ch. VI, Cor. 2.1.4]{hiriart2013convex},  but the derivative of a differentiable ROC curve  at FPR $\alpha$ is the likelihood ratio evaluated at the threshold leading to said FPR \cite[Ch. 2.2.3]{krzanowski2009roc}: $ROC_S'(\alpha)=LR_S(F_{S,0}^{-1}(\alpha))$.

     If $S|D$ has an absolutely continuous distribution but $ROC_S$ is not differentiable at $\alpha$, the result still holds because, given $\beta>\alpha$ we have
     \begin{align*}
         ROC_S(\beta)-ROC_S(\alpha)&=F_{S,1}(F_{S,0}^{-1}(\beta))-F_{S,1}(F_{S,0}^{-1}(\alpha))\\ &=\int_{F_{S,0}^{-1}(\alpha)}^{F_{S,0}^{-1}(\beta)}f_{S|D}(s|1)ds\\&=\int_{F_{S,0}^{-1}(\alpha)}^{F_{S,0}^{-1}(\beta)}LR_S(s)f_{S|D}(s|0)ds\\&\leq LR_S(F_{S,0}^{-1}(\alpha))\int_{F_{S,0}^{-1}(\alpha)}^{F_{S,0}^{-1}(\beta)}f_{S|D}(s|0)ds\\&=LR_S(F_{S,0}^{-1}(\alpha))(\beta-\alpha),
     \end{align*}
     where the inequality follows from the fact that $ROC_S$ being concave is equivalent to $LR_S$ being non-increasing \cite{gneiting2022receiver}. This implies that $$ROC_S(\beta)\leq ROC_S(\alpha)+LR_S(F_{S,0}^{-1}(\alpha))(\beta-\alpha)$$ for $\beta>\alpha$. The proof for $\beta<\alpha$ follows analogously. This implies   $LR_S(F_{S,0}^{-1}(\alpha))\in\partial ROC_S(\alpha)$, as we wanted to show.

     If $S|D$ has a discrete support, we considered the ROC curves as originating from linearly interpolating the points in the ROC set. In this case, the slope of $ROC_S$ at $\alpha$ from the left (which is the largest super-differential thanks to the concavity assumption \cite{gneiting2022receiver} and the continuous piece-wise linear curve) is precisely $LR_S(F_{S,0}^{-1}(\alpha))$, which thanks to the concavity assumption is a valid super-differential (the largest one).   This leads to the same conclusion as in the continuous case.
\end{proof}
\begin{restatable}{lemma}{lemmaSuperiority}\label{lemma:Superiority}
    Let $\alpha\in (0,1)$. For  $S\in\{X,Z\}$ define $t_S(\alpha)=F_{S,0}^{-1}(\alpha)$ the $\alpha$-quantile of $S\mid D=0$, and for each $v\in\mathcal{V}$ define $\alpha^S_v=\mathbb{E}[\psi_\alpha(S)|D=0,V=v]$   for score $S$ as in Assumption \ref{ass:Comonotonicity}, where $\psi_\alpha(\cdot)$ is defined as in Equation \eqref{eq:RandomizedDecision}. Assume that 
    \begin{enumerate}
        \item $ROC_v$ is concave for all $v\in\mathcal{V}$,
        \item $\mathbb{E}[LR_{(X,V)}(t_X(\alpha),V)(\alpha^X_V-\alpha^Z_V)\mid D=0]\geq 0$, 
    \end{enumerate}
    where $LR_{(X,V)}(x,v)=f_{(X,V)\mid D}(x,v\mid 1)/f_{(X,V)\mid D}(x,v\mid 0)$ is the joint likelihood ratio.  Then $X$ performs better marginally at level $\alpha$, meaning $$ROC_X(\alpha)\geq ROC_Z(\alpha).$$
\end{restatable}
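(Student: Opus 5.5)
The plan is to write both marginal TPRs as averages of the common covariate-specific ROC curves, and then compare them through a supergradient (tangent-line) inequality for each concave $ROC_v$.

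\textbf{Step 1: decompose the marginal TPRs.} For $S\in\{X,Z\}$, I condition on $V$ and apply Lemma \ref{lemma:CovariateRocConnection}, together with Lemma \ref{lemma:EqualCovariateSpecificRocs} (which ensures $ROC_v$ is the same curve for both scores). This gives
$$ROC_S(\alpha)=\mathbb{E}[\psi_\alpha(S)\mid D=1]=\mathbb{E}\left[ROC_V(\alpha^S_V)\mid D=1\right].$$
Hence
$$ROC_X(\alpha)-ROC_Z(\alpha)=\mathbb{E}\left[ROC_V(\alpha^X_V)-ROC_V(\alpha^Z_V)\mid D=1\right].$$

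\textbf{Step 2: supergradient bound in each stratum.} Fix $v$. By hypothesis (1), $ROC_v$ is concave, so Lemma \ref{lemma:SuperGradient} applied to $X\mid V=v$ gives a supergradient at $\alpha^X_v$, namely $LR_{X\mid V}(F_{X,0}^{-1}(\alpha^X_v\mid v)\mid v)$. In the proof of Lemma \ref{lemma:CovariateRocConnection} it was shown that $F_{X,0}^{-1}(\alpha^X_v\mid v)=t_X(\alpha)$. The supergradient is therefore $LR_{X\mid V}(t_X(\alpha)\mid v)$, and the tangent-line inequality yields
$$ROC_v(\alpha^X_v)-ROC_v(\alpha^Z_v)\ge LR_{X\mid V}(t_X(\alpha)\mid v)\,(\alpha^X_v-\alpha^Z_v).$$
Note that I expand at $\alpha^X_v$ rather than at $\alpha^Z_v$, so that the slope is evaluated at the raw-score threshold, matching condition (2).

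\textbf{Step 3: change measure from $D=1$ to $D=0$.} I take the expectation of the Step 2 inequality over $V\mid D=1$ and use
$$f_{V\mid D}(v\mid 1)\,LR_{X\mid V}(x\mid v)=LR_{(X,V)}(x,v)\,f_{V\mid D}(v\mid 0),$$
which holds because the joint likelihood ratio factorises as $LR_{(X,V)}(x,v)=LR_{X\mid V}(x\mid v)\,f_{V|D}(v|1)/f_{V|D}(v|0)$. This turns the bound into
$$ROC_X(\alpha)-ROC_Z(\alpha)\ge \mathbb{E}\left[LR_{(X,V)}(t_X(\alpha),V)(\alpha^X_V-\alpha^Z_V)\mid D=0\right],$$
and the right-hand side is nonnegative by hypothesis (2).

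\textbf{Main obstacle.} I expect the delicate step to be Step 2 when $X$ is discrete. There, $ROC_v$ is the piecewise-linear interpolation, and I must check two things: that the likelihood ratio at the point $t_X(\alpha)$ is a valid supergradient at $\alpha^X_v$ even when $\alpha^X_v$ lies strictly inside a linear segment, and that the threshold identification $t_{\alpha^X_v}(v)=t_X(\alpha)$ from Lemma \ref{lemma:CovariateRocConnection} applies there. The slope of that segment is exactly $LR_{X\mid V}(t_X(\alpha)\mid v)$, and by concavity the slope of the segment containing the point is a supergradient. Boundary cases where $\alpha^X_v\in\{0,1\}$ would need a separate remark, handled by the support assumption.
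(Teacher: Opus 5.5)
Your proposal is correct and follows the paper's proof essentially step for step: you write the difference of marginal TPRs as $\mathbb{E}[ROC_V(\alpha^X_V)-ROC_V(\alpha^Z_V)\mid D=1]$ via Lemma \ref{lemma:CovariateRocConnection}, then apply the supergradient inequality at $\alpha^X_v$ with slope $LR_{X\mid V}(t_X(\alpha)\mid v)$ (Lemma \ref{lemma:SuperGradient} plus the threshold identity), and finally change measure using $LR_{(X,V)}=LR_{X\mid V}\cdot f_{V\mid D}(\cdot\mid 1)/f_{V\mid D}(\cdot\mid 0)$. Your extra remarks on the discrete interpolated case and on the boundary values of $\alpha^X_v$ are points the paper leaves implicit, and they do not change the argument.
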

\begin{proof}
    The concavity assumption and the super-differential inequality \cite[Part I, Ch. VI]{hiriart2013convex} (namely the sub-differential inequality applied to $-ROC_v$) tell us that for all $v\in\mathcal{V}$
    $$ROC_v(\alpha^Z_v) \leq ROC_v(\alpha^X_v) + s_v(\alpha^Z_v-\alpha^X_v), $$
    or equivalently that
    $$ROC_v(\alpha^X_v)-ROC_v(\alpha^Z_v)\geq s_v(\alpha^X_v-\alpha^Z_v).$$ 
    Therefore, for the marginal ROC curves we have
    \begin{align*}
        ROC_X(\alpha)-ROC_Z(\alpha)&=\mathbb{E}[\psi_\alpha(X)|D=1]-\mathbb{E}[\psi_\alpha(Z)|D=1]\\&=\mathbb{E}[\mathbb{E}[\psi_\alpha(X)|V,D=1]-\mathbb{E}[\psi_\alpha(Z)|V,D=1]|D=1]\\&=\mathbb{E}[ROC_V(\alpha^X_V)-ROC_V(\alpha^Z_V)|D=1]\\&\geq \mathbb{E}[s_V(\alpha^X_V-\alpha^Z_V)|D=1]\\&= \mathbb{E}[LR_V(V)s_V(\alpha^X_V-\alpha^Z_V)|D=0],
    \end{align*}
    where the third equality follows from Lemma \ref{lemma:CovariateRocConnection}. By Lemma \ref{lemma:SuperGradient} we know that $s_v=LR_{X|V}(F_{X,0}^{-1}(\alpha^X_v|v)|v)$ is a valid super-gradient. Furthermore, in the proof of Lemma \ref{lemma:CovariateRocConnection} we also showed that $F_{X,0}^{-1}(\alpha^S_v|v)=F_{X,0}^{-1}(\alpha)$. This concludes the proof.
\end{proof}
Using  Lemma \ref{lemma:Superiority} we will be able to prove Theorem \ref{thm:Superiority} as follows. 
\thmSuperiorityExtended*
\begin{proof}
    Under Definition \ref{def:Correction} and Assumption \ref{ass:TestWellDesigned}, Lemma \ref{lemma:Superiority} provides us with the desired lower bound, as  for all $v$ under Assumption \ref{ass:WellDesignedCorrection} $$\alpha^Z_v=\mathbb{E}[\psi_\alpha(Z)|D=0,V=v]=\mathbb{E}[\psi_\alpha(Z)|D=0]=\alpha.$$ Regarding the superiority, again under Assumption \ref{ass:TestWellDesigned}, Lemma \ref{lemma:Superiority} tells us that a sufficient condition for $ROC_X(\alpha)\geq ROC_Z(\alpha)$ is that $$\Delta_\alpha=\mathbb{E}[LR_{V\mid X}(V\mid t_X(\alpha))(\alpha_V^X-\alpha)|D=0]\geq 0,$$ as $LR_X(t_X(\alpha))\geq 0$ does not depend on $v$. Under Assumption \ref{ass:WellDesignedCorrection} we know that $\alpha_v^Z=\alpha$ for all $v\in\mathcal{V}$.  Bayes' rule  implies that $LR_{V\mid X}(v\mid t_X(\alpha))$ is co-monotone in $v$ with the conditional probability $\mathbb{P}(D=1\mid {X=F_{X, 0}^{-1}(\alpha)},V=v)$.    Assumption \ref{ass:Comonotonicity} precisely states that $\mathbb{P}(D=1\mid {X=F_{X, 0}^{-1}(\alpha)},V=v)$ and   $\alpha_v^X$  are co-monotone   in $v$, and so they are with $\alpha_v^X-\alpha$.      Chebyshev's weighted inequality (see \cite[Thm. 43 and Thm. 167]{hardy1952inequalities} for the discrete case or \cite{armstrong1993chebyshev} for a more general statement) implies that 
    \begin{align*}
        \Delta_\alpha&=\mathbb{E}[LR_{V\mid X}(V\mid t_X(\alpha))(\alpha_V^X-\alpha)|D=0]\\&\geq\mathbb{E}[LR_{V\mid X}(V\mid t_X(\alpha))|D=0]\mathbb{E}[\alpha_V^X-\alpha|D=0]\\&= 0,
    \end{align*}
    where the last equality follows from the fact that $\mathbb{E}[\alpha_V^X-\alpha|D=0]=0$ because $$\mathbb{E}[\alpha_V^X|D=0]=\mathbb{E}[\mathbb{E}[\psi_\alpha(X)|V,D=0]|D=0]=\mathbb{E}[\psi_\alpha(X)|D=0]=\alpha.$$
\end{proof}

\begin{remark}
    In the case when the score has a discrete support and $\alpha$ is chosen such that $F_{X,0}(F_{X,0}^{-1}(\alpha))=\alpha$ (i.e. we choose a threshold rather than a FPR) then the bound given in Equation \eqref{eq:LowerBound} can be slightly improved. Indeed, if we re-examine the proof of Lemma \ref{lemma:SuperGradient} in the discrete case for such an $\alpha$, we see that $$s_v\in \partial ROC_v(\alpha^X_v)=[LR_{X|V}(F_{X,0}^{-1}(\alpha+\varepsilon)|v),LR_{X|V}(F_{X,0}^{-1}(\alpha)|v)],$$ for  a certain $\varepsilon>0$ small enough. Essentially, $LR_{X|V}(F_{X,0}^{-1}(\alpha+\varepsilon)|v)$ is the slope of the linearly interpolated covariate-specific ROC curve at $\alpha$ from the right, as $F_{X,0}^{-1}(\alpha+\varepsilon)$ is the smallest value larger than $F_{X,0}^{-1}(\alpha)$ that $X$ can take. As we aim for a lower bound, we could try to maximize $s_v(\alpha^X_v-\alpha)$ for each $v$, in the sense that if $\alpha^X_v>\alpha$ then we would choose $s_v=LR_{X|V}(F_{X,0}^{-1}(\alpha)|v)$ (as we have done so far), but when $\alpha^X_v<\alpha$ then we would take $s_v=LR_{X|V}(F_{X,0}^{-1}(\alpha+\varepsilon)|v)$ the smallest valid super-gradient. In Section \ref{sec:OASIS3_Superiority} we have chosen not to implement this improved bound due to the non-differentiability of the super-gradient decision, which complicates the asymptotic analysis of Appendix \ref{sec:AppOASIS_SuperiorityCLT}.
\end{remark}

\thmSuperiorityZ*
\begin{proof}
    The proof follows analogously as that of Theorem \ref{thm:Superiority} but interchanging the roles of $X$ and $Z$. To then apply the Weighted Chebyshev inequality one requires the counter-monotonicity granted by Assumption \ref{ass:Countermonotonicity} due to change in sign of the lower bound.
\end{proof}

\coroEqualityweak*
\begin{proof}
    Follows trivially from Theorems \ref{thm:Superiority} and \ref{thm:Superiority_Z}.
\end{proof}

\coroAUC*
\begin{proof}
    Follows trivially from Theorems \ref{thm:Superiority} and \ref{thm:Superiority_Z}, the definition of the AUC and monotonicity of the integral.
\end{proof}

\section{Comparing ROC curves: The case with arbitrary covariates.}\label{sec:AppComparingGeneral}
\subsection{The general case}\label{sec:AppComparingGeneral_GeneralCase}
In Section \ref{sec:ComparingScores},  we have assumed $V$ to be univariate and supported on a subset of the real line. The main goal of this Assumption was to simplify the exposition of Assumption \ref{ass:Comonotonicity}. This Assumption requires further examination in the case when $V$ is multivariate and/or its support lacks a total order. 

A special case is that when all components of $V$ have a discrete support. In this case, taking the Cartesian product of these variables leads to a univariate discrete random variable $R$ equivalent to $V$. In this case, one can substitute Assumption \ref{ass:Comonotonicity} by ``there is an ordering of the labels of $R$ such that $\mathbb{E}[\psi_\alpha(X)|D=0,R=r]$ and $\mathbb{P}(D=1\mid {X=F_{X, 0}^{-1}(\alpha)},R=r)$ are co-monotone in $r$.'' One can easily check that the proofs in Appendix \ref{sec:AppProofs_ComparingScores} still stand in this case.

However, when $V$ is multivariate and has one or more continuous components, the former argument does not hold, as co-monotonicity is not straightforwardly defined. We now present a new set of conditions which apply in the general case where we remain agnostic to the nature of the distribution of $V$.

Consider then the case where $V=(V_1,\ldots,V_p)$ supported on $\mathcal{V}$. We will use underlines and overlines to denote posterior and previous components, in the sense that $\underline{V}_k=(V_k,\ldots,V_p)$ and $\overline{V}_k=(V_1,\dots,V_k)$, supported on $\underline{\mathcal{V}}_k$ and $\overline{\mathcal{V}}_k$ respectively.

\begin{assumption}\label{ass:SufficientConditionsGeneral}
    For a given $\alpha\in(0,1)$
    \begin{enumerate}
        \item[(a)] $LR_{X\mid V}(\cdot\mid v)$ is  decreasing for all $v$ in $\mathcal{V}$,
    \end{enumerate}
    and for all $k$ in $\{1,\ldots,p\}$ 
    \begin{enumerate}  
        \item[(b.1)] If $V_k$ is a continuous random variable, then  $\mathbb{E}_0[\alpha^X_V-\alpha^Z_V\mid V_k=v_k,\underline{V}_{k+1}]$ and  $\mathbb{E}_0[LR_{V\mid X}(V\mid t_X(\alpha))\mid V_k=v_k,\underline{V}_{k+1} ]$  are co-monotone in $v_k$ almost surely w.r.t.\ $\underline{V}_{k+1}$,
        \item [(b.2)] If $V_k$ has a discrete support, then  there is an ordering of the labels of $V_k$ such that  $\mathbb{E}_0[\alpha^X_V-\alpha^Z_V\mid V_k=v_k,\underline{V}_{k+1}]$ and  $\mathbb{E}_0[LR_{V\mid X}(V\mid t_X(\alpha))\mid V_k=v_k,\underline{V}_{k+1} ]$  are co-monotone in $v_k$ almost surely w.r.t.\ $\underline{V}_{k+1}$,
    \end{enumerate}
    where by $\mathbb{E}_0$ we refer to the expectation conditional on $D=0$.
\end{assumption}
\begin{remark}
    We see how Assumptions \ref{ass:TestWellDesigned} and  \ref{ass:SufficientConditionsGeneral} (a) are identical, and that in the case $p=1$ Assumptions \ref{ass:Comonotonicity} and  \ref{ass:SufficientConditionsGeneral} (b) are equivalent under Assumption \ref{ass:WellDesignedCorrection}. While in the main text we could have dropped Assumption \ref{ass:WellDesignedCorrection} by strengthening Assumption \ref{ass:Comonotonicity}, we have chosen to present them as so because the conditional independence of the corrected score is usually expected by practitioners. 
\end{remark}
\begin{remark}
    Even though we presented Lemmas \ref{lemma:CovariateRocConnection}-\ref{lemma:Superiority} as auxiliary to the proof of Theorem \ref{thm:Superiority} where we assume $V$ to have a  univariate and absolutely continuous distribution, we can see from their proofs that they apply to the general context of a multivariate $V$ as we treat in this section. The only requirement is that there are no positivity issues so that all likelihood ratios are well defined. 
\end{remark}
\begin{theorem}\label{thm:SuperiorityGeneral}
    Fix $\alpha\in (0,1)$. Under Definition \ref{def:Correction} and Assumption \ref{ass:SufficientConditionsGeneral} it holds that $$ROC_X(\alpha)\geq ROC_Z(\alpha).$$ 
\end{theorem}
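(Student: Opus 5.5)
The argument starts from Lemma \ref{lemma:Superiority}, which the preceding remark says holds for multivariate $V$. Assumption \ref{ass:SufficientConditionsGeneral}(a) gives concavity of every $ROC_v$. It then suffices to show
\[
\mathbb{E}_0\bigl[LR_{(X,V)}(t_X(\alpha),V)(\alpha^X_V-\alpha^Z_V)\bigr]\geq 0 .
\]
Factor $LR_{(X,V)}(t_X(\alpha),v)=LR_X(t_X(\alpha))\,LR_{V\mid X}(v\mid t_X(\alpha))$, exactly as in the proof of Theorem \ref{thm:Superiority}. The first factor is a nonnegative constant, so the target reduces to
\[
\Delta_\alpha:=\mathbb{E}_0[A(V)B(V)]\geq 0,
\]
with $A(v)=\alpha^X_v-\alpha^Z_v$ and $B(v)=LR_{V\mid X}(v\mid t_X(\alpha))$. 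Two facts will be used. First, $\mathbb{E}_0[A(V)]=\alpha-\alpha=0$ by the tower property, since both marginal rules have FPR $\alpha$. Second, $B\geq 0$.

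The core step is an iterated weighted Chebyshev inequality, peeling off one coordinate at a time. For $k=1,\ldots,p$ define $A_k=\mathbb{E}_0[A(V)\mid \underline{V}_{k}]$ and $B_k=\mathbb{E}_0[B(V)\mid \underline{V}_k]$, with $A_{p+1}=\mathbb{E}_0[A]$ and $B_{p+1}=\mathbb{E}_0[B]$. The goal is the chain
\[
\mathbb{E}_0[AB]\geq \mathbb{E}_0[A_2B_2]\geq\cdots\geq \mathbb{E}_0[A_{p+1}B_{p+1}]=0 .
\]
For the step from $k$ to $k+1$, condition on $\underline{V}_{k+1}$. Assumption (b.1) or (b.2) says that $v_k\mapsto \mathbb{E}_0[A\mid V_k=v_k,\underline V_{k+1}]$ and $v_k\mapsto \mathbb{E}_0[B\mid V_k=v_k,\underline V_{k+1}]$ are co-monotone. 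In the discrete case this holds after a suitable relabelling, which is harmless. Chebyshev's inequality \cite{armstrong1993chebyshev,hardy1952inequalities}, applied under the conditional law of $V_k\mid \underline V_{k+1},D=0$, then gives
\[
\mathbb{E}_0\bigl[A_kB_k\mid \underline V_{k+1}\bigr]\geq A_{k+1}B_{k+1}.
\]
Taking expectations yields the step. At the last stage $A_{p+1}=0$, so $\Delta_\alpha\geq 0$, and Lemma \ref{lemma:Superiority} concludes.

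The main obstacle is getting the first step and the indexing of the conditioning right. The assumption is stated with $\mathbb{E}_0[\,\cdot\mid V_k=v_k,\underline V_{k+1}]$, which means we condition on $V_k$ and all later coordinates. Inside step $k$, the quantity $A_k$ is a function of $(V_k,\underline V_{k+1})$, and its regression on $V_k$ given $\underline V_{k+1}$ is $A_k$ itself. That is what the assumption controls, so the chain is consistent. For $k=1$, $A_1=A$ and $B_1=B$, so no loss occurs at the start.

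I would also check measurability and integrability. $B$ should be integrable under $D=0$, since $\mathbb{E}_0[B]$ is finite by the positivity assumption, and $A$ is bounded in $[-1,1]$. This makes the conditional Chebyshev step valid almost surely.
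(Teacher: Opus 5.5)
Your proposal is correct and follows the paper's proof essentially step for step. You reduce via Lemma \ref{lemma:Superiority} and the factorization $LR_{(X,V)}=LR_X\,LR_{V\mid X}$ to showing $\mathbb{E}_0[LR_{V\mid X}(V\mid t_X(\alpha))(\alpha^X_V-\alpha^Z_V)]\geq 0$, then apply the conditional weighted Chebyshev inequality one coordinate at a time with the tower property, ending at $\mathbb{E}_0[\alpha^X_V-\alpha^Z_V]=0$; your $A_k,B_k$ notation is just a cleaner way of writing the paper's chain of iterated expectations.
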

\begin{proof}
    By Lemma \ref{lemma:Superiority} is suffices to show that $\mathbb{E}_0[LR_{V\mid X}(V\mid t_X(\alpha))(\alpha^X_V-\alpha^Z_V)]\geq 0.$ We have that
    \begin{align*}\allowdisplaybreaks
        &\mathbb{E}_0[LR_{V\mid X}(V\mid t_X(\alpha))(\alpha^X_V-\alpha^Z_V)]\\=& \mathbb{E}_0[\mathbb{E}_0[LR_{V\mid X}(V\mid t_X(\alpha))\mid \underline{V}_{1}]\mathbb{E}_0[(\alpha^X_V-\alpha^Z_V)\mid \underline{V}_{1} ]]\\=&\mathbb{E}_0[\mathbb{E}_0[ \mathbb{E}_0[LR_{V\mid X}(V\mid t_X(\alpha))\mid \underline{V}_{1}]\mathbb{E}_0[(\alpha^X_V-\alpha^Z_V)\mid \underline{V}_{1} ] \mid \underline{V}_{2} ]]\\ \geq & \mathbb{E}_0[\mathbb{E}_0[ \mathbb{E}_0[LR_{V\mid X}(V\mid t_X(\alpha))\mid \underline{V}_{1}]\mid \underline{V}_{2} ]\mathbb{E}_0[\mathbb{E}_0[(\alpha^X_V-\alpha^Z_V)\mid \underline{V}_{1} ] \mid \underline{V}_{2} ]]\\=&\mathbb{E}_0[\mathbb{E}_0[ LR_{V\mid X}(V\mid t_X(\alpha))\mid \underline{V}_{2} ]\mathbb{E}_0[(\alpha^X_V-\alpha^Z_V) \mid \underline{V}_{2} ]]\\=&\mathbb{E}_0[\mathbb{E}_0[\mathbb{E}_0[ LR_{V\mid X}(V\mid t_X(\alpha))\mid \underline{V}_{2} ]\mathbb{E}_0[(\alpha^X_V-\alpha^Z_V) \mid \underline{V}_{2} ]\mid \underline{V}_{3}]]\\\geq& \mathbb{E}_0[\mathbb{E}_0[\mathbb{E}_0[ LR_{V\mid X}(V\mid t_X(\alpha))\mid \underline{V}_{2} ]\mid \underline{V}_{3}]\mathbb{E}_0[\mathbb{E}_0[(\alpha^X_V-\alpha^Z_V) \mid \underline{V}_{2} ]\mid \underline{V}_{3}]]\\=& \mathbb{E}_0[\mathbb{E}_0[ LR_{V\mid X}(V\mid t_X(\alpha))\mid \underline{V}_{3}]\mathbb{E}_0[(\alpha^X_V-\alpha^Z_V) \mid \underline{V}_{3}]]\\ &[\text{iterating this procedure until we get to } p+1]\\ \geq & \mathbb{E}_0[ LR_{V\mid X}(V\mid t_X(\alpha))]\mathbb{E}_0[(\alpha^X_V-\alpha^Z_V) ]=0,
    \end{align*}
    where we repeatedly applied the law of iterated expectations plus Chebyshev's weighted inequality  \cite{armstrong1993chebyshev} under Assumption \ref{ass:SufficientConditionsGeneral} \textit{(b)}. Together with Definition \ref{def:Correction} and  Assumption \ref{ass:SufficientConditionsGeneral} \textit{(a)}, the conditions for  Lemma \ref{lemma:Superiority} are satisfied. This concludes the proof.
\end{proof}
\begin{remark}
    It is worth noting  that 
    \begin{align*}
        \mathbb{E}_0[\alpha_V^X \mid V_k=v_k,\underline{V}_{k+1}]&=\mathbb{E}_0[\mathbb{E}_0[\psi_\alpha(X)\mid V]\mid V_k=v_k,\underline{V}_{k+1}]\\&=\mathbb{E}_0[\psi_\alpha(X)\mid V_k=v_k,\underline{V}_{k+1}]
    \end{align*}
    meaning $\mathbb{E}_0[\alpha_V^X \mid \underline{V}_{k}=\underline{v}_k]$ represents a covariate-specific FPR marginalized among $V_1,\ldots,V_{k-1}$.
\end{remark}

\subsection{The jointly independent case}\label{sec:AppComparingGeneral_JointlyIndependent}
Assumption \ref{ass:SufficientConditionsGeneral} \textit{(b)} accounts for the dependence structure between the components of $V$ by taking sequential conditional expectations. 
In the special case where the components of $V$ are jointly independent among the non-impaired, Harris' inequality \cite{harris1960lower} provides a more compact condition, as we can see in the following conditions.
\begin{assumption}\label{ass:SufficientConditionsIndependence}
    For a given $\alpha\in(0,1)$ assume
    \begin{enumerate}
        \item[(a)] $LR_{X\mid V}(\cdot\mid v)$ is  decreasing for all $v$ in $\mathcal{V}$,
        \item[(b)] The random vector $V\mid D=0$ has jointly independent components,
    \end{enumerate}
    and for all $k$ in $\{1,\ldots,p\}$ 
    \begin{enumerate}
        \item[(c.1)] If $V_k$ is a continuous random variable then  $(\alpha^X_v-\alpha^Z_v)$ and  $LR_{V\mid X}(v\mid t_X(\alpha))$  are  co-monotone in $v_k$ for all $v_1,\ldots,v_{k-1},v_{k+1},\ldots,v_p$\footnote{We understand this co-monotonicity as to be ``in the same direction in $v_k$'' for all $v_1,\ldots,v_{k-1},v_{k+1},\ldots,v_p$; meaning always both increasing or decreasing. We do not allow the co-monotonicity direction to change with $v_1,\ldots,v_{k-1},v_{k+1},\ldots,v_p$.  }.
        \item[(c.2)] If $V_k$ has a discrete support then  there is an ordering of the labels of $V_k$ such that $(\alpha^X_v-\alpha^Z_v)$ and  $LR_{V\mid X}(v\mid t_X(\alpha))$  are  co-monotone in $v_k$ for all $v_1,\ldots,v_{k-1},v_{k+1},\ldots,v_p$.
    \end{enumerate}
\end{assumption}
\begin{theorem}
    Fix $\alpha\in (0,1)$. Under Definition \ref{def:Correction} and Assumption \ref{ass:SufficientConditionsIndependence} it holds that $$ROC_X(\alpha)\geq ROC_Z(\alpha).$$ 
\end{theorem}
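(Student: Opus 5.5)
By Lemma \ref{lemma:Superiority}, it is enough to show that
\[
\mathbb{E}_0\bigl[LR_{(X,V)}(t_X(\alpha),V)(\alpha^X_V-\alpha^Z_V)\bigr]\geq 0 .
\]
Its condition (1) is supplied by Assumption \ref{ass:SufficientConditionsIndependence} (a), together with the equivalence between monotone likelihood ratios and concave covariate-specific ROC curves \cite{gneiting2022receiver}. As in the proof of Theorem \ref{thm:Superiority}, factor the joint likelihood ratio as $LR_X(t_X(\alpha))\,LR_{V\mid X}(V\mid t_X(\alpha))$. The first factor is nonnegative and does not depend on $V$, so the target reduces to
\[
\Delta_\alpha:=\mathbb{E}_0\bigl[LR_{V\mid X}(V\mid t_X(\alpha))(\alpha^X_V-\alpha^Z_V)\bigr]\geq 0 .
\]

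The key step is Harris' (FKG) inequality for product measures \cite{harris1960lower}. By Assumption \ref{ass:SufficientConditionsIndependence} (b), the law of $V\mid D=0$ is a product measure on $\mathcal{V}=\mathcal{V}_1\times\cdots\times\mathcal{V}_p$. For each $k$, fix a direction: if $V_k$ has discrete support, first relabel it using the ordering given by (c.2); then, if necessary, reverse the order of that coordinate so that both functions are non-decreasing in $v_k$. Assumption (c.1)/(c.2) says the co-monotonicity direction is the same for all values of the other coordinates. Hence, after these reorientations, $f(v)=LR_{V\mid X}(v\mid t_X(\alpha))$ and $g(v)=\alpha^X_v-\alpha^Z_v$ are both coordinatewise non-decreasing on $\mathcal{V}$ with the product partial order. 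Harris' inequality then gives
\[
\mathbb{E}_0[f(V)g(V)]\geq \mathbb{E}_0[f(V)]\,\mathbb{E}_0[g(V)] .
\]
If one prefers to avoid quoting Harris, the same bound follows by induction on $p$, conditioning on $\underline{V}_2$ and applying the one-dimensional Chebyshev inequality \cite{armstrong1993chebyshev} to $V_1$. This works because independence keeps the conditional law of $V_1$ fixed, and averaging over $V_1$ preserves monotonicity in the remaining coordinates. This is essentially the chain in the proof of Theorem \ref{thm:SuperiorityGeneral}.

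It remains to show that the product on the right is zero. The tower property gives
\[
\mathbb{E}_0[\alpha^X_V]=\mathbb{E}_0[\psi_\alpha(X)]=\alpha \quad\text{and}\quad \mathbb{E}_0[\alpha^Z_V]=\mathbb{E}_0[\psi_\alpha(Z)]=\alpha ,
\]
so $\mathbb{E}_0[g(V)]=0$. Since $\mathbb{E}_0[f(V)]$ is finite (it equals $1$ under positivity), $\Delta_\alpha\geq 0$, and Lemma \ref{lemma:Superiority} yields $ROC_X(\alpha)\geq ROC_Z(\alpha)$.

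The main obstacle I expect is justifying Harris' inequality in this mixed setting. Some coordinates are continuous and others are discrete, and the monotonicity is not necessarily strict. I would handle this by working coordinate by coordinate through the inductive Chebyshev argument, which needs only integrability of $fg$, $f$ and $g$. Integrability is fine because $g$ is bounded in $[-1,1]$ and $f$ is integrable. The argument also needs the per-coordinate co-monotonicity direction to be uniform, which the footnote to (c.1) guarantees.
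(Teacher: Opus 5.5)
Your proposal is correct and matches the paper's proof. Both reduce, via Lemma~\ref{lemma:Superiority}, to showing $\mathbb{E}_0[LR_{V\mid X}(V\mid t_X(\alpha))(\alpha^X_V-\alpha^Z_V)]\geq 0$, then conclude with Harris' inequality and $\mathbb{E}_0[\alpha^X_V-\alpha^Z_V]=0$; your coordinate reorientation and inductive Chebyshev remark just make explicit what the paper leaves implicit. One small slip: $\mathbb{E}_0[LR_{V\mid X}(V\mid t_X(\alpha))]$ is not $1$ in general, because the expectation is taken over $V\mid D=0$ rather than $V\mid X=t_X(\alpha),D=0$; however, only its finiteness is needed, an integrability condition the paper also takes for granted.
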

\begin{proof}
    Analogously to the proof of Theorem \ref{thm:SuperiorityGeneral}, by Lemma \ref{lemma:Superiority} is suffices to show that $\mathbb{E}_0[LR_{V\mid X}(V\mid t_X(\alpha))(\alpha^X_V-\alpha^Z_V)]\geq 0.$ This follows from Assumptions \ref{ass:SufficientConditionsIndependence} \textit{(b)} and \textit{(c)}, as then thanks to Harris' inequality \cite{harris1960lower,kleitman1966families}
    $$\mathbb{E}_0[LR_{V\mid X}(V\mid t_X(\alpha))(\alpha^X_V-\alpha^Z_V)]\geq \mathbb{E}_0[LR_{V\mid X}(V\mid t_X(\alpha))]\mathbb{E}_0[(\alpha^X_V-\alpha^Z_V)]=0.$$
    This concludes the proof.
\end{proof}
\begin{remark}
    We see again how in the  univariate case $p=1$ treated in the main text Assumption \ref{ass:SufficientConditionsIndependence} (b) is trivially satisfied, and (c) becomes equivalent to Assumption \ref{ass:Comonotonicity} under Assumption \ref{ass:WellDesignedCorrection}. 
\end{remark}

\section{Demographic insensitivity}\label{sec:App_FDP}
\subsection{Partial isolation}\label{sec:App_FDP_PartialIsolation}

To study the counterfactual distribution of interest (Definitions \ref{def:FDP} and \ref{def:FDP_weak}) we will \textit{not} assume a specific causal DAG to be correct. However, we will still discuss how $V$ and $V_S$ affect both $S$ and $D$ in the extended causal DAG.  In general, we will assume the sensitive attributes $V_S$  to have no causal effect on $D$. For example, if the demographics $V$  represent the years of formal education, the sensitive attribute $V_S$ can represent knowledge of certain words. This sensitive attribute is expected to have no effect on developing cognitive impairment, while the actual education likely does. Knowing certain words per se has no effect on impairment, but having studied for several years is a protective factor against cognitive decline. We can formalize these ideas using the notions of partial isolation \cite{stensrud2021generalized}.

\begin{definition}[$V_S$ partial isolation]\label{def:XPI}
    There are no causal paths from $V_S$ to $D$ in the extended causal DAG.
\end{definition}
\begin{definition}[$V$ partial isolation]\label{def:DPI}
    All causal paths from $V$ to $S$ in the extended causal DAG intersect\footnote{ We say that a path in a DAG with node set $N$ intersects  $U\subset N$ when it contains one of the nodes in $U$, not being the terminal nodes of the path.} $D$.
\end{definition}

If both $V_S$ and $V$ partial isolations hold, we say that \textit{full isolation} holds. See that $V_S$ partial isolation implies that all causal paths from $V_S$ to $S$ in the extended DAG  represent effects not through the impairment status which we aim to predict with $S$. Analogously, if $V$ partial isolation holds, the demographics $V$ have all its effect on the score mediated by $D$. We will see that $V$ partial isolation will not be required for identifying the counterfactual distribution $S^{v_S}\mid \{V=v,D=d\}$, yet we will make assumptions equivalent to $V_S$ partial isolation. Such assumption is not required in most of the existing separable effects literature \cite{stensrud2021generalized,stensrud2022separable,stensrud2023conditional,wen2024causal}. This owes to the fact that  Definitions \ref{def:FDP} and \ref{def:FDP_weak} target the distribution of a counterfactual outcome conditional on the factual event $\{V=v,D=d\}$, which we will need to be observable in the counterfactual world under an intervention $V_S=v_S$. As we show in Appendix \ref{sec:App_FDP_Identification}, this boils down to there being no direct effect from $V_S$ on $D$,  which is the notion of $V_S$ partial isolation. We provide further discussion in Propositions \ref{prop:XPI} and \ref{prop:FullIsolation}.

\subsection{Identification}\label{sec:App_FDP_Identification}
To study whether a certain score $S$ is  demographic insensitive, we need to identify the distribution of $S^{v_S}\mid \{V=v,D=d\}$. In the case of a decision rule based on thresholding at $s$, it suffices to identify $\mathbb{P}(S^{v_S}\leq s\mid V=v,D=d)$. Thus, we see that regardless of whether we target our strong or weak notion of demographic insensitivity, our target for identification and inference will be $\{\mathbb{P}(S^{v_S}\leq s\mid V=v,D=d)\colon s\in {\rm supp}(S)\}$. If we had access to an experiment where $V$ and $V_S$ were separately observed, $V_S$ was randomized, and $V_S$ partial isolation held, then we would be able to identify the  target distribution as $$\mathbb{P}(S\leq s\mid V_S=v_S,V=v,D=d),$$ for all $s$ in the support of $S$, as the required conditions (consistency, exchangeability and positivity) would hold by design \cite{hernan2010causal}. Thus, this is a single-world estimand which, in principle, can be identified from a randomized experiment if $V_S$ partial isolation can be accepted. 

When only data from experiments where there is no direct access to $V_S$ (meaning $V_S\equiv V$) is available, identifying the distribution of $S^{v_S}\mid \{V=v,D=d\}$ might not be guaranteed under the previous assumptions when $v_S\neq v$.  We will now give further assumptions that are sufficient for identification in this setting.
\begin{assumption}[Positivity]\label{ass:Positivity}
    For all $v\in\mathcal{V}$ and $d\in\{0,1\}$ 
    \begin{subequations}\label{eq:Positivity}
        \begin{equation}\label{eq:PositivityBase}
            f_{V,D}(v,d)>0,
        \end{equation}
        and for all $l\in\mathcal{L}$ and $w\in\mathcal{W}$
        \begin{equation}\label{eq:PositivityV}
            \mathbb{P}(L=l,W=w,D=d)>0\Rightarrow f_{V|L,W,D}(v\mid l,w,d)>0.
        \end{equation}
    \end{subequations}
\end{assumption}
Condition \eqref{eq:PositivityBase} is required to be able to condition on the event $\{V=v,D=d\}$, while \eqref{eq:PositivityV} states that for any covariate and impairment status we find individuals in every stratum of the demographic variables. These two conditions refer only to the data we have access to, and can be tested.

\begin{assumption}[Consistency]\label{ass:Consistency}
    If $V_S=v_S$ then $S^{v_S}=S$ and $W^{v_S}=W$ for all 
    $v_S\in\mathcal{V}$.
\end{assumption}
This is the standard consistency assumption made in the causal inference literature for interventions on $V_S$. It requires that interventions on $V_S$ are well-defined, and makes an implicit no-interference assumption. Both are expected to hold given the intervening variable construction of $V_S$ \cite{wen2024causal} and the i.i.d. assumption.

Next, we will consider a set of conditional exchangeability assumptions. First consider that the  covariates $W$ can be expressed as having two components $W=(W_S,W_D)$. Furthermore, following \cite{stensrud2021generalized, wen2024causal} denote by ``$(G)$'' the experiment where $V_S$ is randomly assigned, and consider the following dismissible component conditions.
\begin{assumption}[Dismissible component conditions]\label{ass:DCC}
    \begin{subequations}\label{eq:DCC}
        \begin{align}
            S(G)&\independent V(G)\mid V_S(G),D(G),W(G),L(G),\label{eq:DCC_X}\\
            W_D(G)&\independent V_S(G)\mid V(G),L(G),D(G),W_S(G),\label{eq:DCC_WD}\\
            W_S(G)&\independent V(G)\mid V_S(G),D(G),L(G),\label{eq:DCC_WX}\\
            D(G)&\independent V_S(G)\mid L(G),V(G).\label{eq:DCC_D}
        \end{align}
    \end{subequations}
\end{assumption}
While conditions \eqref{eq:DCC} do not hold by design, they are observational and could be tested on real data coming from the experiment $(G)$, which is realizable in principle as $V_S$ could be randomized.

We will show that the dismissible component conditions are closely related to the notion of partial isolation, as is the case in the separable effects literature. Yet, when studying this relation, \cite[App. C]{stensrud2021generalized} implicitly makes a faithfulness assumption. Faithfulness is the  counterpart of the global Markov property, and will allow us to translate conditional independences into graphical statements related to partial isolation. To that extent, denote by $\mathcal{G}_{(G)}$ the (extended) causal DAG representing the variables of the experiment $(G)$. An example of these DAGs can be seen in Figure \ref{fig:FDP_ExtendedDAG_G} as a continuation of the example introduced in Figure \ref{fig:FDP_ExtendedDAG}.

\begin{figure}
\centering

\begin{subfigure}{0.45\textwidth}
\centering
\begin{tikzpicture}
\begin{scope}[every node/.style={thick,draw=none}]
    \node (L) at (-1,0) {$L$};
    \node (V) at (1,-1) {$V$};
    \node (VS) at (1,1) {$V_S$};
    \node (S) at (5,1) {$S$};
    \node (D) at (5,-1) {$D$};
    \node (W) at (3,0) {$W$};
\end{scope}

\begin{scope}[>={Stealth[black]},
              every node/.style={fill=white,circle},
              every edge/.style={draw=black,very thick}]
	\path [->] (L) edge (V);
    \path [->] (L) edge (VS);
    \path [->] (L) edge (W);
    \path [->] (L) edge[bend left=50] (S);
    \path [->] (L) edge[bend right=50] (D);
    \path [->] (V) edge[line width=0.85mm] (VS);
    \path [->] (VS) edge (S);
    \path [->] (V) edge (D);
    \path [->] (D) edge (S);
    \path [->] (W) edge (S);
    \path [->] (W) edge (D);
    \path [->] (V) edge (W);

\end{scope}
\end{tikzpicture}
\caption{Extended causal DAG $\mathcal{G}$}
\end{subfigure}
\hfill
\begin{subfigure}{0.45\textwidth}
\centering
\begin{tikzpicture}
\begin{scope}[every node/.style={thick,draw=none}]
    \node (L) at (-1,0) {$L(G)$};
    \node (V) at (1,-1) {$V(G)$};
    \node (VS) at (1,1) {$V_S(G)$};
    \node (S) at (5,1) {$S(G)$};
    \node (D) at (5,-1) {$D(G)$};
    \node (W) at (3,0) {$W(G)$};
\end{scope}

\begin{scope}[>={Stealth[black]},
              every node/.style={fill=white,circle},
              every edge/.style={draw=black,very thick}]
	\path [->] (L) edge (V);
    \path [->] (L) edge (W);
    \path [->] (L) edge[bend left=50] (S);
    \path [->] (L) edge[bend right=50] (D);
    \path [->] (VS) edge (S);
    \path [->] (V) edge (D);
    \path [->] (D) edge (S);
    \path [->] (W) edge (S);
    \path [->] (W) edge (D);
    \path [->] (V) edge (W);

\end{scope}
\end{tikzpicture}
\caption{Extended causal DAG $\mathcal{G}_{(G)}$}
\end{subfigure}

\caption{\small Extended causal DAG introduced in Figure \ref{fig:FDP_ExtendedDAG} (A) together with its associated extended DAG in the related $(G)$ experiment (B).}
\label{fig:FDP_ExtendedDAG_G}
\end{figure}

\begin{assumption}[Faithfulness \cite{pearl}]\label{ass:Faithfulness}
    The distribution $\mathbb{P}_{(G)}$ of the variables in the \textit{experiment} $(G)$ is faithful  to the DAG $\mathcal{G}_{(G)}$, meaning conditional independence in $\mathbb{P}_{(G)}$ implies $d$-separation in $\mathcal{G}_{(G)}$.
\end{assumption}

Despite Assumption \ref{ass:Faithfulness}, we do not require for our formalism the specification of a DAG $\mathcal{G}$ or $\mathcal{G}_{(G)}$, but rather the existence of one to which our assumptions can be related. This way, we provide practitioners with the freedom to specify a causal DAG in their specific application.

While faithfulness is an untestable assumption, it has been deemed realistic by results stating the existence of faithful laws \cite[Sec. 3.6.5]{richardson2013single}, and the fact that in some some scenarios almost all laws are faithful \cite[Thm. 3.2]{spirtes2000causation}. For an extensive discussion on whether faithfulness typically holds, see \cite{boeken2024bayesian}. Assumption~\ref{ass:Faithfulness} together with condition \eqref{eq:DCC_D} could be dropped if we were willing to directly assume $V_S$ partial isolation, or the assumption of no effect $D^{v_S}=D$ almost surely. Yet, we chose to present the faithfulness assumption together with the dismissible component condition \eqref{eq:DCC_D} as at least one of these is, in principle, testable on real data. 

We will show that the former assumptions are sufficient to identify the counterfactual distribution of interest. However, before stating the identification theorem we will study the relationship between our identification conditions and partial isolation.

\begin{restatable}{lemma}{lemmaXPI}\label{lemma:XPI}
    Under Assumption \ref{ass:Faithfulness}, condition \eqref{eq:DCC_D} is equivalent to  the absence of directed paths from $V_S(G)$ to $D(G)$
\end{restatable}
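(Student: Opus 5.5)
The lemma concerns the graph $\mathcal{G}_{(G)}$ of the experiment $(G)$. There $V_S(G)$ is randomized, so it has no parents: the bold edge $V\boldsymbol{\rightarrow}V_S$ is removed, as in Figure \ref{fig:FDP_ExtendedDAG_G}(B). The topological ordering $(L,V,W,D,S)$, with $V_S$ placed right after $V$, also rules out any edge from $D(G)$ or $S(G)$ into $V_S(G)$. I will prove the two implications separately. The direction from the graph to the independence uses the global Markov property of $\mathbb{P}_{(G)}$ with respect to $\mathcal{G}_{(G)}$, which holds because the law is induced by the FFRCISTG model. The converse uses faithfulness (Assumption \ref{ass:Faithfulness}).

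($\Leftarrow$) Suppose there is no directed path from $V_S(G)$ to $D(G)$. I will show that $V_S(G)$ and $D(G)$ are $d$-separated given $\{L(G),V(G)\}$. Take any path between them. Since $V_S(G)$ has no parents, the path leaves $V_S(G)$ through an outgoing edge $V_S(G)\to A$. By assumption the path cannot be directed all the way to $D(G)$, so somewhere along it there is a collider. The first collider $C$ on the path is a descendant of $V_S(G)$. The descendants of $V_S(G)$ come after it in the ordering, so they lie in $\{W,D,S\}(G)$ minus any ancestors of $D(G)$ reachable from $V_S(G)$. In particular neither $L(G)$ nor $V(G)$ is a descendant of $V_S(G)$. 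Hence neither $C$ nor any descendant of $C$ is in the conditioning set, and the path is blocked. The global Markov property then gives \eqref{eq:DCC_D}.

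($\Rightarrow$) Suppose \eqref{eq:DCC_D} holds. By faithfulness, $V_S(G)$ and $D(G)$ are $d$-separated given $\{L(G),V(G)\}$. If a directed path $V_S(G)\to\cdots\to D(G)$ existed, none of its intermediate nodes could be $L(G)$ or $V(G)$, because those are not descendants of $V_S(G)$. Such a path has no colliders and no conditioned non-colliders, so it would be open given $\{L(G),V(G)\}$, which is a contradiction.

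The main obstacle is making the claim that $L(G),V(G)$ are not descendants of $V_S(G)$ rigorous without fixing a specific DAG. I plan to derive it from the assumed topological ordering together with the construction of $(G)$, in which only the edges into $V_S$ are cut. I would state this explicitly as a structural assumption on $\mathcal{G}_{(G)}$.
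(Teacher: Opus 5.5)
Your proof is correct and takes essentially the same route as the paper. The global Markov property of the FFRCISTG law gives ($\Leftarrow$) via $d$-separation given $\{L(G),V(G)\}$, faithfulness plus an open directed path gives ($\Rightarrow$), and both rest on the same ordering facts ($L(G)$ is a source and $V(G)$ is not a descendant of $V_S(G)$); the only difference is that the paper splits ($\Leftarrow$) into cases according to whether the path visits $L(G)$, $V(G)$, $S(G)$ or only $W_S(G),W_D(G)$, whereas your ``first collider is a descendant of the parentless $V_S(G)$'' argument covers all paths at once and is slightly cleaner.
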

\begin{proof}
    ``$\Leftarrow$'': We will show the $d$-separation $D(G)\perp V_S(G)\mid \{L(G),V(G)\}$ in $\mathcal{G}_{(G)}$, and conclude independence from the global Markov property under a FFRCISTG model \cite[Sec.\ 3.5.2]{richardson2013single}. Denote  $\mathcal{B}\defeq\{L(G),V(G)\}$. Consider a path in $\mathcal{G}_{(G)}$ from $V_S(G)$ to $D(G)$. If this path contains $L(G)$ then it is blocked by $\mathcal{B}$ as $L(G)$ is a source node in $\mathcal{G}_{(G)}$ and therefore it is never a collider. If the path contains $V(G)$ it is also blocked by $\mathcal{B}$, as $V(G)$ can only have $L(G)$ as parent, due to the topological order in the DAG, meaning it cannot be a collider.  The path is also blocked if it contains $S(G)$, as this is always a collider without descendants in $\mathcal{B}$ (it is a sink node). Assume then that this path only contains nodes in $\{V_S(G),W_S(G),W_D(G),D(G)\}$. Per the definition of $(G)$, $V_S(G)$ has no parents in the DAG, therefore this path starts with an edge $V_S(G)\to$. As there are no directed paths from $V_S(G)$ to $D(G)$ in $\mathcal{G}_{(G)}$, this path necessarily contains an intermediate node ($W_S$ or $W_D$) which is a collider without descendants in $\mathcal{B}$ due to the topological ordering of the DAG. This path is therefore closed. 

    ``$\Rightarrow$'': Under Assumption \ref{ass:Faithfulness}, condition \eqref{eq:DCC_D} implies the $d$-separation $D(G)\perp V_S(G)\mid \mathcal{B}$ in $\mathcal{G}_{(G)}$. If there is a directed path from $V_S(G)$ to $D(G)$, this path cannot contain any node in $\mathcal{B}$, which would violate $d$-separation, leading to a contradiction.
\end{proof}
\begin{restatable}{lemma}{lemmaDPI}\label{lemma:DPI}
    Under Assumption \ref{ass:Faithfulness}, conditions \eqref{eq:DCC_X} and \eqref{eq:DCC_WX} imply that all causal paths in $\mathcal{G}_{(G)}$ from $V(G)$ to $S(G)$ which do not intersect $D(G)$ are mediated by $W_D(G)$. 
\end{restatable}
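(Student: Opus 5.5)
We argue by contradiction through faithfulness, mirroring the ``$\Rightarrow$'' direction of Lemma \ref{lemma:XPI}. Under Assumption \ref{ass:Faithfulness}, condition \eqref{eq:DCC_X} yields the $d$-separation $S(G)\perp V(G)\mid \mathcal{C}$ in $\mathcal{G}_{(G)}$, with $\mathcal{C}\defeq\{V_S(G),D(G),W_S(G),W_D(G),L(G)\}$. Likewise, condition \eqref{eq:DCC_WX} yields $W_S(G)\perp V(G)\mid \{V_S(G),D(G),L(G)\}$. Suppose there is a directed path $\pi$ from $V(G)$ to $S(G)$ that does not intersect $D(G)$ and does not pass through $W_D(G)$. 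We must show that this contradicts one of the two $d$-separations.

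We split into cases according to the intermediate nodes of $\pi$. By the topological order $(L,V,W,D,S)$, with $V_S$ placed after $V$, the intermediate nodes of a directed path from $V(G)$ can only lie in $\{W_S(G),W_D(G),D(G)\}$. Recall that $V_S(G)$ has no parents in $(G)$, so it cannot appear on a directed path out of $V(G)$. Since $D(G)$ and $W_D(G)$ are excluded, only two cases remain.

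\textbf{Case 1.} The path $\pi$ is the direct edge $V(G)\to S(G)$. A single edge is never blocked, so it contradicts the first $d$-separation.

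\textbf{Case 2.} The path $\pi$ passes through $W_S(G)$. It then contains a directed subpath from $V(G)$ to $W_S(G)$. This subpath either is the edge $V(G)\to W_S(G)$, or goes through $W_D(G)$ if $W_D$ precedes $W_S$; the latter is excluded by assumption. The edge $V(G)\to W_S(G)$ is an open path given $\{V_S(G),D(G),L(G)\}$, since none of its nodes are conditioned on. This contradicts the second $d$-separation.

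In either case we obtain a contradiction, so every such causal path must be mediated by $W_D(G)$.

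The main obstacle is the ordering within $W=(W_S,W_D)$, which is not fixed. I would handle it by treating the subpath from $V(G)$ to $W_S(G)$ generally. Any directed path from $V(G)$ to $W_S(G)$ avoiding $W_D(G)$ and $D(G)$ has only nodes of $W_S$ as intermediates. No node on it lies in $\{V_S(G),D(G),L(G)\}$, and it contains no colliders, so it is open. This argument works for either ordering, and also when $W_S$ is multivariate, taking the first $W_S$ component reached along $\pi$. A minor point to check is that, if $W_S$ is a vector, faithfulness applies to the joint variable, and conditioning on the other $W$ components in \eqref{eq:DCC_X} does not matter once we use \eqref{eq:DCC_WX}.
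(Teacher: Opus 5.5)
Your proof is correct and follows essentially the same route as the paper: faithfulness turns \eqref{eq:DCC_X} into a $d$-separation that excludes the direct edge $V(G)\to S(G)$, and turns \eqref{eq:DCC_WX} into one that excludes any directed path reaching $W_S(G)$ without passing through $W_D(G)$. The paper instead enumerates the five candidate directed paths and, with the same reasoning, also excludes $V(G)\to W_D(G)\to W_S(G)\to S(G)$, since that path is open given $\{V_S(G),D(G),L(G)\}$; this leaves only $V(G)\to W_D(G)\to S(G)$, whereas your contrapositive argument proves exactly the stated claim and handles vector-valued $W_S$ and either ordering within $W$ more explicitly.
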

\begin{proof}
    We want to study directed paths from $V(G)$ to $S(G)$ which do not go through $D(G)$. All these possible paths are:
    \begin{enumerate}
        \item[(a)] $V(G)\to S(G)$
        \item[(b)] $V(G)\to W_D(G)\to S(G)$
        \item[(c)] $V(G)\to W_D(G)\to W_S(G)\to S(G)$
        \item[(d)] $V(G)\to W_S(G)\to W_D(G)\to S(G)$
        \item[(e)] $V(G)\to  W_S(G)\to S(G)$
    \end{enumerate}
    Under Assumption \ref{ass:Faithfulness}, path (a) is forbidden by condition \eqref{eq:DCC_X}, and paths (c-e) are forbidden by condition \eqref{eq:DCC_WX}. However, path (b) violates none of the $d$-separation statements implied by the  dismissible component conditions under faithfulness.
\end{proof}

\begin{restatable}{proposition}{propXPI}\label{prop:XPI}
    Under a FFRCISTG model,  Assumptions \ref{ass:DCC} and \ref{ass:Faithfulness} imply $V_S$ partial isolation (Definition~\ref{def:XPI}).
\end{restatable}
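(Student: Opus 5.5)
The plan is to reduce the statement to Lemma \ref{lemma:XPI}. Lemma \ref{lemma:XPI} says that, under Assumption \ref{ass:Faithfulness}, condition \eqref{eq:DCC_D} is equivalent to the absence of directed paths from $V_S(G)$ to $D(G)$ in $\mathcal{G}_{(G)}$. Since Assumption \ref{ass:DCC} includes \eqref{eq:DCC_D}, the ``$\Rightarrow$'' direction of that lemma immediately gives the absence of directed paths $V_S(G)\to\cdots\to D(G)$ in the experiment DAG. What then remains is to move this graphical statement from $\mathcal{G}_{(G)}$ back to the extended causal DAG $\mathcal{G}$, where Definition~\ref{def:XPI} is formulated.

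For that transfer step, the approach is to recall how $\mathcal{G}_{(G)}$ is obtained from $\mathcal{G}$ under the FFRCISTG model. The experiment $(G)$ randomizes $V_S$, so $\mathcal{G}_{(G)}$ is $\mathcal{G}$ with all edges \emph{into} $V_S$ deleted: the deterministic bold arrow $V\boldsymbol{\rightarrow}V_S$ and any arrows from $L$. All other edges, in particular every edge \emph{out of} $V_S$ and every edge among the descendants of $V_S$, are kept, with nodes relabelled $\cdot(G)$. This is the single-world intervention template construction of \cite{richardson2013single}, illustrated in Figure~\ref{fig:FDP_ExtendedDAG_G}.

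Given this construction, take any directed path from $V_S$ to $D$ in $\mathcal{G}$. Such a path starts with an edge out of $V_S$. Because the DAG is acyclic, it never re-enters $V_S$, so it uses no edge into $V_S$. Hence the path survives intact in $\mathcal{G}_{(G)}$ as a directed path from $V_S(G)$ to $D(G)$, which contradicts the previous step. Therefore $\mathcal{G}$ has no causal path from $V_S$ to $D$, which is exactly $V_S$ partial isolation.

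I expect the main obstacle to be justifying rigorously that $\mathcal{G}_{(G)}$ keeps every edge emanating from $V_S$ and its descendants. This is what makes the correspondence of directed paths bijective outside the incoming edges of $V_S$. I would state it explicitly as a property of the FFRCISTG intervention on $V_S$, namely that randomization only severs the arrows pointing into the intervened node, and cite \cite{richardson2013single} for it. Once that is in place, the remainder is the short path argument above.
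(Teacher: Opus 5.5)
Your proposal is correct and follows the paper's argument. The paper's proof notes that \eqref{eq:DCC_D} is part of Assumption~\ref{ass:DCC}, uses the ``$\Rightarrow$'' direction of Lemma~\ref{lemma:XPI} to rule out directed paths from $V_S(G)$ to $D(G)$, and reads this directly as Definition~\ref{def:XPI}. Your extra step transferring the statement from $\mathcal{G}_{(G)}$ back to the extended DAG $\mathcal{G}$ is sound: randomizing $V_S$ only deletes edges into $V_S$, so any directed path $V_S\to\cdots\to D$ in $\mathcal{G}$ survives in $\mathcal{G}_{(G)}$. The paper leaves that step implicit.
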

\begin{proof}
    Follows directly from Lemma \ref{lemma:XPI} and Definition \ref{def:XPI}.
\end{proof}

Recall that under $V_S$ partial isolation all the causal paths from $V_S$ to $S$ in the extended causal DAG  are not mediated by $D$. It remains to study the causal paths stemming from $V$, and to determine whether they are all mediated by $D$. As we see in the next Proposition, this need not be the case in general. 
\begin{restatable}{proposition}{propFullIsolation}\label{prop:FullIsolation}
    Under a FFRCISTG model and Assumption \ref{ass:Faithfulness},  if Assumption \ref{ass:DCC} holds under a partition such that $W_D(G)=\emptyset$, then full isolation holds.
\end{restatable}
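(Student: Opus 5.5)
Full isolation means $V_S$ partial isolation (Definition~\ref{def:XPI}) together with $V$ partial isolation (Definition~\ref{def:DPI}), and I will establish the two halves separately. The first half is immediate: Assumption~\ref{ass:DCC} holds for the given partition, so Proposition~\ref{prop:XPI} (equivalently Lemma~\ref{lemma:XPI} applied to condition~\eqref{eq:DCC_D}) shows that there is no directed path from $V_S(G)$ to $D(G)$. Since $V_S$ is a source node in $\mathcal{G}_{(G)}$, the only difference between $\mathcal{G}_{(G)}$ and the extended DAG is the removed incoming edges into $V_S$. Hence the directed paths out of $V_S$ are the same in both graphs, and $V_S$ partial isolation holds in the extended DAG.

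For the second half I use Lemma~\ref{lemma:DPI}. Under Assumption~\ref{ass:Faithfulness}, conditions~\eqref{eq:DCC_X} and~\eqref{eq:DCC_WX} imply that every causal path from $V(G)$ to $S(G)$ that avoids $D(G)$ must pass through $W_D(G)$. Because $W_D(G)=\emptyset$, no such path exists, so every causal path from $V(G)$ to $S(G)$ intersects $D(G)$. With $W_D=\emptyset$ we have $W=W_S$, so the enumeration of paths (a)--(e) in that lemma reduces to (a) and (e). Both are excluded: the first by the $d$-separation implied by~\eqref{eq:DCC_X}, the second by that implied by~\eqref{eq:DCC_WX}. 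I will make this reduction explicit so the argument does not rely on the empty-set reading of the lemma.

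The main obstacle is moving the conclusion from $\mathcal{G}_{(G)}$ back to the extended DAG, where Definition~\ref{def:DPI} is stated. In the extended DAG, $V$ can reach $S$ through the bold edge $V\to V_S$ followed by the paths out of $V_S$, for example $V\to V_S\to S$. These paths do not intersect $D$, and they are removed in $\mathcal{G}_{(G)}$. To handle this, I will read Definition~\ref{def:DPI} in the separable-effects sense of \cite{stensrud2021generalized}: it concerns the causal paths attributable to the component $V$, i.e.\ those not passing through $V_S$, which are exactly the paths retained in $\mathcal{G}_{(G)}$. Paths passing through $V_S$ are the effects of $V_S$ itself and are governed by the first half. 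I would state this convention explicitly in the proof. Once it is in place, the correspondence between the edge sets of the two graphs outside the edges into $V_S$ closes the argument.
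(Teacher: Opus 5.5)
Your proposal is correct and follows the paper's proof, which simply combines Lemma~\ref{lemma:XPI} (giving $V_S$ partial isolation) with Lemma~\ref{lemma:DPI} under $W_D(G)=\emptyset$ (giving $V$ partial isolation). The paper leaves two points implicit: the transfer from $\mathcal{G}_{(G)}$ to the extended DAG, and the reading of Definition~\ref{def:DPI} that excludes paths through $V_S$ (read literally, $V\to V_S\to S$ would violate it whenever $V_S$ affects $S$); your explicit convention is exactly what the paper's one-line argument tacitly uses.
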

\begin{proof}
    Follows directly from Lemmas  \ref{lemma:XPI} and  \ref{lemma:DPI},  and the definition of full isolation.
\end{proof}

This means that if, for example, $W_D=\emptyset$, the identification conditions imply full isolation, and changes in the distribution of $S^{v_S}\mid \{V=v,D=d\}$ w.r.t.\ $v_S$ capture all effects of $V_S$, none of which are mediated by $D$. If $W_D\neq\emptyset$, the former assumptions do not forbid the existence of paths $V\to W_D\to S$ in the extended causal DAG (see Lemma \ref{lemma:DPI}). 

Now that we have stated the required assumption and discussed their relation to partial isolation, we can present the main identification result. 

\begin{restatable}{theorem}{thmIdentification}\label{thm:Identification}
    If Assumptions \ref{ass:Positivity}-\ref{ass:Faithfulness} hold under a FFRCISTG model,  the counterfactual distribution of interest is identified as 
    \begin{align}\label{eq:Identification}
        \mathbb{P}(S^{v_S}\leq s\mid V=v,D=d)=&\sum_{l,w}\mathbb{P}(S\leq s\mid V=v_S,D=d,L=l,W=w)\\&\times \mathbb{P}(W_D=w_D\mid V=v,D=d,W_S=w_S,L=l)\nonumber\\&\times \mathbb{P}(W_S=w_S\mid V=v_S,D=d,L=l)\nonumber\\&\times \mathbb{P}(L=l\mid V=v,D=d),\nonumber
    \end{align}
    for $s$ in the support of $S$.
\end{restatable}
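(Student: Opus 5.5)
We go through the randomized experiment $(G)$ in which $V_S$ is assigned independently of everything else, and then translate back to the observed data where $V_S\equiv V$.

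The first step is to rewrite the counterfactual target. By $V_S$ partial isolation (Proposition \ref{prop:XPI}, which follows from Assumptions \ref{ass:DCC} and \ref{ass:Faithfulness}), intervening on $V_S$ does not alter $(L,V,D)$. Hence the factual event $\{V=v,D=d\}$ has the same law in $(G)$ as in the observed world. Combining this with consistency (Assumption \ref{ass:Consistency}) and the randomization of $V_S(G)$, we aim to show
\begin{align*}
\mathbb{P}(S^{v_S}\leq s\mid V=v,D=d)=\mathbb{P}\bigl(S(G)\leq s\mid V_S(G)=v_S,V(G)=v,D(G)=d\bigr).
\end{align*}
Here positivity in the experiment holds by design, and \eqref{eq:PositivityBase} makes the conditioning well defined.

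The second step disintegrates the right-hand side over $L$, $W_S$ and $W_D$ via the chain rule, in the ordering $L,W_S,W_D,S$. This produces four factors, each conditional on $(V_S(G),V(G),D(G))$ and the preceding variables:
\begin{itemize}
\item For $L(G)$: since $L$ precedes $V_S$ and $V_S(G)$ is randomized, $L(G)\independent V_S(G)\mid V(G),D(G)$. (This uses \eqref{eq:DCC_D}, so that conditioning on $D$ does not open a path to $V_S$.) This gives $\mathbb{P}(L=l\mid V=v,D=d)$.
\item For $W_S$: condition \eqref{eq:DCC_WX} lets us drop $V(G)$.
\item For $W_D$: condition \eqref{eq:DCC_WD} lets us drop $V_S(G)$.
\item For $S$: condition \eqref{eq:DCC_X} lets us drop $V(G)$.
\end{itemize}

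The third step maps each factor back to the observed data. After dropping a variable, every factor depends on only one of $V$ and $V_S$. In the observed world $V=V_S$, and the joint law of the non-intervened variables given $V_S(G)=V(G)=u$ coincides with the observed law given $V=u$. So we may replace $V_S(G)=v_S$ by $V=v_S$ in the $S$ and $W_S$ factors, and $V(G)=v$ by $V=v$ in the $W_D$ and $L$ factors. This yields exactly \eqref{eq:Identification}. Positivity \eqref{eq:PositivityV} guarantees that the strata $\{V=v_S,L=l,W=w,D=d\}$ have positive probability, so the observed conditionals are well defined. The reverse direction also holds: for any $(l,w,d)$ appearing with $V=v$, the same stratum appears with $V=v_S$.

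The main obstacle is making rigorous the claim that $(G)$ restricted to $V_S=V$ reproduces the observed law, and that the factual conditioning event transports across worlds. I would handle this with the FFRCISTG single-world factorization. The $(G)$ distribution with $V_S$ set to $v_S$ equals the observed counterfactual law under the intervention, and on $\{V=v_S\}$ this reduces to the factual law by consistency. The first-step identity for $D$ would be handled the same way, using the absence of directed paths $V_S\to D$ (Lemma \ref{lemma:XPI}) to obtain $D^{v_S}=D$ in distribution jointly with $(L,V)$.
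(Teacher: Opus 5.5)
Your proposal is correct and follows essentially the same route as the paper. Both arguments pass to the experiment $(G)$ and use the randomization of $V_S(G)$, the absence of directed paths from $V_S$ to $D$, and consistency; they then disintegrate over $L,W_S,W_D$, drop $V$ or $V_S$ factor by factor via the dismissible component conditions, and map back through the determinism $V=V_S$. The difference is only in ordering: you condition on $V_S(G)=v_S$, apply consistency before disintegrating, and justify the $L$ factor by combining randomization with \eqref{eq:DCC_D}, whereas the paper disintegrates the counterfactual law first and then inserts $V_S(G)=v_S$ via SWIG modularity.
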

\begin{proof}
    Identification goes as follows
    \begin{align*}
        &\mathbb{P}(S^{v_S}\leq s\mid V=v,D=d)\\\stackrel{\rm (I)}{=}&\mathbb{P}(S^{v_S}(G)\leq s\mid V(G)=v,D(G)=d)\\
        \stackrel{\rm (II)}{=}&\sum_{l,w}\mathbb{P}(S^{v_S}(G)\leq s\mid V(G)=v,D(G)=d,L(G)=l,W^{v_S}(G)=w)\\&\times \mathbb{P}(W_D^{v_S}(G)=w_D\mid V(G)=v,D(G)=d,W_S^{v_S}(G)=w_S,L(G)=l)\\&\times \mathbb{P}(W_S^{v_S}(G)=w_S\mid V(G)=v,D(G)=d,L(G)=l)\\&\times \mathbb{P}(L(G)=l\mid V(G)=v,D(G)=d)\\
        \stackrel{\rm (III)}{=}&\sum_{l,w}\mathbb{P}(S^{v_S}(G)\leq s\mid V_S(G)=v_S, V(G)=v,D(G)=d,L(G)=l,W^{v_S}(G)=w)\\&\times \mathbb{P}(W_D^{v_S}(G)=w_D\mid V_S(G)=v_S, V(G)=v,D(G)=d,W_S^{v_S}(G)=w_S,L(G)=l)\\&\times \mathbb{P}(W_S^{v_S}(G)=w_S\mid V_S(G)=v_S, V(G)=v,D(G)=d,L(G)=l)\\&\times \mathbb{P}(L(G)=l\mid V(G)=v,D(G)=d)\\
        \stackrel{\rm (IV)}{=}&\sum_{l,w}\mathbb{P}(S(G)\leq s\mid V_S(G)=v_S, V(G)=v,D(G)=d,L(G)=l,W(G)=w)\\&\times \mathbb{P}(W_D(G)=w_D\mid V_S(G)=v_S, V(G)=v,D(G)=d,W_S(G)=w_S,L(G)=l)\\&\times \mathbb{P}(W_S(G)=w_S\mid V_S(G)=v_S, V(G)=v,D(G)=d,L(G)=l)\\&\times \mathbb{P}(L(G)=l\mid V(G)=v,D(G)=d)\\
        \stackrel{\rm (V)}{=}&\sum_{l,w}\mathbb{P}(S(G)\leq s\mid V_S(G)=v_S, V(G)=v_S,D(G)=d,L(G)=l,W(G)=w)\\&\times \mathbb{P}(W_D(G)=w_D\mid V_S(G)=v, V(G)=v,D(G)=d,W_S(G)=w_S,L(G)=l)\\&\times \mathbb{P}(W_S(G)=w_S\mid V_S(G)=v_S, V(G)=v_S,D(G)=d,L(G)=l)\\&\times \mathbb{P}(L(G)=l\mid V(G)=v,D(G)=d)\\
        \stackrel{\rm (VI)}{=}&\sum_{l,w}\mathbb{P}(S\leq s\mid V=v_S,D=d,L=l,W=w)\\&\times \mathbb{P}(W_D=w_D\mid V=v,D=d,W_S=w_S,L=l)\\&\times \mathbb{P}(W_S=w_S\mid V=v_S,D=d,L=l)\\&\times \mathbb{P}(L=l\mid V=v,D=d),
    \end{align*}
    where each equality follows from the following
    \begin{itemize}
        \item[(I)] Follows from the definition of $(G)$ and from Lemmas \ref{lemma:XPI} and \ref{lemma:DPI}. As in $(G)$ it holds that $V(G)$ is not a descendant of $V_S(G)$ (by construction), nor is $D(G)$ (by Lemma \ref{lemma:XPI} under the identification assumptions), neither of these variables gets labelled in the SWIG derived from $\mathcal{G}_{(G)}$ under an intervention $V_S(G)=v_S$ \cite{richardson2013single}.
        \item[(II)] Law of total probability.
        \item[(III)] Follows again from the definition of $(G)$. As $V_S$ is randomized in $(G)$, $V_S(G)$ has no parents in $\mathcal{G}_{(G)}$. Thus, in the SWIG derived from $\mathcal{G}_{(G)}$ considered under an intervention $V_S(G)=v_S$, the node-splitting procedure leaves the random node $V_S(G)$ without parents or descendants, which then renders it independent of any other variable represented by a random node in this SWIG due to modularity \cite{richardson2013single}.
        \item[(IV)] Follows from consistency under interventions on $V_S$.
        \item[(V)] Follows from positivity and the first three dismissible component conditions (Assumption~\ref{ass:DCC}).
        \item[(VI)] Follows from the definition of $(G)$, positivity, and the determinisms that holds in the observed data relating to the two components $V$ and $V_S$.
    \end{itemize}
\end{proof}

We refer to the right-hand-side of Equation \eqref{eq:Identification} as the g-formula for $\mathbb{P}(S^{v_S}\leq s\mid V=v,D=d)$ \cite{robins1986new}, which is expressed in terms of factual quantities. Thus, Theorem \ref{thm:Identification} provides sufficient conditions to identify the targetted quantities for both our notions of demographic insensitivity (Definitions \ref{def:FDP}-\ref{def:FDP_weak}).

\subsection{Estimation when $V$ is discrete}\label{sec:FDP_Estimation}
As we have shown that under appropriate assumptions the observed data provides us with enough information to identify the counterfactual distribution of interest, we will now propose three estimators for it. For a fixed $s$ in the support of $S$ denote by $\nu(\mathbb{P},s)$ the right-hand-side of Equation \eqref{eq:Identification} in Theorem \ref{thm:Identification}. We begin by defining a simple, plug-in estimator. Let $\Tilde{\mathbb{P}}$ be posited (possibly parametric) models for the conditional distributions of $S,W_D,W_S,L$ as they appear in Theorem \ref{thm:Identification}, and $\Tilde{\mathbb{P}}_n$ be these models fitted on the available data. We can then construct the simple estimator as $\widehat{\nu}_{\rm simple}(s)=\nu(\Tilde{\mathbb{P}}_n,s)$. If the models $\Tilde{\mathbb{P}}$ are correctly specified and consistently estimated, then $\widehat{\nu}_{\rm simple}(s)$ is consistent for $\mathbb{P}(S^{v_S}\leq s\mid V=v,D=d)$ under the identification conditions of Theorem \ref{thm:Identification}.

The main burden when computing the simple estimator is the need to fit outcome models for all the variables in the system except those in $V$. This is somehow alleviated by weighted estimators. To that extent, consider the next theorem. We will assume for the remainder of this sub-section that $V$ has a discrete support. We provide alternative estimators which apply otherwise in Appendix \ref{sec:AppIPW_Continuous}. 

\begin{restatable}{theorem}{thmIPWS}\label{thm:IPW_S}
    Under the conditions of Theorem \ref{thm:Identification} an equivalent identification formula is
    \begin{align}\label{eq:IPW_S}
        \mathbb{P}(S^{v_S}\leq s\mid V=v,D=d)=\mathbb{E}[I(S\leq s)H_{S}H_{W_S}\mid V=v,D=d],
    \end{align}
    where we have defined the weights
    \begin{align*}
        H_S&=\dfrac{\mathbb{P}(S\leq s\mid V=v_S,D=d,L,W)}{\mathbb{P}(S\leq s\mid V=v,D=d,L,W)},\\
        H_{W_S}&=\dfrac{\mathbb{P}(W_S=W_S\mid V=v_S,D=d,L)}{\mathbb{P}(W_S=W_S\mid V=v,D=d,L)}.
    \end{align*}
\end{restatable}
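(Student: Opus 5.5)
The plan is to start from the right-hand side of Equation \eqref{eq:IPW_S}, reduce it by iterated expectations to a finite sum over $(l,w)$, and check that this sum coincides term by term with the g-formula of Theorem \ref{thm:Identification}. Theorem \ref{thm:Identification} already gives the identification of $\mathbb{P}(S^{v_S}\leq s\mid V=v,D=d)$, so no further causal argument is needed. The claim is a purely observational equality between two functionals of $\mathbb{P}$.

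\textbf{Step 1 (conditioning on $(L,W)$).} The weights $H_S$ and $H_{W_S}$ are deterministic functions of $(L,W)$, with $s$, $v$, $v_S$ and $d$ held fixed. Conditioning on $(L,W)$ within the stratum $\{V=v,D=d\}$ therefore gives
$$\mathbb{E}[I(S\leq s)H_SH_{W_S}\mid V=v,D=d]=\sum_{l,w}\mathbb{P}(S\leq s\mid V=v,D=d,L=l,W=w)\,H_S(l,w)\,H_{W_S}(l,w_S)\,\mathbb{P}(L=l,W=w\mid V=v,D=d).$$
Multiplying $\mathbb{P}(S\leq s\mid V=v,D=d,L=l,W=w)$ by $H_S(l,w)$ cancels the denominator of $H_S$. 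What remains is $\mathbb{P}(S\leq s\mid V=v_S,D=d,L=l,W=w)$, which is exactly the first factor of the g-formula.

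\textbf{Step 2 (factorizing the joint law).} Using $W=(W_S,W_D)$ and the chain rule, I write
$$\mathbb{P}(L=l,W=w\mid v,d)=\mathbb{P}(W_D=w_D\mid v,d,w_S,l)\,\mathbb{P}(W_S=w_S\mid v,d,l)\,\mathbb{P}(L=l\mid v,d),$$
with the conditioning events abbreviated. Multiplying by $H_{W_S}(l,w_S)$ replaces $\mathbb{P}(W_S=w_S\mid V=v,D=d,L=l)$ by $\mathbb{P}(W_S=w_S\mid V=v_S,D=d,L=l)$. The $W_D$ and $L$ factors are left evaluated at $V=v$, as the g-formula requires. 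Summing over $(l,w)$ then yields the right-hand side of Equation \eqref{eq:Identification}, and Theorem \ref{thm:Identification} finishes the argument.

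\textbf{Main obstacle: well-definedness of the weights.} All the algebra above is routine; the only delicate point is that the weights must be well defined on the relevant events.
\begin{itemize}
\item Assumption \ref{ass:Positivity}, specifically condition \eqref{eq:PositivityV}, ensures that the conditioning events $\{V=v_S,D=d,L=l,W=w\}$ in the numerators have positive probability whenever $(l,w)$ has positive probability given $\{V=v,D=d\}$. Since the sum only runs over such $(l,w)$, all conditional probabilities appearing in the numerators are well defined.
\item The denominator of $H_S$ can vanish when $\mathbb{P}(S\leq s\mid v,d,l,w)=0$. In that case the integrand $I(S\leq s)$ is almost surely zero on that stratum, so the term drops out of the left-hand side.
\item That dropped term is not automatically zero on the g-formula side, because the numerator $\mathbb{P}(S\leq s\mid V=v_S,D=d,l,w)$ need not vanish. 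Handling this needs either a convention or a strengthened positivity condition, and I am not sure the stated assumptions provide it.
\end{itemize}
I would first try to adopt the convention that $\mathbb{P}(S\leq s\mid V=v,D=d,L=l,W=w)>0$ on the support, that is, a mild positivity requirement for the chosen $s$. I would state this explicitly and note that it holds automatically if the support of $S$ given $(V,D,L,W)$ does not depend on $V$.
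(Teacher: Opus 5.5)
Your proposal is correct and follows the paper's proof essentially line by line. You condition on $(L,W)$ within $\{V=v,D=d\}$, factor $\mathbb{P}(L=l,W=w\mid V=v,D=d)$ into the $W_D$, $W_S$ and $L$ conditionals, and let the denominators of $H_S$ and $H_{W_S}$ cancel to recover the right-hand side of Equation \eqref{eq:Identification}.

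Your caveat about the denominator of $H_S$ is a genuine point that the paper's proof passes over silently, since Assumption \ref{ass:Positivity} does not rule out $\mathbb{P}(S\leq s\mid V=v,D=d,L,W)=0$. Strictly, you only need that this vanishing forces $\mathbb{P}(S\leq s\mid V=v_S,D=d,L,W)=0$ as well; a common discrete support of $S$ across $V$ given $(D,L,W)$ guarantees this. Stating that condition explicitly closes the argument.
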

\begin{proof}
    We see that
    \begin{align*}
        &\mathbb{E}[I(S\leq s)H_{S}H_{W_S}\mid V=v,D=d]\\=&\mathbb{E}[\mathbb{E}[I(S\leq s)\mid L,W,V=v,D=d]H_{S}H_{W_S}\mid V=v,D=d]\\
        =&\sum_{l,w}\mathbb{P}(S\leq s\mid L=l,W=w,V=v,D=d)H_S\\
        &\qquad \times\mathbb{P}(W_D=w_D\mid L=l,W_S=w_S,V=v,D=d)\\
        &\qquad \times\mathbb{P}(W_S=w_S\mid L=l,V=v,D=d)H_{W_S}\\
        &\qquad \times\mathbb{P}(L=l\mid V=v,D=d)\\
        =&\sum_{l,w}\mathbb{P}(S\leq s\mid L=l,W=w,V=v_S,D=d)\\
        &\qquad \times\mathbb{P}(W_D=w_D\mid L=l,W_S=w_S,V=v,D=d)\\
        &\qquad \times\mathbb{P}(W_S=w_S\mid L=l,V=v_S,D=d)\\
        &\qquad \times\mathbb{P}(L=l\mid V=v,D=d),
    \end{align*}
    which equals the right-hand-side of Equation \eqref{eq:Identification}, which under the conditions of Theorem \ref{thm:Identification} equals $\mathbb{P}(S^{v_S}\leq s\mid V=v,D=d)$. This concludes the proof.
\end{proof}

Equation \eqref{eq:IPW_S} motivates the weighted estimator based on posited models for the quantities in the weights $H_{S},H_{W_S}$. For fixed $s$ in the support of $S$ consider
$$\widehat{\nu}_{{\rm weighted},S}(s)=\widehat{\mathbb{E}}_n\left[\dfrac{I(S\leq s,V=v,D=d)}{\widehat{\mathbb{P}}_n(V=v,D=d)}\Tilde{H}_S\Tilde{H}_{W_S}\right],$$
where $\widehat{\mathbb{P}}_n(V=v,D=d)=\widehat{\mathbb{E}}_n[I(V=v,D=d)]$ and $\widehat{\mathbb{E}}_n$ denotes an empirical mean. Under the conditions of Theorem \ref{thm:Identification}, $\widehat{\nu}_{{\rm weighted},S}(s)$ is a consistent estimator for the c.d.f.\ of the counterfactual distribution of interest as long as the models for  the conditional distributions of $S$ and $W_S$ are correctly specified and consistently estimated.  Alternatively, we propose a weighted estimator which requires specification of outcome models for $L$ and $W_D$ instead. It is based on the next Theorem.

\begin{restatable}{theorem}{thmIPWD}\label{thm:IPW_D}
    Under the conditions of Theorem \ref{thm:Identification} an equivalent identification formula is
    \begin{align}\label{eq:IPW_D}
        \mathbb{P}(S^{v_S}\leq s\mid V=v,D=d)&=\mathbb{E}[I(S\leq s)H_LH_{W_D}\mid V=v_S,D=d],
    \end{align}
    where we have defined the weights
    \begin{align*}
        H_{W_D}&=\dfrac{\mathbb{P}(W_D=W_D\mid V=v,D=d,L,W_S)}{\mathbb{P}(W_D=W_D\mid V=v_S,D=d,L,W_S)},\\
        H_{L}&=\dfrac{\mathbb{P}(L=L\mid V=v,D=d)}{\mathbb{P}(L=L\mid V=v_S,D=d)}.
    \end{align*}
\end{restatable}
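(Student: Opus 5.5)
The plan mirrors the proof of Theorem \ref{thm:IPW_S}, but with the roles of the two conditioning strata exchanged. We start from the right-hand side of Equation \eqref{eq:IPW_D} and condition on $V=v_S,D=d$. We then show that reweighting by $H_LH_{W_D}$ turns the factual joint law of $(L,W_S,W_D,S)$ given $\{V=v_S,D=d\}$ into the mixture that appears in the g-formula of Theorem \ref{thm:Identification}. Since that theorem identifies the g-formula with $\mathbb{P}(S^{v_S}\leq s\mid V=v,D=d)$, the result then follows.

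The first step is to apply the law of iterated expectations, conditioning on $(L,W)$ inside the stratum $\{V=v_S,D=d\}$. The weights $H_L$ and $H_{W_D}$ are functions of $(L,W)$ alone, so they factor out of the inner expectation. This gives
\begin{align*}
\mathbb{E}[I(S\leq s)H_LH_{W_D}\mid V=v_S,D=d]=\sum_{l,w}&\mathbb{P}(S\leq s\mid L=l,W=w,V=v_S,D=d)\\
&\times\mathbb{P}(W_D=w_D\mid L=l,W_S=w_S,V=v_S,D=d)\,H_{W_D}\\
&\times\mathbb{P}(W_S=w_S\mid L=l,V=v_S,D=d)\\
&\times\mathbb{P}(L=l\mid V=v_S,D=d)\,H_L,
\end{align*}
where we have used the chain-rule factorization of the discrete joint law of $(L,W_S,W_D)$ given $\{V=v_S,D=d\}$ in the topological order $L,W_S,W_D$. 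In the second step we substitute the definitions of the weights, evaluated at $(l,w)$. The factor $H_{W_D}$ cancels the conditional probability of $W_D$ given $V=v_S$ and replaces it with the one given $V=v$. Likewise, $H_L$ replaces $\mathbb{P}(L=l\mid V=v_S,D=d)$ with $\mathbb{P}(L=l\mid V=v,D=d)$. The terms for $S$ and $W_S$ remain conditioned on $V=v_S$, exactly as in Equation \eqref{eq:Identification}. The resulting sum is therefore the right-hand side of Equation \eqref{eq:Identification}, and Theorem \ref{thm:Identification} finishes the argument.

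The step that needs the most care is ensuring that every term is well defined. The denominators of $H_L$ and $H_{W_D}$ must be positive on the support of the stratum $\{V=v,D=d\}$, not only on that of $\{V=v_S,D=d\}$. Otherwise the reweighting would omit mass that the g-formula assigns to values $(l,w)$ that are possible under $V=v$ but not under $V=v_S$. For this we invoke the positivity condition \eqref{eq:PositivityV} of Assumption \ref{ass:Positivity}. Whenever $\mathbb{P}(L=l,W=w,D=d)>0$, this condition gives $f_{V\mid L,W,D}(v_S\mid l,w,d)>0$, and by Bayes' rule it yields positivity of $\mathbb{P}(L=l\mid V=v_S,D=d)$ and $\mathbb{P}(W_D=w_D\mid L=l,W_S=w_S,V=v_S,D=d)$. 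Positivity in the opposite direction likewise ensures that the sum over the support under $v_S$ covers every term of the g-formula. Condition \eqref{eq:PositivityBase} guarantees that both conditioning events have positive probability. With these checks in place, the rest of the argument is routine algebra.
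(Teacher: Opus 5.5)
Your proposal is correct and is essentially the paper's argument. The paper proves Theorem \ref{thm:IPW_D} by saying it follows analogously to Theorem \ref{thm:IPW_S}: condition on $(L,W)$ inside $\{V=v_S,D=d\}$, factorize the law of $(L,W_S,W_D)$, and let $H_L$ and $H_{W_D}$ replace the $v_S$-conditioned factors for $L$ and $W_D$ by their $v$-conditioned counterparts, recovering the right-hand side of \eqref{eq:Identification}. Your positivity check is extra care the paper omits, but its last sentence mislabels the directions. The implication from the $v$-stratum to $v_S$ in \eqref{eq:PositivityV} is what makes the reweighted sum cover every term of the g-formula, whereas the opposite direction is what makes the numerator of $H_{W_D}$ well defined on the support of $\{V=v_S,D=d\}$.
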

\begin{proof}
    Follows analogously to the proof of Theorem \ref{thm:IPW_S}.
\end{proof}
Analogously to the weighted estimator formerly presented, Equation \eqref{eq:IPW_D} motivates the weighted estimator based on posited models for the quantities in the weights $H_{L},H_{W_D}$. For fixed $s$ in the support of $S$ consider
$$\widehat{\nu}_{{\rm weighted},D}(s)=\widehat{\mathbb{E}}_n\left[\dfrac{I(S\leq s,V=v_S,D=d)}{\widehat{\mathbb{P}}_n(V=v_S,D=d)}\Tilde{H}_L\Tilde{H}_{W_D}\right].$$
Under the conditions of Theorem \ref{thm:Identification}, $\widehat{\nu}_{{\rm weighted},D}(s)$ is a consistent estimator for the c.d.f.\ of the counterfactual distribution of interest as long as the models for the  conditional distributions of $L$ and $W_D$ are correctly specified and consistently estimated.

Lastly, we introduce a one-step estimator, defined as $$\widehat{\nu}_{\rm OS}(s)=\widehat{\mathbb{E}}_n[\nu^1(\Tilde{\mathbb{P}}_n,s)]+\nu(\Tilde{\mathbb{P}}_n,s),$$ where $\nu^1(\cdot,s)$ is $\nu(\cdot,s)$'s influence function.

\begin{definition}[Influence function of an estimand, adapted from \cite{tsiatis2007semiparametric}] \label{def:IF}
    We define the influence function $\chi^1$ of a (pathwise differentiable) estimand $\chi$ as a random variable with mean zero and finite variance such that for every (regular) parametric submodel $\{\mathbb{P}_t\colon t\in[0,1)\}$ it satisfies $$\frac{d\chi(\mathbb{P}_t)}{dt}\bigg |_{t=0}=\mathbb{E}[\chi^1g],$$ where $g$ is the score of the true law $\mathbb{P}_0$.
\end{definition}

\begin{restatable}{theorem}{thmEIF}\label{thm:EIF}
    The influence function of $\nu(\mathbb{P},s)$ is 
    \begin{align}\label{eq:EIF}
        \nu^1(\mathbb{P},s) &=\dfrac{I(\varphi(v))}{\mathbb{E}[I(\varphi(v))]}\{\mathbb{E}[\mathbb{E}[\mathbb{E}[I(S\leq s)\mid L,W,\varphi(v_S)]\mid L,W_S,\varphi(v)]\mid L,\varphi(v_S)]-\nu\}\\
        &+\dfrac{I(\varphi(v_S))}{\mathbb{E}[I(\varphi(v_S))]}H_L\left\{\mathbb{E}[\mathbb{E}[I(S\leq s)\mid L,W,\varphi(v_S)]\mid L,W_S,\varphi(v)]\right.\nonumber\\
        &\qquad\qquad-\left.\mathbb{E}[\mathbb{E}[\mathbb{E}[I(S\leq s)\mid L,W,\varphi(v_S)]\mid L,W_S,\varphi(v)]\mid L,\varphi(v_S)]\right\}\nonumber\\
        &+\dfrac{I(\varphi(v))}{\mathbb{E}[I(\varphi(v))]}H_{W_S}\{\mathbb{E}[I(S\leq s)\mid L,W,\varphi(v_S)]\nonumber\\
        &\qquad\qquad-\mathbb{E}[\mathbb{E}[I(S\leq s)\mid L,W,\varphi(v_S)]\mid L,W_S,\varphi(v)]\}\nonumber\\
        &+\dfrac{I(\varphi(v_S))}{\mathbb{E}[I(\varphi(v_S))]}H_LH_{W_D}\{I(S\leq s)-\mathbb{E}[I(S\leq s)\mid L,W,\varphi(v_S)]\},\nonumber
    \end{align}
    where we have defined $\varphi(v)=\{V=v,D=d\}$ and $H_L,H_{W_S},H_{W_D}$ are taken as in Theorems \ref{thm:IPW_D} and \ref{thm:IPW_S}.
\end{restatable}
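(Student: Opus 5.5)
We compute the influence function by the standard Gateaux-derivative (pathwise) argument applied to the g-formula $\nu(\mathbb{P},s)$ written as a nested iterated expectation. The first step is to rewrite the right-hand side of Equation \eqref{eq:Identification} in nested form. Define
\begin{align*}
\mu_S(L,W)&=\mathbb{E}[I(S\leq s)\mid L,W,\varphi(v_S)],\\
\mu_{W_S}(L,W_S)&=\mathbb{E}[\mu_S(L,W)\mid L,W_S,\varphi(v)],\\
\mu_L(L)&=\mathbb{E}[\mu_{W_S}(L,W_S)\mid L,\varphi(v_S)].
\end{align*}
With these, $\nu=\mathbb{E}[\mu_L(L)\mid\varphi(v)]$. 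Note that the conditioning events alternate: $S$ is evaluated at $v_S$, $W_D$ at $v$, $W_S$ at $v_S$, and $L$ at $v$. This matches the formula, since the averaging over $W_D$ given $(L,W_S)$ is taken under $\varphi(v)$.

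Next, consider a regular parametric submodel $\mathbb{P}_t$ with score $g$. Factorize the joint density as $f(L,V,D)\,f(W_S\mid L,V,D)\,f(W_D\mid L,W_S,V,D)\,f(S\mid L,W,V,D)$, so that $g$ splits into orthogonal pieces. Differentiating $\nu(\mathbb{P}_t)$ by the product rule gives four terms, one for each layer. For each term we use the elementary fact that the derivative of a conditional mean $\mathbb{E}_t[h\mid A,\varphi]$ along the submodel equals
\[
\mathbb{E}\bigl[\tfrac{I(\varphi)}{\mathbb{P}(\varphi\mid A)}\,(h-\mathbb{E}[h\mid A,\varphi])\,g\bigr].
\]
The derivative of the outer mean $\mathbb{E}_t[\mu_L\mid\varphi(v)]$ contributes $\frac{I(\varphi(v))}{\mathbb{P}(\varphi(v))}(\mu_L-\nu)$, which is the first line of \eqref{eq:EIF}.

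Each inner term is then moved to the outer measure by a change of measure. For example, the $L$-layer derivative is an expectation over $L\mid\varphi(v)$ of a quantity living on $\varphi(v_S)$. Converting it yields the factor
\[
\frac{\mathbb{P}(L\mid\varphi(v))\,/\,\mathbb{P}(L\mid\varphi(v_S))\cdot I(\varphi(v_S))}{\mathbb{P}(\varphi(v_S))}=\frac{I(\varphi(v_S))}{\mathbb{E}[I(\varphi(v_S))]}H_L,
\]
multiplied by $(\mu_{W_S}-\mu_L)$. This gives the second line.

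The $W_S$-layer term, which is evaluated at $\varphi(v)$, requires reweighting the $L$-distribution from $v$ back to $v$; that ratio is trivial. It also requires reweighting the $W_S$-distribution from $v_S$ to $v$, which is exactly where $H_{W_S}$ comes from. The third line, $(\mu_S-\mu_{W_S})$, follows. Finally, the $S$-layer at $\varphi(v_S)$ picks up $H_L$ together with the ratio for $W_D$, namely $f(W_D\mid\cdot,v)/f(W_D\mid\cdot,v_S)=H_{W_D}$. The $W_S$ ratio cancels because $W_S$ is already averaged under $v_S$ at that stage. This gives the last line.

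The main obstacle is the bookkeeping of these density ratios: one has to track which $V$-value each layer is evaluated at and confirm that the ratios telescope to exactly $H_L$, $H_{W_S}$, and $H_LH_{W_D}$ as defined in Theorems \ref{thm:IPW_S} and \ref{thm:IPW_D}. Positivity (Assumption \ref{ass:Positivity}) guarantees that these ratios are well defined. To close the argument, we check that each term has mean zero, because it is a centered residual within its layer, and has finite variance under positivity. We then verify that the sum reproduces $\frac{d}{dt}\nu(\mathbb{P}_t)|_{t=0}=\mathbb{E}[\nu^1 g]$ for every $g$, as Definition \ref{def:IF} requires.
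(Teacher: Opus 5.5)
Your proposal is correct and follows essentially the same route as the paper's proof. Like the paper, you write $\nu$ as the iterated conditional expectation with alternating conditioning events, differentiate along a submodel layer by layer, express each conditional-expectation derivative as an inverse-probability-weighted centred residual times the score, and turn the accumulated ratios into $H_L$, $H_{W_S}$ and $H_LH_{W_D}$ via Bayes' rule. The only slip is notational: your \emph{elementary fact} should read $\mathbb{E}\bigl[\tfrac{I(\varphi)}{\mathbb{P}(\varphi\mid A)}(h-\mathbb{E}[h\mid A,\varphi])\,g\mid A\bigr]$, and the outer averaging over $A$ is then supplied by the change of measure you describe.
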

\begin{proof}
    For a fixed $s$ in the support of $S$ denote $Y=I(S\leq s)$, and we drop the second argument in $\nu(\mathbb{P},s)$ as it will remain fixed. We begin by noticing that we can write this quantity as an iterated conditional expectation (ICE)
    $$\nu(\mathbb{P})=\mathbb{E}[\mathbb{E}[\mathbb{E}[\mathbb{E}[Y \mid L,W, \varphi(v_S)]\mid L,W_S, \varphi(v)]\mid L, \varphi(v_S)]\mid \varphi(v)].$$
    Taking Gateaux derivatives within a parametric submodel and applying the chain rule, we get
    \begin{subequations}\label{eq:GateauxDerivative}
        \begin{align}
            \dfrac{d\nu(\mathbb{P}_t)}{dt}\bigg|_{t=0}&=\nonumber\\
            &=\mathbb{E}[\mathbb{E}[\mathbb{E}[\mathbb{E}[Y g_{Y\mid L,W, \varphi(v_S)} \mid L,W, \varphi(v_S)]\mid L,W_S, \varphi(v)]\mid L, \varphi(v_S)]\mid \varphi(v)]\label{eq:GateauxDerivative_1}\\
            &+\mathbb{E}[\mathbb{E}[\mathbb{E}[\mathbb{E}[Y \mid L,W, \varphi(v_S)]g_{W_D\mid L,W_S, \varphi(v)}\mid L,W_S, \varphi(v)]\mid L, \varphi(v_S)]\mid \varphi(v)]\label{eq:GateauxDerivative_2}\\
            &+\mathbb{E}[\mathbb{E}[\mathbb{E}[\mathbb{E}[Y \mid L,W, \varphi(v_S)]\mid L,W_S, \varphi(v)]g_{W_S\mid L, \varphi(v_S)}\mid L, \varphi(v_S)]\mid \varphi(v)]\label{eq:GateauxDerivative_3}\\
            &+\mathbb{E}[\mathbb{E}[\mathbb{E}[\mathbb{E}[Y \mid L,W, \varphi(v_S)]\mid L,W_S, \varphi(v)]\mid L, \varphi(v_S)]g_{L\mid  \varphi(v)}\mid \varphi(v)],\label{eq:GateauxDerivative_4}
    \end{align}
    \end{subequations}
    where by $g_A$ we denote the score function for $A$. We now apply the same procedure to each of the four addends in Equation \eqref{eq:GateauxDerivative}: centring, expressing conditioning as inverse probability weighting, and completing the score. 
    \begin{align*}
        \eqref{eq:GateauxDerivative_1}&=\mathbb{E}[\mathbb{E}[\mathbb{E}[\mathbb{E}[(Y-[\mathbb{E}[Y  \mid L,W, \varphi(v_S)]) g_{Y\mid L,W, \varphi(v_S)} \mid L,W, \varphi(v_S)]\mid L,W_S,\\&\qquad\qquad \varphi(v)]\mid L, \varphi(v_S)]\mid \varphi(v)]\\
        &=\mathbb{E}\left[\frac{I(\varphi(v))}{\mathbb{E}[I(\varphi(v))]}\mathbb{E}\left[\frac{I(\varphi(v_S))}{\mathbb{E}[I(\varphi(v_S))\mid L]}\mathbb{E}\left[\frac{I(\varphi(v))}{\mathbb{E}[I(\varphi(v))\mid L,W_S]}\times\right.\right.\right.\\
        &\qquad\qquad\mathbb{E}\left[\frac{I(\varphi(v_S))}{\mathbb{E}[I(\varphi(v_S))\mid L,W]}(Y-[\mathbb{E}[Y  \mid L,W, \varphi(v_S)])\times\right.\\
        &\qquad\qquad \left.\left.\left.\left.g_{Y\mid L,W,V,D} \mid L,W\right]\mid L,W_S\right]\mid L\right]\right]\\
        &=\mathbb{E}\left[\frac{\mathbb{E}[I(\varphi(v))\mid L]}{\mathbb{E}[I(\varphi(v))]}\frac{\mathbb{E}[I(\varphi(v_S))\mid L,W_S]}{\mathbb{E}[I(\varphi(v_S))\mid L]} \frac{\mathbb{E}[I(\varphi(v))\mid L,W]}{\mathbb{E}[I(\varphi(v))\mid L,W_S]}\times\right.\\
        &\left.\qquad\qquad\frac{I(\varphi(v_S))}{\mathbb{E}[I(\varphi(v_S))\mid L,W_S]}(Y-[\mathbb{E}[Y  \mid L,W, \varphi(v_S)])g_{Y\mid L,W,V,D}    \right]\\
        &=\mathbb{E}\left[\frac{I(\varphi(v_S))}{\mathbb{E}[I(\varphi(v_S))]}H_{W_D}H_L(Y-[\mathbb{E}[Y  \mid L,W, \varphi(v_S)])(g_{Y\mid L,W,V,D}+g_{L,W,D,V})\right]\\
        &=\mathbb{E}\left[\frac{I(\varphi(v_S))}{\mathbb{E}[I(\varphi(v_S))]}H_{W_D}H_L(Y-[\mathbb{E}[Y  \mid L,W, \varphi(v_S)])g\right],
    \end{align*}
    where the first equality follows from the fact that the score has mean zero what allows us to centre $Y$; the second equality follows from the discrete IPW identity $$\mathbb{E}[A\mid B,C=c]=\mathbb{E}\left[\frac{I(C=c)}{\mathbb{P}(C=c\mid B)}A\mid B\right]$$ applied to all iterated conditional expectations; the third equality follows from the identity $\mathbb{E}[A\mathbb{E}[B\mid C]]=\mathbb{E}[B\mathbb{E}[A\mid C]]$; the fourth equality follows by handling the inner IPW factor using Bayes' theorem to rewrite it as the weights in Theorems \ref{thm:IPW_D} and \ref{thm:IPW_S} together with adding the marginal score, as the factor multiplying the score in this expression has mean zero; and the last equality follows from additivity of the scores.

    We see  how the factor multiplying the score $g$ in the last equality is the fourth addend in the right-hand-side of Equation \eqref{eq:EIF}. Proceeding analogously with \eqref{eq:GateauxDerivative_2}-\eqref{eq:GateauxDerivative_4} we retrieve the remaining terms for $\nu$'s influence function. This concludes the proof.
\end{proof}

The one-step estimator corrects the simple one by the empirical mean of the estimand's influence function. By doing so, it achieves double-robustness as we see in the following theorem.

\begin{restatable}{theorem}{thmDR}\label{thm:DR}
    The one-step estimator $\widehat{\mathbb{E}}_n[\nu^1(\Tilde{\mathbb{P}}_n,s)]+\nu(\Tilde{\mathbb{P}}_n,s)$ is a doubly robust estimator of the  target quantity, meaning that under the identification conditions of Theorem \ref{thm:Identification}  it is consistent for the  probability of  interest if either the models for ${L,W_D}$ or  $S,W_S$ are correctly specified and consistently estimated, but not necessarily both.
\end{restatable}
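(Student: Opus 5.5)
The strategy is the standard von Mises argument. Write $\nu(\mathbb{P})$ as the iterated conditional expectation used in the proof of Theorem \ref{thm:EIF}. Let $\Tilde{\mathbb{P}}$ denote the probability limit of the fitted nuisances $\Tilde{\mathbb{P}}_n$. The one-step estimator then decomposes as
\begin{align*}
\widehat{\nu}_{\rm OS}(s)-\nu(\mathbb{P},s)=&(\widehat{\mathbb{E}}_n-\mathbb{E})[\nu^1(\Tilde{\mathbb{P}}_n,s)]\\
&+\bigl\{\mathbb{E}[\nu^1(\Tilde{\mathbb{P}}_n,s)]+\nu(\Tilde{\mathbb{P}}_n,s)-\nu(\mathbb{P},s)\bigr\}.
\end{align*}
The first term is an empirical-process term. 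It converges to zero in probability by a law of large numbers, provided $\nu^1(\Tilde{\mathbb{P}}_n,s)$ converges to the fixed function $\nu^1(\Tilde{\mathbb{P}},s)$ with bounded weights. Boundedness follows from positivity (Assumption \ref{ass:Positivity}) and from the fact that $V$, $L$, $W$ are discrete. It therefore suffices to show that the second-order remainder $R(\Tilde{\mathbb{P}},\mathbb{P})\defeq\mathbb{E}_{\mathbb{P}}[\nu^1(\Tilde{\mathbb{P}},s)]+\nu(\Tilde{\mathbb{P}},s)-\nu(\mathbb{P},s)$ vanishes whenever either nuisance pair is correct. Identification under Theorem \ref{thm:Identification} then turns consistency for $\nu$ into consistency for the counterfactual probability.

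I would compute $R$ explicitly using the four-term form of $\nu^1$ in Theorem \ref{thm:EIF}. Denote the nested regressions by $Q_3=\mathbb{E}[Y\mid L,W,\varphi(v_S)]$, $Q_2=\mathbb{E}[Q_3\mid L,W_S,\varphi(v)]$ and $Q_1=\mathbb{E}[Q_2\mid L,\varphi(v_S)]$, with $\Tilde{Q}_j$ their analogues under $\Tilde{\mathbb{P}}$. Also note that, by Theorems \ref{thm:IPW_S} and \ref{thm:IPW_D}, the weights $H_L,H_{W_D}$ depend only on the $L$ and $W_D$ models, while the regressions depend on the $S$ and $W_S$ models. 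Taking expectations under the true $\mathbb{P}$ of each term and telescoping leaves a sum of products of errors, of the form
\[
\mathbb{E}\Bigl[\tfrac{I(\varphi(v_S))}{\mathbb{P}(\varphi(v_S))}\bigl(\Tilde{H}_L\Tilde{H}_{W_D}-H_LH_{W_D}\bigr)(Q_3-\Tilde{Q}_3)\Bigr],
\]
together with the analogous terms at the $W_S$ and $L$ levels, where $(\Tilde{H}_{W_S}-H_{W_S})$ multiplies $(Q_2-\Tilde{Q}_2)$-type differences. To obtain these products I would convert each weighted term back into a conditional expectation given $\varphi(\cdot)$, via the IPW identity $\mathbb{E}[A\mid B,C=c]=\mathbb{E}[I(C=c)A/\mathbb{P}(C=c\mid B)\mid B]$ and Bayes' rule, exactly reversing the manipulations in the proof of Theorem \ref{thm:EIF}.

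The two cases then follow. If the $S$ and $W_S$ models are correct, then $\Tilde{Q}_3=Q_3$ and $\Tilde{H}_{W_S}=H_{W_S}$. The remaining outer regressions $\Tilde{Q}_2,\Tilde{Q}_1$ involve averaging over the possibly wrong $W_D$ and $L$ laws, but their errors are exactly compensated by the empirical averaging in the first and third terms. If instead the $L$ and $W_D$ models are correct, then $\Tilde{H}_L=H_L$ and $\Tilde{H}_{W_D}=H_{W_D}$, and the IPW-type representation of Theorem \ref{thm:IPW_D} makes the fourth term alone unbiased for $\nu$, while the augmentation terms have mean zero.

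The main obstacle is the bookkeeping in the remainder. The nuisances are mixed across levels: $Q_2$ uses the $W_D$ law while $H_{W_S}$ uses the $W_S$ law, so the regressions are not cleanly attached to one model pair. I would handle this by expanding level by level from the innermost conditioning outward, checking at each step that every non-vanishing difference is a product of one error from $\{S,W_S\}$ and one from $\{L,W_D\}$. If some cross-term does not factor this way, the claim would have to be restricted, or the regressions $\Tilde{Q}_2,\Tilde{Q}_1$ reinterpreted as fitted directly by regression rather than composed from the plug-in models.
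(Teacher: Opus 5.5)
Your route is correct but organized differently from the paper's. The paper replaces $\tilde{\mathbb{P}}_n$ by its limit $\tilde{\mathbb{P}}$ outright and evaluates, case by case, the mean of each addend of $\nu^1(\tilde{\mathbb{P}},s)$. When the $L,W_D$ models are correct, the first and third addends have mean zero, the second has mean $N-\nu(\tilde{\mathbb{P}},s)$ and the fourth has mean $\nu(s)-N$, where $N=\mathbb{E}[\mathbb{E}[\mathbb{E}[\tilde{\mathbb{E}}[Y\mid L,W,\varphi(v_S)]\mid L,W_S,\varphi(v)]\mid L,\varphi(v_S)]\mid\varphi(v)]$; the other case is declared analogous. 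You instead seek one mixed-bias formula for $R(\tilde{\mathbb{P}},\mathbb{P})$ valid for arbitrary nuisances. That formula exists, so your fallback is unnecessary. However, the bookkeeping needs \emph{hybrid} regressions (true laws applied to fitted inner regressions): $Q_2^\ast\defeq\mathbb{E}[\tilde Q_3\mid L,W_S,\varphi(v)]$ and $Q_1^\ast\defeq\mathbb{E}[\tilde Q_2\mid L,\varphi(v_S)]$. Expanding from the innermost level outward then gives exactly
\begin{align*}
R(\tilde{\mathbb{P}},\mathbb{P})={}&\mathbb{E}\bigl[(\tilde H_L\tilde H_{W_D}-H_LH_{W_D})(Q_3-\tilde Q_3)\mid\varphi(v_S)\bigr]\\
&+\mathbb{E}\bigl[(\tilde H_{W_S}-H_{W_S})(Q_2^\ast-\tilde Q_2)\mid\varphi(v)\bigr]\\
&+\mathbb{E}\bigl[(\tilde H_L-H_L)(Q_1^\ast-\tilde Q_1)\mid\varphi(v_S)\bigr].
\end{align*}
Here $Q_2^\ast-\tilde Q_2$ involves only the $W_D$-model error and $Q_1^\ast-\tilde Q_1$ only the $W_S$-model error, so each product pairs an $\{S,W_S\}$ error with an $\{L,W_D\}$ error. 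Your first displayed product matches the first line. If you used the true $Q_2$ in the middle product (your ``$(Q_2-\tilde Q_2)$-type'' term), that product would carry the $S$-error and would not vanish when only $L,W_D$ are correct.

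Your verbal account of the $L,W_D$ case is also inaccurate as stated. Only the IPW part $H_LH_{W_D}Y$ of the fourth addend is unbiased for $\nu$. The second addend has nonzero mean when the $W_S$ model is wrong; it is cancelled by the $-\tilde Q_3$ part of the fourth addend together with $\nu(\tilde{\mathbb{P}},s)$. This is exactly the cancellation the paper's computation exhibits.

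Each route buys something different. Yours shows the bias is second order, which gives rate double robustness and the precise minimal sufficient model combinations. It also treats the fitting error of $\tilde{\mathbb{P}}_n$ that the paper glosses over (harmless here, given finite supports and positivity). The paper's computation is shorter and avoids the hybrid regressions, but yields only the exact-specification statement.
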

\begin{proof}
    As the postulated models are consistently estimated, we can work directly with $\Tilde{\mathbb{P}}$. Then the one-step estimator is the solution to the estimating equations $\widehat{\mathbb{E}}_n[\nu^1(\Tilde{\mathbb{P}},s)+\nu(\Tilde{\mathbb{P}},s)-\widehat{\nu}_{\rm OS}(s)]=0$. Thus, as $\nu(\Tilde{\mathbb{P}},s)$ is also a deterministic quantity, it suffices to show that $\mathbb{E}[\nu^1(\Tilde{\mathbb{P}},s)]=\nu(s)-\nu(\Tilde{\mathbb{P}},s)$. We show this in the case when the postulated models for $\mathbb{P}(V=v,D=d)$ and $L,W_D$, but not necessarily those for $S,W_S$. 

    Let us denote again $Y=I(S\leq s)$. For the first term of the influence function in Equation \eqref{eq:EIF} we have
    \begin{align*}
        &\mathbb{E}\left[\dfrac{I(\varphi(v))}{\mathbb{E}[I(\varphi(v))]}\Tilde{\mathbb{E}}[\mathbb{E}[\Tilde{\mathbb{E}}[Y\mid L,W,\varphi(v_S)]\mid L,W_S,\varphi(v)]\mid L,\varphi(v_S)]\right]-\nu(\Tilde{\mathbb{P}},s)\\
        =&\mathbb{E}\left[\Tilde{\mathbb{E}}[\mathbb{E}[\Tilde{\mathbb{E}}[Y\mid L,W,\varphi(v_S)]\mid L,W_S,\varphi(v)]\mid L,\varphi(v_S)]\mid \varphi(v)\right]-\nu(\Tilde{\mathbb{P}},s)\\
        =&\nu(\Tilde{\mathbb{P}},s)-\nu(\Tilde{\mathbb{P}},s)\\=&0.
    \end{align*}
    Proceeding similarly for the third term of the influence function in Equation \eqref{eq:EIF}
    \begin{align*}
        &\mathbb{E}\left[\dfrac{I(\varphi(v))}{\mathbb{E}[I(\varphi(v))]}\Tilde{H}_{W_S}\left\{\Tilde{\mathbb{E}}[Y\mid L,W,\varphi(v_S)]-\mathbb{E}[\Tilde{\mathbb{E}}[Y\mid L,W,\varphi(v_S)]\mid L,W_S,\varphi(v)]\right\}\right]\\
        =&\mathbb{E}\left[\Tilde{H}_{W_S}\mathbb{E}\left[\Tilde{\mathbb{E}}[Y\mid L,W,\varphi(v_S)]\right.\right.\\&\left.\left.\qquad\qquad-\mathbb{E}[\Tilde{\mathbb{E}}[Y\mid L,W,\varphi(v_S)]\mid L,W_S,\varphi(v)]\mid L,W_S,\varphi(v)\right]\mid \varphi(v)\right]\\=&0,
    \end{align*}
    because the second most outward expectation is almost surely zero.
    The expectation of the other two terms will not be zero. For the second addend in Equation~\eqref{eq:EIF}
    \begin{align*}
        &\mathbb{E}\left[ 
        \dfrac{I(\varphi(v_S))}{\mathbb{E}[I(\varphi(v_S))]}H_L\mathbb{E}[\Tilde{\mathbb{E}}[Y\mid L,W,\varphi(v_S)]\mid L,W_S,\varphi(v)]\right]\\
        &-\mathbb{E}\left[ 
        \dfrac{I(\varphi(v_S))}{\mathbb{E}[I(\varphi(v_S))]}H_L\Tilde{\mathbb{E}}[\mathbb{E}[\Tilde{\mathbb{E}}[Y\mid L,W,\varphi(v_S)]\mid L,W_S,\varphi(v)]\mid L,\varphi(v_S)]\right]\\
        =&\mathbb{E}\left[H_L\mathbb{E}[\Tilde{\mathbb{E}}[Y\mid L,W,\varphi(v_S)]\mid L,W_S,\varphi(v)]\mid \varphi(v_S)\right]\\
        &-\mathbb{E}\left[ H_L\Tilde{\mathbb{E}}[\mathbb{E}[\Tilde{\mathbb{E}}[Y\mid L,W,\varphi(v_S)]\mid L,W_S,\varphi(v)]\mid L,\varphi(v_S)]\mid\varphi(v_S)\right]\\
        =&\mathbb{E}\left[H_L\mathbb{E}\left[\mathbb{E}[\Tilde{\mathbb{E}}[Y\mid L,W,\varphi(v_S)]\mid L,W_S,\varphi(v)]\mid L,\varphi(v_S)\right]\mid \varphi(v_S)\right]\\
        &-\mathbb{E}\left[\Tilde{\mathbb{E}}[\mathbb{E}[\Tilde{\mathbb{E}}[Y\mid L,W,\varphi(v_S)]\mid L,W_S,\varphi(v)]\mid L,\varphi(v_S)]\mid\varphi(v)\right]\\
        =&\mathbb{E}\left[\mathbb{E}\left[\mathbb{E}[\Tilde{\mathbb{E}}[Y\mid L,W,\varphi(v_S)]\mid L,W_S,\varphi(v)]\mid L,\varphi(v_S)\right]\mid \varphi(v)\right]-\nu(\Tilde{\mathbb{P}},s).
    \end{align*}
    For the remaining term of the influence function we will get the required complement to this last difference. 
    \begin{align*}
        &\mathbb{E}\left[\dfrac{I(\varphi(v_S))}{\mathbb{E}[I(\varphi(v_S))]}H_LH_{W_D}\{Y-\Tilde{\mathbb{E}}[Y\mid L,W,\varphi(v_S)]\}\right]\\
        =&\underbrace{\mathbb{E}\left[H_LH_{W_D}Y\mid\varphi(v_S)\right]}_{\rm (I)}-\underbrace{\mathbb{E}\left[H_LH_{W_D}\Tilde{\mathbb{E}}[Y\mid L,W,\varphi(v_S)]\mid\varphi(v_S)\right]}_{\rm (II)}.
    \end{align*}
    For the first term we have
    \begin{align*}
        {\rm (I)}&=\mathbb{E}\left[H_LH_{W_D}\mathbb{E}[Y\mid L,W,\varphi(v_S)]\mid\varphi(v_S)\right]\\
        &=\mathbb{E}\left[H_L\mathbb{E}[H_{W_D}\mathbb{E}[Y\mid L,W,\varphi(v_S)]\mid L,W_S,\varphi(v_S) ]\mid\varphi(v_S)\right]\\
        &=\mathbb{E}\left[H_L\mathbb{E}[\mathbb{E}[Y\mid L,W,\varphi(v_S)]\mid L,W_S,\varphi(v) ]\mid\varphi(v_S)\right]\\
        &=\mathbb{E}\left[H_L\mathbb{E}[\mathbb{E}[\mathbb{E}[Y\mid L,W,\varphi(v_S)]\mid L,W_S,\varphi(v) ]\mid L,\varphi(v_S)]\mid\varphi(v_S)\right]\\
        &=\mathbb{E}\left[\mathbb{E}[\mathbb{E}[\mathbb{E}[Y\mid L,W,\varphi(v_S)]\mid L,W_S,\varphi(v) ]\mid L,\varphi(v_S)]\mid\varphi(v)\right]\\
        &=\nu(s),
    \end{align*}
    and for the second one, proceeding analogously
    \begin{align*}
        {\rm (II)}&= \mathbb{E}\left[\mathbb{E}\left[\mathbb{E}[\Tilde{\mathbb{E}}[Y\mid L,W,\varphi(v_S)]\mid L,W_S,\varphi(v)]\mid L,\varphi(v_S)\right]\mid \varphi(v)\right].
    \end{align*}
    Therefore we see that when we sum up the expectations of the four addends in $\mathbb{E}[\nu^1(\Tilde{\mathbb{P}},s)]$ we get precisely $\nu(s)-\nu(\Tilde{\mathbb{P}},s)$, as we wanted to show. The proof when  the postulated models for $\mathbb{P}(V=v,D=d)$ and $S,W_S$ are correctly specified, but not necessarily those for $L,W_D$, follows analogously. This concludes the proof.
\end{proof}

\subsection{Weighted estimators for an arbitrary support of $V$}\label{sec:AppIPW_Continuous}
The weighted estimators we have introduced in Appendix \ref{sec:FDP_Estimation}   only apply if $V$ has a discrete support, as otherwise the IPW identity is not valid. Nevertheless, the identification Theorems \ref{thm:Identification}, \ref{thm:IPW_S} and \ref{thm:IPW_D} are still valid if some components of $V$ have continuous (marginal) support. We will show that under some assumptions we can rewrite the right-hand-side of Equations \eqref{eq:IPW_S} and \eqref{eq:IPW_D} in an equivalent way which will allow for implementable weighted estimators. 

When  $V$ is univariate and  has an absolutely continuous distribution w.r.t.\ the Lebesgue measure, a counterfactual quantity identified by an expression such as Equations \eqref{eq:IPW_S} and \eqref{eq:IPW_D} is usually referred to as a \textit{dose-response function} \cite{imai2004causal}, as dosage is usually understood as a continuous treatment for which we only get to observe a finite number of values: the doses which are compared. 

Some of the dose-response and continuous treatment effects literature proposes weighted estimators which rely on kernel density estimators \cite{colangelo2025double,huling2024independence}. Despite the flexibility of these methods and the well-understood inference on them, they inherently make a restriction on the class of estimators which can be implemented. To remain as general as possible and provide practitioners with freedom-of-choice regarding their estimation algorithms, we follow the approach introduced in \cite{galvao2015uniformly}. To do so, we assume that $V$ can be expressed as having two (potentially multivariate) components $V=(V_c,V_d)$ such that $V_d$ has a discrete support and $V_c$ has an absolutely continuous distribution w.r.t.\ the Lebesgue measure. We then consider the following regularity conditions.
\begin{assumption}\label{ass:RegularityIPW}
    Let $Q$ denote a random vector and $$g\colon {\rm supp}(Q)\mapsto\mathbb{R}$$ a measurable function. Then assume that
    \begin{enumerate}
        \item $V_c\mid Q=q,V_d=v_d,D=d$ is supported on a connected open subset of $\mathbb{R}^{|V_c|}$ which does not depend on $(v_d,q,d)$; and its density w.r.t.\ the Lebesgue measure is continuous.
        \item There exists a measurable function  $t\colon {\rm supp}(Q)\mapsto\mathbb{R}^+$ such that $\int_{{\rm supp}(Q)} t(q)dq<\infty$ and $$|g(q)f_{(Q,V_c)| D,V_d}(q,v_c+\Delta v_c|d,v_d)|\leq t(q)$$ for all $v\in\mathcal{V}$, all $\Delta v_c\in (\mathbb{R}^+)^{|V_c|}$ with $||\Delta v_c||_2$ small enough, and all $d\in\{0,1\}$.
        \item It holds that 
        \begin{align*}
            &\mathbb{E}[g(Q)\mid V=v,D=d]\\=&\lim_{\substack{||\Delta v_c||_2\to 0^+\\\Delta v_c\in (\mathbb{R}^+)^{|V_c|}}}\mathbb{E}[g(Q)\mid V_d=v_d,D=d,V_c\in[v_c,v_c+\Delta v_c]],
        \end{align*}
        where we denote $[x,x+\delta]=\{z\colon z_i\in[x_i,x_i+\delta_i] \text{ for all }i\}.$
    \end{enumerate}
\end{assumption}
This assumption is adapted from  \cite[Ass.\ I.III]{galvao2015uniformly} and essentially allows to apply the dominated convergence theorem to exchange limits and expectations to obtain an IPW factor in the continuous case. Under this assumption, we will be able to rewrite the right-hand-side of Equations \eqref{eq:IPW_S} and \eqref{eq:IPW_D} as expectations conditional only on discrete random variables.

\begin{lemma}\label{lemma:IPW_Continuous}
    Under Assumption \ref{ass:RegularityIPW} it holds that $$\mathbb{E}[g(Q)\mid V=v,D=d]=\mathbb{E}[g(Q)\pi(Q,v,d)\mid V_d=v_d,D=d],$$ where 
    \begin{equation}\label{eq:DensityRatio}
        \pi(Q,v,d)=\dfrac{f_{V_c|Q,V_d,D}(v_c|Q,v_d,d)}{f_{V_c|V_d,D}(v_c|v_d,d)}
    \end{equation}
    is expressed as a density ratio.
\end{lemma}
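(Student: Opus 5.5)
The plan is to follow the kernel-free strategy of \cite{galvao2015uniformly}: approximate the conditioning event $\{V_c=v_c\}$ by a shrinking box, write the box-conditional expectation as an IPW-type ratio, and pass to the limit using Assumption \ref{ass:RegularityIPW}. Throughout, fix $(v_d,d)$ and condition on $\{V_d=v_d,D=d\}$. By item 3 of Assumption \ref{ass:RegularityIPW}, it suffices to compute the limit of $\mathbb{E}[g(Q)\mid V_d=v_d,D=d,V_c\in[v_c,v_c+\Delta v_c]]$ as $\|\Delta v_c\|_2\to0^+$.

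For fixed $\Delta v_c$, the elementary identity for conditioning on a positive-probability event gives
\[
\mathbb{E}[g(Q)\mid V_d=v_d,D=d,V_c\in[v_c,v_c+\Delta v_c]]=\frac{\mathbb{E}[g(Q)\,\mathbb{P}(V_c\in[v_c,v_c+\Delta v_c]\mid Q,V_d=v_d,D=d)\mid V_d=v_d,D=d]}{\mathbb{P}(V_c\in[v_c,v_c+\Delta v_c]\mid V_d=v_d,D=d)}.
\]
Here the numerator is obtained by the tower property, conditioning first on $Q$. Positivity of the denominator follows from item 1, since the support is open and does not depend on $(v_d,q,d)$. Next we divide numerator and denominator by the box volume $\prod_i\Delta v_{c,i}$. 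The denominator then tends to $f_{V_c\mid V_d,D}(v_c\mid v_d,d)$, by continuity of the density (item 1) and the Lebesgue differentiation argument for continuous densities. The numerator becomes $\int g(q)\,\mathrm{vol}^{-1}\int_{[v_c,v_c+\Delta v_c]}f_{(Q,V_c)\mid D,V_d}(q,u\mid d,v_d)\,du\,dq$.

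The main obstacle is exchanging the limit with the $q$-integral in this numerator. I would handle it with dominated convergence using item 2. The inner box average is bounded by the supremum over the box of $|g(q)f_{(Q,V_c)\mid D,V_d}(q,\cdot\mid d,v_d)|$, and item 2 bounds this by the integrable $t(q)$ for small boxes. Pointwise in $q$, the average converges to $f_{(Q,V_c)\mid D,V_d}(q,v_c\mid d,v_d)$ by continuity. Continuity of the joint density in $v_c$ is slightly stronger than what item 1 states literally, so I expect to need to read item 1 as giving it, or to argue along a.e. Lebesgue points. The limit of the numerator is therefore $\int g(q)f_{(Q,V_c)\mid D,V_d}(q,v_c\mid d,v_d)\,dq$.

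Finally, we factor $f_{(Q,V_c)\mid D,V_d}(q,v_c\mid d,v_d)=f_{V_c\mid Q,V_d,D}(v_c\mid q,v_d,d)\,f_{Q\mid V_d,D}(q\mid v_d,d)$. Dividing by the limit of the denominator yields $\int g(q)\pi(q,v,d)f_{Q\mid V_d,D}(q\mid v_d,d)\,dq=\mathbb{E}[g(Q)\pi(Q,v,d)\mid V_d=v_d,D=d]$, with $\pi$ as in Equation \eqref{eq:DensityRatio}, which proves the claim.
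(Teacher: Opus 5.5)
Your proposal is correct and follows essentially the same route as the paper's proof. Both use item 3 to approximate by boxes, write the box-conditional expectation as a ratio, normalize by the box volume, and pass to the limit using dominated convergence (item 2) and continuity (item 1). Your sup-over-the-box domination is a slightly cleaner substitute for the paper's mean-value step. Your hedge about continuity is unnecessary: $f_{(Q,V_c)\mid D,V_d}(q,\cdot\mid d,v_d)=f_{V_c\mid Q,V_d,D}(\cdot\mid q,v_d,d)\,f_{Q\mid V_d,D}(q\mid v_d,d)$ is continuous in $v_c$ for each fixed $q$ directly by item 1.
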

\begin{proof}
    This proof follows closely the proof of \cite[Thm.\ 1]{galvao2015uniformly}, and it can be seen as a generalization to a multivariate case. We have that
    \begin{align*}
        &\mathbb{E}[g(Q)\mid V=v,D=d]\\
        \stackrel{(I)}{=}&\lim_{\substack{||\Delta v_c||_2\to 0^+\\\Delta v_c\in (\mathbb{R}^+)^{|V_c|}}}\mathbb{E}[g(Q)\mid V_d=v_d,D=d,V_c\in[v_c,v_c+\Delta v_c]]\\
        \stackrel{(II)}{=}&\lim_{\substack{||\Delta v_c||_2\to 0^+\\\Delta v_c\in (\mathbb{R}^+)^{|V_c|}}}\dfrac{\mathbb{E}[g(Q)I(V_c\in[v_c,v_c+\Delta v_c])\mid V_d=v_d,D=d]}{\mathbb{P}(V_c\in[v_c,v_c+\Delta v_c]\mid V_d=v_d,D=d )}\\
        \stackrel{(III)}{=}&\lim_{\substack{||\Delta v_c||_2\to 0^+\\\Delta v_c\in (\mathbb{R}^+)^{|V_c|}}}\dfrac{\int_{\text{supp}(Q)}\int_{[v_c,v_c+\Delta v_c]}g(q)f_{Q,V_c|V_d,D}(q,w|v_d,d)dwdq}{\mathbb{P}(V_c\in[v_c,v_c+\Delta v_c]\mid V_d=v_d,D=d )}\\
        \stackrel{(IV)}{=}&\lim_{\substack{||\Delta v_c||_2\to 0^+\\\Delta v_c\in (\mathbb{R}^+)^{|V_c|}}}\dfrac{\int_{\text{supp}(Q)}g(q)f_{Q,V_c|V_d,D}(q,v_c+\Tilde{\Delta} v_c|v_d,d)dq}{\mu([v_c,v_c+\Delta v_c])^{-1}\mathbb{P}(V_c\in[v_c,v_c+\Delta v_c]\mid V_d=v_d,D=d )}\\
        \stackrel{(V)}{=}&\dfrac{\int_{\text{supp}(Q)}g(q)f_{Q,V_c|V_d,D}(q,v_c|v_d,d)dq}{f_{V_c|V_d,D}(v_c|v_d,d)}\\
        \stackrel{}{=}&\mathbb{E}[g(Q)\pi(Q,v,d)\mid V_d=v_d,D=d],
    \end{align*}
    where $\Tilde{\Delta} v_c\in[0,{\Delta} v_c]$.  These equalities follow from the following arguments
    \begin{enumerate}
        \item[$(I)$] Follows from Assumption \ref{ass:RegularityIPW} \textit{(3)}.
        \item[$(II)$] The standard IPW expression for discrete events.
        \item[$(III)$] Explicitly writing out the expectation in the numerator.
        \item[$(IV)$] Multivariate integral mean-value theorem.
        \item[$(V)$] By Assumption \ref{ass:RegularityIPW} \textit{(2)} and the dominated convergence the integral in the numerator and the limit can be permuted, after which the results follows from continuity granted by Assumption \ref{ass:RegularityIPW} \textit{(1)}.
    \end{enumerate}
    This concludes the proof.
\end{proof}
We now apply this Lemma to the targetted conditional expectations in Theorems \ref{thm:IPW_S} and \ref{thm:IPW_D} to re-write them as conditional expectations with only discrete variables on the conditioning set.
\begin{theorem}[IPW for continuous demographic covariates]\label{thm:IPW_Continuous}
    Consider $Q=(S,L,W)$, and fix $s$ in the support of $S$. Define
    \begin{align*}
        g_1(Q)&=I(S\leq s)H_LH_{W_D},\\
        g_2(Q)&=I(S\leq s)H_SH_{W_S},
    \end{align*}
    where the $H$ functions are defined as in Theorems \ref{thm:IPW_S} and \ref{thm:IPW_D}. If Assumption \ref{ass:RegularityIPW} holds for $g_1$ then 
    \begin{align*}
        &\mathbb{E}[I(S\leq s)H_LH_{W_D}\mid V=v_S,D=d]\\=&\mathbb{E}[I(S\leq s)H_LH_{W_D}\pi(Q,v_S,d)\mid V_d=(v_S)_d,D=d],
    \end{align*}
    and if it holds for $g_2$ then 
    \begin{align*}
        &\mathbb{E}[I(S\leq s)H_SH_{W_S}\mid V=v,D=d]\\=&\mathbb{E}[I(S\leq s)H_SH_{W_S}\pi(Q,v,d)\mid V_d=v_d,D=d],
    \end{align*}
    where in both cases $\pi(Q,v,d)$ is taken as in Equation \eqref{eq:DensityRatio}.
\end{theorem}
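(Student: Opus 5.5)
The plan is to obtain Theorem \ref{thm:IPW_Continuous} as two direct instances of Lemma \ref{lemma:IPW_Continuous}, one for each weighted identification formula, with $Q=(S,L,W)$.

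For the first claim, I take $g=g_1$ and condition on $(V,D)=(v_S,d)$. The weights $H_L$ and $H_{W_D}$ involve only conditional probabilities evaluated at the fixed values $v,v_S,d$, so once those values are fixed they are deterministic functions of $(L,W_S,W_D)$. Hence $g_1$ is a measurable function of $Q$ alone, which is what Lemma \ref{lemma:IPW_Continuous} requires. The second claim is handled in the same way: take $g=g_2$, condition on $(V,D)=(v,d)$, and note that $H_S$ and $H_{W_S}$ are functions of $(L,W)$ and $W_S$ respectively, and hence of $Q$.

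With Assumption \ref{ass:RegularityIPW} assumed for $g_1$ (respectively $g_2$), Lemma \ref{lemma:IPW_Continuous} gives
\[
\mathbb{E}[g(Q)\mid V=v',D=d]=\mathbb{E}[g(Q)\pi(Q,v',d)\mid V_d=v'_d,D=d],
\]
with $v'=v_S$ for $g_1$ and $v'=v$ for $g_2$. Substituting the definitions of $g_1$ and $g_2$ then yields the two displayed identities verbatim, since $(v_S)_d$ is simply the discrete component of $v_S$.

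The only step needing care is a notational subtlety. The evaluation point $v'$ enters in two places: in the definition of $g$, through the weights, and as the conditioning and evaluation point in the lemma. In the lemma, $g$ is fixed first and the limit is taken only in the conditioning event $V_c\in[v_c,v_c+\Delta v_c]$. The weights must therefore be frozen at the target values $v,v_S$ before the limit, and must not be allowed to vary with the shrinking interval. Since the dominated-convergence hypothesis in Assumption \ref{ass:RegularityIPW} (2) is stated for this fixed $g$, this freezing is legitimate. No further argument should be needed.
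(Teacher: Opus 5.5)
Your proposal is correct and matches the paper's proof, which simply applies Lemma \ref{lemma:IPW_Continuous} with $g=g_1$ at the point $v_S$ and with $g=g_2$ at the point $v$. Your remark that the weights are frozen at the target values before taking the limit is a sensible clarification; as a minor correction, $H_{W_S}$ depends on $(L,W_S)$ rather than on $W_S$ alone, which still makes $g_2$ a function of $Q$.
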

\begin{proof}
    It is a straightforward application of Lemma \ref{lemma:IPW_Continuous} in both cases. 
\end{proof}
Theorem \ref{thm:IPW_Continuous} motivates two estimators of the counterfactual c.d.f.\ of interest. We first posit (parametric) models for the weights $H$ and for $\pi$, fit them to the observed data, and then compute estimators as
\begin{align*}
    \widehat{\nu}_{{\rm weighted},1}(s)&= \widehat{\mathbb{E}}_n\left[\dfrac{I(S\leq s,V_d=(v_S)_d,D=d)}{\widehat{\mathbb{P}}_n(V_d=(v_S)_d,D=d)}\Tilde{H}_L\Tilde{H}_{W_D}\Tilde{\pi}(Q,v_S,d)\right],\\
    \widehat{\nu}_{{\rm weighted},2}(s)&= \widehat{\mathbb{E}}_n\left[\dfrac{I(S\leq s,V_d=v_d,D=d)}{\widehat{\mathbb{P}}_n(V_d=v_d,D=d)}\Tilde{H}_S\Tilde{H}_{W_S}\Tilde{\pi}(Q,v,d)\right],
\end{align*}
where $Q=(S,L,W)$ as in Theorem \ref{thm:IPW_Continuous}. Fix $s$ in the support of $S$ and $i$ in $\{1,2\}$. Under the identification conditions of Theorem \ref{thm:Identification}, if Assumption \ref{ass:RegularityIPW} holds for $g=g_i$ with $g_i$ as defined in Theorem \ref{thm:IPW_Continuous}, and the posited models are correctly specified and consistently estimated, then $\widehat{\nu}_{{\rm weighted},i}(s)$ is a consistent estimator of the counterfactual c.d.f. of interest.

\subsection{Uniform consistency of the proposed estimators} \label{sec:AppFDP_UniformConsistency}

The estimators we have introduced in Appendices \ref{sec:FDP_Estimation} and \ref{sec:AppIPW_Continuous}, are point-wise consistent, meaning that for each $s$ in the support of $S$ we can construct consistent estimators of $\mathbb{P}(S^{v_S}\leq s\mid V=v,D=d)$, provided the required assumptions hold. However, the strong notion of demographic insensitivity (Definition \ref{def:FDP}) refers to a distribution as a whole, not to a specific probability. Therefore, when trying to estimate $\mathbb{P}(S^{v_S}\leq \cdot \mid V=v,D=d)$ for a fixed $(v,v_S,d)$, we would like to achieve uniform (in $L_\infty$) consistency $$\sup_{s\in{\rm supp}(S)}|\widehat{\nu}(s)-\mathbb{P}(S^{v_S}\leq s\mid V=v,D=d)|\to 0,$$ in probability (or almost surely) w.r.t.\ the distribution of the observed data, for a chosen point-wise consistent estimator $\widehat{\nu}(\cdot)$.

Such uniform consistency is guaranteed if the score $S$ has a finite discrete support, as is the case for almost all cognitive screening tests. In such a case the supremum over the support of $S$ reduces to a maximum over a finite number of differences, all of which converge to zero in probability thanks to point-wise consistency, yielding uniform convergence in probability. 

However, when $S$ has an infinite support, e.g.\ an absolutely continuous distribution w.r.t.\ the Lebesgue measure, uniform consistency does not follow from point-wise consistency.  We will slightly alter the estimators before introduced based on a data-splitting scheme \cite{cox1975note}, and show that under standard conditions these estimators achieve uniform (in $L_\infty$) consistency.

\subsubsection{The simple estimator}\label{sec:AppUniformConsistency_Simple} The simple estimator $\widehat{\nu}_{\rm simple}(s)=\nu(\Tilde{\mathbb{P}}_n,s)$ only uses the available data to fit the models, but does not relate to any empirical averages over the observed data, which is what concerns most of the empirical process theory \cite{gyorfi2002distribution,vandervaart1996weak}. As a statement about uniform convergence of the simple estimator relates only to the posited models $\Tilde{\mathbb{P}}$ and how they are fitted, we just state here the required uniform consistency $$\sup_{s\in{\rm supp}(S)}|\widehat{\nu}_{\rm simple}(s)-\nu(\mathbb{P},s)|\to 0,$$ in probability w.r.t.\ the distribution of the observed data.

\subsubsection{Weighted estimators}\label{sec:AppUniformConsistency_Weighted}
In contrast with the simple estimator, all the weighted estimators we have introduced have the same form: data is used to fit certain functions, and then an empirical average is taken of the fitted functions \textit{using the same data}. This double-fold use of the observed data  complicates the analysis at hand. As a practical (and easily implementable) solution, we suggest data splitting \cite{cox1975note}. For a dataset of size $n$ we chose $I_1$ randomly from  the uniform distribution of all subsets of $\{1,\ldots,n\}$ of size in $\{\lceil n/2\rceil,\lfloor n/2 \rfloor\}$, and define $I_2\defeq\{1,\ldots,n\}\backslash I_1$. The posited models are fitted over $I_1$, and then the empirical average is taken over $I_2$. Under this framework, all weighted estimators that we consider are of the form $$\widehat{\nu}_{\rm weighted}(s)=\widehat{\mathbb{P}}_{I_2}\widehat{g}_s,$$
where $g_s$ is a function of a realization of the process satisfying that $\mathbb{P}g_s=\mathbb{E}[g_s]=\nu(\mathbb{P},s)$, $\widehat{g}_s$ is an estimate of $\widehat{g}_s$ using the data indexed by $I_1$, and $\widehat{\mathbb{P}}_{I_2}f$ denotes an empirical average of $f$ over the data indexed by $I_2$. Here we have borrowed common notation from the empirical process theory literature. See \cite{gyorfi2002distribution} for details.

The independence of the two splits of the original data under the i.i.d.\ assumption will allow us to achieve  the desired uniform consistency under the following assumptions. 
\begin{assumption}\label{ass:ModelWellEstimated}
    The posited models and their estimation from the data in fold $I_1$ are such that $$\sup_{s\in{\rm supp}(S)}\mathbb{P}|\widehat{g}_s-g_s|\to 0,$$ in probability w.r.t.\ the distribution of the observed data in fold $I_1$. 
\end{assumption}
Statements of the form $\mathbb{P}\widehat{g}_s$ mean that the expectation is taken over the distribution of a new independent realization of $\mathbb{P}$, i.e. conditional on all other randomness. Assumption \ref{ass:ModelWellEstimated} states that in expectation $\widehat{g}_s$ approximates $g_s$ uniformly in probability. It is therefore a statement about the posited models and how they are fitted, which strengthens the point-wise consistency we discussed in Appendix \ref{sec:FDP_Estimation}. We will also need to make an assumption concerning the outer empirical average.
\begin{assumption}\label{ass:GlivenkoCantelli}
    There exists a function class $\mathcal{F}$ such that $\{\widehat{g}_s\colon s\in{\rm supp}(S)\}\subseteq \mathcal{F}$ almost surely w.r.t.\ the joint distribution of the data in fold $I_1$. Furthermore, $\mathcal{F}$ is $\mathbb{P}$-Glivenko-Cantelli \cite[Chap.\ 2.1]{vandervaart1996weak}.
\end{assumption}
This essentially states that the class of functions $\{\widehat{g}_s\colon s\in{\rm supp}(S)\}$ almost surely belongs to a class which is ``regular w.r.t.\ $\mathbb{P}$''. There are many characterizations of Glivenko-Cantelli classes, involving bracketing numbers and VC-dimensions. See \cite[Chap.\ 2.1]{vandervaart1996weak} or \cite[Chap.\ 9]{gyorfi2002distribution} for details. Nevertheless, as stated in \cite{galvao2015uniformly} and references therein, Assumption \ref{ass:GlivenkoCantelli} is standard, and thus we present it as such. 

Under these assumptions, we are able to conclude a uniform consistency result.
\begin{theorem}\label{thm:GlivenkoCantelli}
    Under Assumptions \ref{ass:ModelWellEstimated} and \ref{ass:GlivenkoCantelli} it holds that $$\sup_{s\in{\rm supp}(S)}\left|\widehat{\mathbb{P}}_{I_2}\widehat{g}_s-\mathbb{P}g_s\right|\to 0,$$ in probability w.r.t.\ the joint distribution of the observed data.
\end{theorem}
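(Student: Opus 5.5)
The plan is to split the error with the triangle inequality, uniformly in $s$:
\begin{equation*}
\sup_{s}\bigl|\widehat{\mathbb{P}}_{I_2}\widehat{g}_s-\mathbb{P}g_s\bigr|\leq \sup_{s}\bigl|(\widehat{\mathbb{P}}_{I_2}-\mathbb{P})\widehat{g}_s\bigr|+\sup_{s}\bigl|\mathbb{P}\widehat{g}_s-\mathbb{P}g_s\bigr|.
\end{equation*}
Here $\mathbb{P}\widehat{g}_s$ is the expectation over an independent new draw, conditional on the data in $I_1$. I would then show that each term tends to zero in probability.

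The second term is handled directly by Assumption \ref{ass:ModelWellEstimated}. Indeed, $|\mathbb{P}\widehat{g}_s-\mathbb{P}g_s|\leq \mathbb{P}|\widehat{g}_s-g_s|$, and the supremum of the right-hand side tends to zero in probability with respect to the $I_1$ data. Hence it also tends to zero with respect to the joint law, since that term depends only on the $I_1$ data.

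For the first term I would condition on the $I_1$ data. By Assumption \ref{ass:GlivenkoCantelli}, almost surely $\{\widehat{g}_s\}_s\subseteq\mathcal{F}$, so
\begin{equation*}
\sup_s|(\widehat{\mathbb{P}}_{I_2}-\mathbb{P})\widehat{g}_s|\leq \sup_{f\in\mathcal{F}}|(\widehat{\mathbb{P}}_{I_2}-\mathbb{P})f|.
\end{equation*}
The class $\mathcal{F}$ is fixed and does not depend on the data. The observations in $I_2$ are i.i.d.\ from $\mathbb{P}$ and independent of $I_1$ under the i.i.d.\ assumption and the random, data-independent choice of the split. Their number $|I_2|\geq\lfloor n/2\rfloor$ grows to infinity. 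The Glivenko--Cantelli property therefore makes the right-hand side converge to zero (almost surely, or in outer probability) as $n\to\infty$. Because the bound does not involve $I_1$ at all, we do not even need to integrate a conditional statement over $I_1$. Combining this with the second term through the triangle inequality, and using that sums of terms vanishing in probability vanish in probability, gives the claim.

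The main obstacle is the measurability and conditioning subtlety in the first term. The supremum over a possibly uncountable set may be non-measurable, so I would phrase the argument with outer probability as in \cite{vandervaart1996weak}. I would also be explicit that the Glivenko--Cantelli statement is applied to the sample indexed by $I_2$. Its size is random only through the split, and it is at least $\lfloor n/2\rfloor$, so the result still applies.
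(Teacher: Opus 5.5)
Your proof is correct and follows the paper's argument. It uses the same triangle-inequality split, Assumption~\ref{ass:ModelWellEstimated} for $\sup_s|\mathbb{P}(\widehat{g}_s-g_s)|$, and the bound $\sup_{f\in\mathcal{F}}|(\widehat{\mathbb{P}}_{I_2}-\mathbb{P})f|$ for the empirical-process term. The paper conditions on the $\sigma$-algebra of the $I_1$ data and then removes that conditioning by total expectation and dominated convergence; you correctly note that this step is unnecessary, because $\mathcal{F}$ is fixed and the bound depends only on the $I_2$ sample. Drop the ``almost surely'' parenthetical, though. With a fresh split at each $n$, the $I_2$ samples form a triangular array rather than a nested sequence, so the Glivenko--Cantelli property directly gives only convergence in (outer) probability, which is all the statement needs.
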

\begin{proof}
    By the triangle inequality $$\sup_{s\in{\rm supp}(S)}\left|\widehat{\mathbb{P}}_{I_2}\widehat{g}_s-\mathbb{P}g_s\right|\leq \underbrace{\sup_{s\in{\rm supp}(S)}\left|\widehat{\mathbb{P}}_{I_2}\widehat{g}_s-\mathbb{P}\widehat{g}_s\right|}_{(I)} + \underbrace{\sup_{s\in{\rm supp}(S)}\left|\mathbb{P}(\widehat{g}_s-g_s)\right|}_{(II)}. $$ By Assumption \ref{ass:ModelWellEstimated} we have that $$(II)\leq \sup_{s\in{\rm supp}(S)}\mathbb{P}|\widehat{g}_s-g_s|\to 0,$$ in probability. Now for $(I)$, conditional on the data in fold $I_1$,  Assumption \ref{ass:GlivenkoCantelli} yields $$(I)\leq \sup_{f\in\mathcal{F}}\left|(\widehat{\mathbb{P}}_{I_2}-\mathbb{P})f\right|\to 0$$ in probability w.r.t.\ the joint distribution of the data in fold $I_2$. This means that for all $\varepsilon>0$ $$\mathbb{P}\left(\sup_{s\in{\rm supp}(S)}\left|\widehat{\mathbb{P}}_{I_2}\widehat{g}_s-\mathbb{P}\widehat{g}_s\right|>\varepsilon\mid \mathcal{T}_1\right)\to 0,$$ where $\mathcal{T}_1$ is the $\sigma$-algebra induced by the data in fold $I_1$. Then by the law of total expectations and the dominated convergence theorem $$\mathbb{P}\left(\sup_{s\in{\rm supp}(S)}\left|\widehat{\mathbb{P}}_{I_2}\widehat{g}_s-\mathbb{P}\widehat{g}_s\right|>\varepsilon\right)=\mathbb{E}\left( \mathbb{P}\left(\sup_{s\in{\rm supp}(S)}\left|\widehat{\mathbb{P}}_{I_2}\widehat{g}_s-\mathbb{P}\widehat{g}_s\right|>\varepsilon   \mid \mathcal{T}_1\right)\right)\to 0$$ as $n\to\infty$. As a result, $(I)$ converges to zero in probability, which concludes the proof.
\end{proof}

Thus we see how under the corresponding assumptions for point-wise consistency, together with Assumptions \ref{ass:ModelWellEstimated} and \ref{ass:GlivenkoCantelli} for the specific form of the $g_s$ functions given by each estimator, we achieve uniform consistency of our weighted estimators after performing the data-splitting procedure before mentioned.

\subsubsection{The one-step estimator introduced in Appendix \ref{sec:FDP_Estimation}}
Lastly, we discuss the one-step estimator, constructed in the case when $\mathcal{V}$ is discrete. Recall that this estimator was of the form $$\widehat{\mathbb{P}}_n\nu^1(\Tilde{\mathbb{P}}_n,s)+\nu(\Tilde{\mathbb{P}}_n,s),$$ where the models $\Tilde{\mathbb{P}}_n$ consistently estimated posited models $\Tilde{\mathbb{P}}$, which could be miss-specified as per Theorem \ref{thm:DR}. As we also saw in the proof of this theorem, $\nu(\Tilde{\mathbb{P}}_n,s)$ is a point-wise consistent estimator of $\nu(\Tilde{\mathbb{P}},s)$ (it is the simple estimator with potentially miss-specified models) and $\widehat{\mathbb{P}}_n\nu^1(\Tilde{\mathbb{P}}_n,s)$ was point-wise consistent for $\nu({\mathbb{P}},s)-\nu(\Tilde{\mathbb{P}},s)$. 

For the simple estimator addend, as we did in Appendix \ref{sec:AppUniformConsistency_Simple} we just assume that the (potentially miss-specified) posited models and their estimation methods satisfy the Glivenko-Cantelli property $$\sup_{s\in{\rm supp}(S)}\left|\nu(\Tilde{\mathbb{P}}_n,s)-\nu(\Tilde{\mathbb{P}},s)\right|\to 0$$ in probability. For the correction provided by the influence function, we run into the same issue as in Appendix \ref{sec:AppUniformConsistency_Weighted}: we use the full data to fit the models $\Tilde{\mathbb{P}}_n$ than to take the outer empirical average. Again, to overcome this issue we propose using again  the data-splitting method introduced before. Thus,  we define $g_s=\nu^1(\Tilde{\mathbb{P}},s)$ and $\widehat{g}_s=\nu^1(\Tilde{\mathbb{P}}_{I_1},s)$. Under Assumptions \ref{ass:ModelWellEstimated} and \ref{ass:GlivenkoCantelli}, Theorem \ref{thm:GlivenkoCantelli} implies that $$\sup_{s\in{\rm supp}(S)}\left|\widehat{\mathbb{P}}_{I_2}\nu^1(\Tilde{\mathbb{P}}_{I_1},s)-(\nu({\mathbb{P}},s)-\nu(\Tilde{\mathbb{P}},s))\right|\to 0,$$ in probability. Thus, we see how data-splitting in the influence function correction allows us to achieve, under assumptions on the posited models and their estimation, uniform consistency of the one-step estimator. 

The double-robustness property studied in \ref{thm:DR} extends as well to the uniform case, as we have shown that $$\widehat{\mathbb{P}}_{I_2}\nu^1(\Tilde{\mathbb{P}}_{I_1},s)+\nu(\Tilde{\mathbb{P}}_n,s)$$ is uniformly consistent for $\nu(\mathbb{P},s)$ despite the models being miss-specified as allowed by said theorem.

 \subsection{Testing for demographic insensitivity}\label{sec:FDP_Testing}

We have so far seen that under appropriate conditions we can identify and consistently estimate the counterfactual distribution required to assess whether a score is insensitive to certain demographics. Yet, Definitions \ref{def:FDP}-\ref{def:FDP_weak} require the equality (point-wise or in distribution) across the support of $(V,D)$. Therefore, if we consider the individual null hypothesis in the strong case $$\mathcal{H}_{0,(v,d)}^{(w,w')}\colon S^{w}\mid \{V=v,D=d\}\stackrel{d}{=} S^{w'}\mid\{V=v,D=d\}, $$ or in the weak case for a fixed threshold $s$ 
\begin{equation}\label{eq:FDP_IndividualNullWeak}
    \mathcal{H}_{0,(v,d)}^{(w,w')}\colon \mathbb{P}(S^{w}\leq s\mid V=v,D=d)\stackrel{}{=} \mathbb{P}(S^{w'}\leq s\mid V=v,D=d), 
\end{equation}
the global null hypothesis of demographic insensitivity can be expressed as 
\begin{align}\label{eq:GlobalIntersectionNull}
    \mathcal{H}_{0}=\bigcap_{\substack{v\in\mathcal{V}\\d\in\{0,1\}}} \mathcal{H}_{0,(v,d)}=\bigcap_{\substack{v\in\mathcal{V}\\d\in\{0,1\}}}  \underbrace{\bigcap_{\substack{(w,w')\in\mathcal{V}^2\\w\neq w'}} \mathcal{H}_{0,(v,d)}^{(w,w')}}_{\mathcal{H}_{0,(v,d)}},
\end{align}
which is an intersection hypothesis. Assume that $V$ has a discrete finite support, and that we are able to compute valid p-values for each of the individual null hypotheses $\mathcal{H}_{0,(v,d)}^{(w,w')}$. In the strong case, these could come from a sup-norm or Cram\'er von Mises type test based on the uniform consistency discussed in Appendix \ref{sec:AppFDP_UniformConsistency}. The main goal is then to combine these p-values to obtain a valid p-value for the global intersection null $\mathcal{H}_0$. Without making any assumptions on the dependence structure of the individual p-values, a valid approach could be a (weighted) Bonferroni procedure \cite[Sec.\ 11.1.1]{wassmer2016group}. If we are also interesting in valid testing for the individual nulls $\mathcal{H}_{0,(v,d)}^{(w,w')}$ or the intermediates $\mathcal{H}_{0,(v,d)}$, the closed testing procedure \cite{marcus1976closed} can be used  to maintain Type-I error control. 

When $\mathcal{V}$ is infinite, the former procedures cannot be applied, as we have an infinite collection of individual hypotheses. In this case  (rarely found in practice) the procedure described in \cite{blanchard2014testing} can be applied  to test the global intersection null when $\mathcal{V}$ is not finite while ensuring Type-I error control.

\section{Classification accuracy and demographic insensitivity in a parametric example}\label{sec:AppParametricModel}
Consider a particular case of the model introduced in Remark \ref{remark:ReciprocateNotTrue}, where $V$ is a univariate random variable with an absolutely continuous distribution w.r.t.\ the Lebesgue measure, $V\sim\mathbb{P}_V$. Furthermore, assume that $\mathbb{P}_V$ supported on an open connected interval, and consider the following hierarchical model
\begin{equation*}
    \begin{cases}
        D\mid V\sim Ber(\expit(\beta_0+\beta_1 V))\\
        X\mid D,V\sim\mathcal{N}(\mu_0+\mu_1V+\mu_2D,\sigma^2)
    \end{cases},
\end{equation*}
where $Ber(p)$ denotes a Bernoulli distribution with success probability $p\in(0,1)$, $\sigma>0$ and $\expit(x)=(1+\exp(-x))^{-1}$. One could then represent $X\stackrel{d}{=}\mu_0+\mu_1V+\mu_2D+\varepsilon$, with $\varepsilon\sim\mathcal{N}(0,\sigma^2)$ independent of $D$ and $V$. Then, under the correction in Equation \eqref{eq:AgeEduCorrection}, we have $$Z\stackrel{d}{=}\frac{\mu_2D+\varepsilon}{\sigma},$$ meaning that Assumption \ref{ass:WellDesignedCorrection} is satisfied by design: $Z\independent V\mid D=0$.
\subsection{Superiority of the raw score}\label{sec:AppParametricModel_X}
We now consider Assumption \ref{ass:TestWellDesigned}, for which  $$LR_{X\mid V}(x\mid v)=\frac{\phi_\sigma(x-\mu_0-\mu_2-\mu_1v)}{\phi_\sigma(x-\mu_0-\mu_1v)}=\exp(-\sigma^{-2}(\mu_2^2/2-\mu_2(x-\mu_0-\mu_1v))),$$ where $\phi_\sigma(\cdot)=\phi(\cdot/\sigma)/\sigma$, and $\phi(\cdot)$ is the density of a standard normal random variable. Therefore $$\partial_x\log(LR_{X\mid V}(x\mid v))=\mu_2/\sigma^2,$$ meaning Assumption \ref{ass:TestWellDesigned} is satisfied in our setting if and only if $\mu_2\leq 0$. 

Lastly we study Assumption \ref{ass:Comonotonicity}. Denote $\pi(v)=\expit(\beta_0+\beta_1 v)$ and $\pi(x,v)=\mathbb{P}(D=1\mid X=x,V=v)$. One can easily check that
\begin{align*}
    \logit(\pi(x,v))&=\logit(\pi(v))+\log(LR_{X\mid V}(x\mid v))\\&=\beta_0+\beta_1v-\frac{\mu_2^2/2-\mu_2(x-\mu_0-\mu_1v)}{\sigma^2},
\end{align*}
where $\logit(p)=\log(p/(1-p))$ for $p\in(0,1)$. This means that  $$\partial_v\logit(\pi(x,v))=\beta_1-\frac{\mu_1\mu_2}{\sigma^2}.$$ Furthermore, the other quantity of interest, as $X$ has an absolutely continuous distribution,  is $$\mathbb{P}(X\leq x\mid V=v,D=0)=\Phi\left(\frac{x-\mu_0-\mu_1v}{\sigma}\right),$$ where $\Phi$ is the  standard normal c.d.f. This leads to $$\partial_v\mathbb{P}(X\leq x\mid V=v,D=0)=-\mu_1\phi_\sigma(x-\mu_0-\mu_1v).$$ The co-monotonicity required by Assumption \ref{ass:Comonotonicity} translates in our case to the two derivatives we have computed having constant and equal sign. Hence,  Assumption \ref{ass:Comonotonicity} boils down to 
\begin{equation}\label{eq:ParametricConditionC}
    -\mu_1\left(\beta_1-\frac{\mu_1\mu_2}{\sigma^2}\right)\geq 0.
\end{equation}
As we mentioned in the main text, we can see how condition \eqref{eq:ParametricConditionC} holds if and only if Assumption \ref{ass:Comonotonicity} holds \textit{for all} $\alpha\in(0,1)$.
\subsection{Superiority of the corrected score}\label{sec:AppParametricModel_Z}
We now check the conditions of Theorem \ref{thm:Superiority_Z}. Firstly $$LR_{Z|V}(z|v)=\dfrac{\phi(z-\mu_2/\sigma)}{\phi(z)}=\exp\left(\frac{z\mu_2}{\sigma}-\frac{\mu_2^2}{2\sigma^2}\right),$$ and therefore Assumption \ref{ass:TestWellDesigned} is satisfied for $Z$ iff $\mu_2\leq0$, which was also the condition for this assumption to be satisfied for $X$, as we already mentioned in the main text. 

We now turn our attention to Assumption \ref{ass:Countermonotonicity}. Analogously computations to those done in Appendix \ref{sec:AppParametricModel_X} yield that $$\partial_v\logit(\mathbb{P}(D= 1\mid Z=z,V=v))=\beta_1,$$ meaning Assumption \ref{ass:Countermonotonicity} is satisfied if and only if 
\begin{equation}\label{eq:ParametricCondition_Z}
    \beta_1\mu_1\geq 0.
\end{equation}
Under the assumption that $\mu_2<0$, Equations \eqref{eq:ParametricConditionC} and \eqref{eq:ParametricCondition_Z} are simultaneously satisfied if and only if $\mu_1=0$. In this case $X\independent V|D$ and trivially the ROC curves for $X$ and $Z$ are equal, in accordance with Corollary \ref{coro:Equality}.
\subsection{Demographic insensitivity}\label{sec:AppParametricModel_FDP}
We now turn to discussing whether raw and corrected scores are insensitive to  demographics in our simplified parametric setting. When comparing this to the formalism introduced in Section \ref{sec:FDP} and Appendix \ref{sec:App_FDP}, we see that we are in a setting where $L=W=\emptyset$, also discussed in Section \ref{sec:FDP_Main}. In this case, a score $S$ satisfies demographic insensitivity if and only if $S\independent V\mid D$. We see how in this example  the corrected z-score $Z$ insensitive to demographics.

The raw score $X$ would be insensitive to demographics if and only if $\mu_1=0$. This corresponds to the case  of the absence of the edge $V\to X$ in the system's causal DAG. This implies $X\independent V\mid D$, and thus under any Type I correction raw and corrected scores have the same marginal classification accuracy (see Lemma \ref{lemma:IndependenceImplication} and Corollary \ref{coro:Equality}). In our example  $Z\stackrel{d}{=}(X-\mu_0)/\sigma$, and we see how the previous result holds as $Z$ is a strictly increasing transformation of $X$.

To summarize, we see how in our parametric example the corrected z-score is always demographic insensitive, even in the cases where the raw score has better classification accuracy. Furthermore, if the raw score satisfies demographic insensitivity then the correction is irrelevant towards classification accuracy, as both raw and corrected scores are demographic insensitive.

\section{Analysis of the OASIS-3 data}\label{sec:AppOASIS}
\subsection{The classification accuracy comparison in Section \ref{sec:OASIS3_Superiority}}\label{sec:AppOASIS_SuperiorityCLT}
In Section \ref{sec:OASIS3_Superiority} we estimated  $$\tau(t)\defeq \mathbb{E}\left[LR_{(X,V)}(t,V)(\alpha_V^X(t)-\alpha(t))\mid D=0\right],$$ where recall that  $\alpha_v^X=\alpha^X_v(t)=F_{X,0}(t|v)$ and $\alpha=\alpha(t)=F_{X,0}(t)$. As $V$ has a discrete support we can expand $\tau(t)=\theta_1-\theta_2$, where
\begin{subequations}\label{eq:SuperiorityFunctions}
    \begin{align}
    \theta_1&=g_1(\boldsymbol{p})=\frac{1}{\pi_1}\sum_v\frac{p(t,v,1)}{p(t,v,0)}\left(\sum_{x\leq t}p(x,v,0)\right),\label{eq:SuperiorityFunction_1}\\
    \theta_2&=g_2(\boldsymbol{p})=\frac{T}{\pi_1(1-\pi_1)}\sum_v\frac{p(t,v,1)}{p(t,v,0)}\left(\sum_{x}p(x,v,0)\right),\label{eq:SuperiorityFunction_2}
\end{align}
\end{subequations}

where we have defined $p(x,v,d)=\mathbb{P}(X=x,V=v,D=d)$, $\pi_1=\mathbb{P}(D=1)=\sum_{x,v}p(x,v,1)$, and $T=\mathbb{P}(X\leq t, D=0)=\sum_{v,x\leq t}p(x,v,0)$, and by $\boldsymbol{p}$ we denote the vector of all $p(x,v,d)$ cell probabilities. The estimator introduced in Section \ref{sec:OASIS3_Superiority} is a plug-in estimator where each of the cell probabilities is estimated by a sample mean $$\widehat{p}(x,v,d)=\widehat{\mathbb{P}}_n(X=x,V=v,D=d).$$ The next theorem provides a central limit theorem with an expression of the asymptotic variance for the resulting plug-in estimator $\widehat{\tau}(t)$. 
\begin{theorem}\label{thm:CLT_Superiority}
    The plug-in estimator $\widehat{\tau}(t)$ built using sample averages estimators for the cell probabilities satisfies $$\sqrt{n}(\widehat{\tau}(t)-\tau(t))\stackrel{\mathcal{D}}{\to}\mathcal{N}(0,\sigma^2),$$ where $\sigma^2=\mathbb{E}[\xi_1^2] + \mathbb{E}[\xi_2^2]-2\mathbb{E}[\xi_1\xi_2]$ and we have defined
    \begin{align*}
        b_v&=\frac{p(t,v,1)}{p(t,v,0)},\\
        c_v&=\sum_{x\leq t}\frac{p(x,v,0)}{p(t,v,0)},\\
        \xi_1&=\frac{I(D=1)}{\pi_1}(I(X=t)c_V-\theta_1)+\frac{I(D=0)}{\pi_1}b_V(I(X\leq t)-I(X=t)c_V),\\
        r_v&=p(t,v,1),\\
        a_v&=p(t,v,0),\\
        k_v&=\sum_xp(x,v,0),\\
        U&=\sum_v k_v\frac{r_v}{a_v},\\
        B&=\frac{T}{\pi_1(1-\pi_1)},\\
        A&=T\frac{2\pi_1-1}{(\pi_1(1-\pi_1))^2}U,\\
        \xi_2&=I(D=1)\left(A+I(X=t)B\frac{k_V}{a_V}\right)\\&\quad+I(D=0)\left(I(X\leq t)\frac{U}{\pi_1(1-\pi_1)}+B\frac{r_V}{a_V}\left(1-I(X=t)\frac{k_V}{a_V}\right)\right)\\&\quad-(A\pi_1+2BU).
    \end{align*}
    Furthermore, a consistent estimator of the asymptotic variance can be built  in a plug-in fashion as 
    \begin{equation}\label{eq:CLT_Superiority_VarianceEstimator}
        \widehat{\sigma^2}_n=\widehat{\mathbb{E}}_n\left[\widehat{\xi}_1^2\right] + \widehat{\mathbb{E}}_n\left[\widehat{\xi}_2^2\right]-2\widehat{\mathbb{E}}_n\left[\widehat{\xi}_1\widehat{\xi}_2\right],
    \end{equation}
    where $\widehat{\xi}_1$ and $\widehat{\xi}_2$ are constructed by plugging-in the estimated cell probabilities.
\end{theorem}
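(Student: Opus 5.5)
The approach is the multivariate delta method applied to the multinomial cell probabilities. The data $(X_i,V_i,D_i)$ are i.i.d. with finite discrete support, so the vector of empirical cell frequencies $\widehat{\boldsymbol{p}}$ is an average of i.i.d. indicator vectors $\boldsymbol{e}_i=(I(X_i=x,V_i=v,D_i=d))_{x,v,d}$. By the classical multivariate CLT, $\sqrt n(\widehat{\boldsymbol p}-\boldsymbol p)\to\mathcal N(0,\diag(\boldsymbol p)-\boldsymbol p\boldsymbol p^\top)$. Since $\widehat\tau(t)=g_1(\widehat{\boldsymbol p})-g_2(\widehat{\boldsymbol p})$, I will proceed as follows.

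First, I check that $g=g_1-g_2$ is continuously differentiable in a neighbourhood of $\boldsymbol p$. This follows because $g$ is a rational function whose denominators $\pi_1$, $1-\pi_1$ and $p(t,v,0)$ are strictly positive by the positivity assumption: $t$ lies in the support of $X\mid V=v,D=d$ for every $v,d$. The delta method then gives $\sqrt n(\widehat\tau-\tau)\to\mathcal N(0,\nabla g^\top\Sigma\nabla g)$.

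Second, I rewrite the delta-method variance as the variance of the influence function. The influence function is $\xi=\nabla g(\boldsymbol p)^\top(\boldsymbol e-\boldsymbol p)=\sum_{x,v,d}\partial_{p(x,v,d)}g\,\bigl(I(X=x,V=v,D=d)-p(x,v,d)\bigr)$. This is a mean-zero function of a single observation, and $\nabla g^\top\Sigma\nabla g=\mathbb E[\xi^2]$. Writing $\xi=\xi_1-\xi_2$, with $\xi_j$ the influence function of $\theta_j$, gives $\sigma^2=\mathbb E\xi_1^2+\mathbb E\xi_2^2-2\mathbb E\xi_1\xi_2$.

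Third, I compute the partial derivatives and check that they reproduce the stated $\xi_1,\xi_2$.

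For $\theta_1=\pi_1^{-1}\sum_v b_v\sum_{x\le t}p(x,v,0)=\pi_1^{-1}\sum_v p(t,v,1)c_v$, three cell types contribute:
\begin{itemize}
\item cells $(t,v,1)$, through the numerator $c_v/\pi_1$;
\item all cells with $d=1$, through $\pi_1$, contributing $-\theta_1/\pi_1$;
\item cells $(x,v,0)$ with $x\le t$, through $b_v$ times the derivative of $\sum_{x\le t}p(x,v,0)/p(t,v,0)$. This derivative gives $1$ for every $x\le t$, and the extra $-c_v$ at $x=t$ coming from the denominator.
\end{itemize}
Collecting these as indicator functions of a single observation gives exactly $\xi_1$, after checking that the centring constants cancel to make $\mathbb E\xi_1=0$.

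For $\theta_2=B\,U$, with $B=T/(\pi_1(1-\pi_1))$ and $U=\sum_v k_v r_v/a_v$:
\begin{itemize}
\item the derivative through $\pi_1$ yields the constant $A$ on $D=1$ cells, using $\partial_{\pi_1}[\pi_1(1-\pi_1)]^{-1}=(2\pi_1-1)/(\pi_1(1-\pi_1))^2$;
\item the derivative through $T$ gives $U/(\pi_1(1-\pi_1))$ on cells with $d=0,x\le t$;
\item the derivative through $r_v$ gives $Bk_v/a_v$ on $(t,v,1)$;
\item the derivative through $k_v$ gives $Br_v/a_v$ on all $(x,v,0)$;
\item the derivative through $a_v$ gives $-Bk_vr_v/a_v^2$ on $(t,v,0)$.
\end{itemize}
The centring constant is $\sum p\,\partial g$. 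By homogeneity bookkeeping this equals $A\pi_1+2BU$: the $T$, $r$ and $k$ terms contribute $BU$ each, and the $a$ term contributes $-BU$. The resulting expression matches $\xi_2$.

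Finally, for the variance estimator, $\xi_1,\xi_2$ are continuous functions of $\boldsymbol p$ and $(X,V,D)$ taking finitely many values. Hence $\widehat{\mathbb E}_n[\widehat\xi_j\widehat\xi_k]-\mathbb E[\xi_j\xi_k]$ splits into two pieces. The first is the law of large numbers applied to $\xi_j\xi_k$. The second is bounded by $\max_{\text{support}}|\widehat\xi_j\widehat\xi_k-\xi_j\xi_k|\to0$, which follows from consistency of $\widehat{\boldsymbol p}$ and the continuous mapping theorem. Together these give consistency of $\widehat{\sigma^2}_n$.

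The main obstacle is the bookkeeping in the third step, particularly for $\xi_2$. There, $\pi_1$, $T$, $k_v$ and $a_v$ share cells, so each indicator collects several contributions, and the constant $A\pi_1+2BU$ must be verified carefully. I would organise the computation by cell type (that is, by $(D,\,I(X\le t),\,I(X=t))$) to keep it straight.
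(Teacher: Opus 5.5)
Your proposal is correct and follows essentially the same route as the paper. Both arguments linearize $g=g_1-g_2$ around the multinomial cell probabilities by the delta method, so that the influence function is $\varphi=\xi_1-\xi_2=\nabla_{\boldsymbol p}g(\boldsymbol p)'(\boldsymbol e(Q)-\boldsymbol p)$; both compute it by cell-by-cell partial derivatives, with centring constant $0$ for $\theta_1$ and $A\pi_1+2BU$ for $\theta_2$, and both handle the plug-in variance estimator by the law of large numbers and continuous mapping. Your explicit breakdown of the $\theta_2$ derivatives and their centring constant is actually more detailed than the paper, which only says that this computation is analogous to the one for $\xi_1$.
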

\begin{proof}
    Denote by $Q=(X,V,D)$ and $\widehat{\tau}(t)=g(\widehat{\boldsymbol{p}})$. As our estimator is a plug-in estimator $\widehat{\theta}=g(\widehat{\boldsymbol{p}})$ the (functional) $\delta$-method \cite[Thms. 3.1 \& 20.8]{Vaart1998asymptotic} tells us that 
    \begin{align*}
        \sqrt{n}(\widehat{\tau}(t)-{\tau}(t))&=\sqrt{n}\nabla_{\boldsymbol{p}}g(\boldsymbol{p})'(\widehat{\boldsymbol{p}}-\boldsymbol{p})+o_{\mathbb{P}}(1)\\
        &=\sqrt{n}\left[\frac{1}{n}\sum_{i=1}^n \underbrace{\nabla_{\boldsymbol{p}}g(\boldsymbol{p})'(\boldsymbol{e}(Q_i)-\boldsymbol{p})}_{=\colon \varphi(Q_i)}  \right] +o_{\mathbb{P}}(1)\\
        &=\sqrt{n}\widehat{\mathbb{E}}_n[\varphi(Q)]+o_{\mathbb{P}}(1),
    \end{align*}
    where $\boldsymbol{e}(Q)$ is the ``one-hot'' vector with components $\boldsymbol{e}_{(x,v,d)}(Q)=I(Q=(x,v,d))$. It holds that $\mathbb{E}[\varphi(Q)]=0$ because $\mathbb{E}[\boldsymbol{e}(Q)]=\boldsymbol{p}$, and $\varphi(Q)$ has finite variance as $\nabla_{\boldsymbol{p}}g(\boldsymbol{p})$ exists and has finite components (as we will see). Therefore, thanks to the i.i.d. nature of the observed data and the standard central limit theorem, we have that $$\sqrt{n}(\widehat{\tau}(t)-{\tau}(t))\stackrel{\mathcal{D}}{\to}\mathcal{N}({0},\mathbb{E}[\varphi(Q)^2]).$$ Therefore, it only remains to compute $\varphi(Q)$. With $g=g_1-g_2$ as defined in Equation \eqref{eq:SuperiorityFunctions} we have $$\varphi(Q)=\underbrace{\nabla_{\boldsymbol{p}}g_1(\boldsymbol{p})'(\boldsymbol{e}(Q)-\boldsymbol{p})}_{\xi_1}-\underbrace{\nabla_{\boldsymbol{p}}g_2(\boldsymbol{p})'(\boldsymbol{e}(Q)-\boldsymbol{p})}_{\xi_2}=\xi_1-\xi_2,$$ and thus $\sigma^2=\mathbb{E}[\varphi(Q)^2]=\mathbb{E}[\xi_1^2] + \mathbb{E}[\xi_2^2]-2\mathbb{E}[\xi_1\xi_2]$. It only remains to show that the expressions of the $\xi$'s correspond to the ones given in this theorem. We present here the calculations for $\xi_1$.  Attending to the definition of $g_1$ in Equation \eqref{eq:SuperiorityFunction_1}, we apply the chain rule for a $d=1$ cell  and conclude that
    \begin{align*}
        \partial_{p(x,v,1)}g_1(\boldsymbol{p})&=\frac{I(x=t)}{\pi_1}\frac{\sum_{x'\leq t}p(x',v,0)}{p(t,v,0)}-\frac{\theta_1}{\pi_1}.
    \end{align*}
    For a $d=0$ cell we have
    \begin{align*}
        \partial_{p(x,v,0)}g_1(\boldsymbol{p})&=\begin{cases}
            \frac{1}{\pi_1}\frac{p(t,v,1)}{p(t,v,0)}&x<t\\
            0&x>t\\
            \frac{1}{\pi_1}\frac{p(t,v,1)}{p(t,v,0)}\left(1-\sum_{x'\leq t}\frac{p(x',v,0)}{p(t,v,0)}\right)&x=t
        \end{cases}\\
        &=\frac{1}{\pi_1}\frac{p(t,v,1)}{p(t,v,0)}\left(I(x\leq t)-I(x=t)\sum_{x'\leq t}\frac{p(x',v,0)}{p(t,v,0)}\right).
    \end{align*}
    We have $$\xi_1=\sum_{(x,v,d)}\partial_{p(x,v,d)}g_1(\boldsymbol{p})\boldsymbol{e}_{(x,v,d)}(Q)-\underbrace{\sum_{(x,v,d)}\partial_{p(x,v,d)}g_1(\boldsymbol{p})p(x,v,d)}_{(K)},$$ but this second term $(K)$ adds up to zero, as 
    $$\sum_{(x,v)}\partial_{p(x,v,1)}g_1(\boldsymbol{p})p(x,v,1)=\frac{1}{\pi_1}\sum_vp(t,v,1)c_v-\frac{\theta_1}{\pi_1}\sum_{(x,v)}p(x,v,1)=\theta_1-\theta_1=0,$$ and 
    \begin{align*}
        \sum_{(x,v)}\partial_{p(x,v,0)}g_1(\boldsymbol{p})p(x,v,0)&=\frac{1}{\pi_1}\sum_vb_v\left(\sum_{x\leq t}p(x,v,0)\right)-\frac{1}{\pi_1}\sum_vb_vc_vp(t,v,0)\\&=\theta_1-\frac{1}{\pi_1}\sum_vb_v\left(\sum_{x\leq t}p(x,v,0)\right)=\theta_1-\theta_1=0.
    \end{align*}
    As a result, we conclude that $$\xi_1=\sum_{(x,v,d)}\partial_{p(x,v,d)}g_1(\boldsymbol{p})\boldsymbol{e}_{(x,v,d)}(Q)=\partial_{p(Q)}g_1(\boldsymbol{p}),$$ which is precisely the expression we wanted. The computation of $\xi_2$, while a bit more arduous, follows the same steps, with the only difference that the term analogous to $(K)$ will not be zero, but rather $A\pi_1+2BU$.

    Lastly, the fact that the estimator proposed in Equation \eqref{eq:CLT_Superiority_VarianceEstimator} is consistent for $\mathbb{E}[\varphi(Q)^2]$ follows from the law of large numbers and the continuous mapping theorem.
\end{proof}
Theorem \ref{thm:CLT_Superiority} can be used to construct an asymptotically exact one-sided confidence interval at level $\alpha^*$ for $\tau(t)$, with lower bound $$\widehat{\tau}(t)-\Phi^{-1}(1-\alpha^*)\sqrt{\widehat{\sigma^2}_n/n}.$$ These are the values reported in Table \ref{tab:OASIS_Superiority_NP} in Section \ref{sec:OASIS3_Superiority} used to test the superiority of the raw MMSE score. 

\subsection{The demographic insensitivity study of Section \ref{sec:OASIS3_FDP}}\label{sec:App_OASIS_FDP}
\subsubsection{General testing set-up}\label{sec:App_OASIS_FDP_CLT}
As explained in Section \ref{sec:OASIS3_FDP}, as all variables in the system have discrete support, we estimate the counterfactual probabilities of interest  using the simple estimator, where all factual probabilities are estimated by sample averages. We can expand the target probabilities, similarly to what we did in Section \ref{sec:AppOASIS_SuperiorityCLT}, to clarify the dependence with the cell probabilities as
\begin{align*}
    &\mathbb{P}(Y^{v_S}=1\mid V=v,D=d)=g_{(v_S,v,d)}(\boldsymbol{p})\\=&\sum_{l,w}p(1,l,w,v_S,d)\cdot\frac{\sum_{y=0}^1p(y,l,w,v,d)}{\sum_{y=0}^1p(y,l,w,v_S,d)}\cdot\frac{1}{\sum_{l',w'}\sum_{y'=0}^1p(y',l',w',v,d)},
\end{align*}
where we have defined $Y=I(S\leq t)$ for a score $S$ and a fixed threshold $t$,   the cell probabilities $p(y,l,w,v,d)=\mathbb{P}(Y=y,L=l,W=w,V=v,D=d)$ and $\boldsymbol{p}$ denotes the vector with all these cell probabilities.

As explained in Section \ref{sec:OASIS3_FDP}, and analogously to Section \ref{sec:AppOASIS_SuperiorityCLT}, the simple estimator is a plug-in one based on empirical averages of the cell probabilities in $\boldsymbol{p}$. Now denote by $\boldsymbol{\theta}=\boldsymbol{g}(\boldsymbol{p})$ the vector of all counterfactual probabilities of interest, and by $\boldsymbol{\eta}=\logit(\boldsymbol{\theta})$ the vector of these logit-probabilities, where we understand the logit map applied component-wise. Our analysis is based on the following central limit theorem. 
\begin{theorem}\label{thm:CLT_FPD}
    The plug-in estimator $\widehat{\boldsymbol{\eta}}=\logit(\boldsymbol{g}(\widehat{\boldsymbol{p}}))$ of the logit-probabilities of interest for the demographic insensitivity comparison satisfies
    $$\sqrt{n}(\widehat{\boldsymbol{\eta}}-\boldsymbol{\eta})\stackrel{\mathcal{D}}{\to}\mathcal{N}(\boldsymbol{0},\Sigma_{\boldsymbol{\eta}}),$$
    where $\Sigma_{\boldsymbol{\eta}}=HG(\diag(\boldsymbol{p})-\boldsymbol{p}\boldsymbol{p}')G'H$. The matrix $G$ is the Jacobian matrix of the map $\boldsymbol{g}$ evaluated at $\boldsymbol{p}$, with entries given by 
    \begin{align*}
        &\partial_{p(y',l',w',v',d')}g_{(v_S,v,d)}(\boldsymbol{p})\\=&\begin{cases}
            \frac{1}{b(v,d)}(\frac{p(1,l',w',v_S,d)}{a(l',w',v_S,d)}-g_{(v_S,v,d)}(\boldsymbol{p})) & d=d', v_S\neq v'=v\\
            \frac{a(l',w',v,d)}{b(v,d)a(l',w',v_S,d)}\left(I(y'=1)-\frac{p(1,l',w',v_S,d)}{a(l',w',v_S,d)}\right)& d=d', v_S= v'\neq v\\
            \frac{1}{b(v,d)}(I(y'=1)-g_{(v,v,d)}(\boldsymbol{p})) & d=d', v_S= v'= v\\
            0 & \text{otherwise}
        \end{cases},
    \end{align*}
    where we have further defined
    \begin{align*}
        a(l,w,v,d)&=\mathbb{P}(L=l,V=v,D=d,W=w)= \sum_{y=0}^1p(y,l,w,v,d),\\
        b(v,d)&=\mathbb{P}(V=v,D=d)=\sum_{l',w'}\sum_{y'=0}^1p(y',l',w',v,d).
    \end{align*}
    Finally, the $H$ matrix is the (diagonal) Jacobian matrix of the component-wise $\logit$ map evaluated in $\boldsymbol{\theta}$. 
\end{theorem}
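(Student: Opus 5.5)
The plan is a multivariate delta method, mirroring the proof of Theorem \ref{thm:CLT_Superiority}. The cell-probability estimator $\widehat{\boldsymbol{p}}$ is the empirical mean of the i.i.d.\ one-hot vectors $\boldsymbol{e}(Q_i)$, where $Q=(Y,L,W,V,D)$. Each such vector is multinomial with mean $\boldsymbol{p}$ and covariance $\diag(\boldsymbol{p})-\boldsymbol{p}\boldsymbol{p}'$. The multivariate central limit theorem therefore gives
$$\sqrt{n}(\widehat{\boldsymbol{p}}-\boldsymbol{p})\stackrel{\mathcal{D}}{\to}\mathcal{N}(\boldsymbol{0},\diag(\boldsymbol{p})-\boldsymbol{p}\boldsymbol{p}').$$
Next we write $\widehat{\boldsymbol{\eta}}=(\logit\circ\boldsymbol{g})(\widehat{\boldsymbol{p}})$. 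Under the positivity Assumption \ref{ass:Positivity}, every denominator $a(l,w,v,d)$ and $b(v,d)$ is strictly positive at $\boldsymbol{p}$. We also need every $\theta$ to lie in $(0,1)$, which we assume so that the logit is finite. Then $\logit\circ\boldsymbol{g}$ is continuously differentiable in a neighbourhood of $\boldsymbol{p}$. The delta method \cite[Thm.\ 3.1]{Vaart1998asymptotic} and the chain rule give asymptotic covariance $(HG)(\diag(\boldsymbol{p})-\boldsymbol{p}\boldsymbol{p}')(HG)'$. Since $H$ is diagonal, $H'=H$, and this equals $\Sigma_{\boldsymbol{\eta}}$ as stated. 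The entries of $H$ are simply $1/(\theta(1-\theta))$.

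What remains, and the only real work, is computing the Jacobian $G$. I would write
$$g_{(v_S,v,d)}=\frac{1}{b(v,d)}\sum_{l,w}p(1,l,w,v_S,d)\,\frac{a(l,w,v,d)}{a(l,w,v_S,d)}$$
and differentiate with respect to a single cell $p(y',l',w',v',d')$. The case $d'\neq d$ is immediate: no term depends on the cell, so the derivative is zero. For $d'=d$ I would split into three cases by how $v'$ relates to $v$ and $v_S$.

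\emph{Case $v'=v\neq v_S$.} The cell enters through $a(l',w',v,d)$ and through $b(v,d)$, each with coefficient one in $y'$. The quotient rule gives $\frac{1}{b}\bigl(p(1,l',w',v_S,d)/a(l',w',v_S,d)-g\bigr)$.

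\emph{Case $v'=v_S\neq v$.} The cell enters the numerator $p(1,\cdot)$ only when $y'=1$, and it enters the denominator $a(l',w',v_S,d)$ for either value of $y'$. This produces $\frac{a(l',w',v,d)}{b\,a(l',w',v_S,d)}\bigl(I(y'=1)-p(1,l',w',v_S,d)/a(l',w',v_S,d)\bigr)$.

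\emph{Case $v'=v=v_S$.} The ratio $a/a$ cancels, so $g=\sum_{l,w}p(1,l,w,v,d)/b(v,d)$. Its derivative is $(I(y'=1)-g)/b$.

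Any cell with $v'\notin\{v,v_S\}$ does not appear, so its derivative is zero.

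I expect the main obstacle to be bookkeeping rather than anything conceptual. One point is keeping track of which cells feed into the sums $a$ and $b$. The other is the degenerate case $v_S=v$, where the two sources of dependence merge and the naive sum of the first two formulas must be replaced by the direct computation above.
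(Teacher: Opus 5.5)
Your proposal is correct and follows the paper's proof essentially step for step: the multinomial CLT for $\widehat{\boldsymbol{p}}$, the delta method through $\logit\circ\boldsymbol{g}$ (the paper applies it in two stages, you once via the chain rule), and the same three-case Jacobian computation. One small aside: when $v_S=v$, the sum of your first two case formulas already equals $\frac{1}{b(v,d)}\bigl(I(y'=1)-g_{(v,v,d)}(\boldsymbol{p})\bigr)$, because the chain rule simply adds the contributions from each place the cell appears; so nothing actually needs to be ``replaced'', and your direct computation serves as a consistency check.
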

\begin{proof}
We begin by noting that, as the cell probabilities are estimated by sample means, the multinomial CLT \cite[Ex. 2.18]{Vaart1998asymptotic} tells us that $$\sqrt{n}(\widehat{\boldsymbol{p}}-\boldsymbol{p})\stackrel{\mathcal{D}}{\to}\mathcal{N}(\boldsymbol{0},\Sigma_{\boldsymbol{p}}), $$ where $\Sigma_{\boldsymbol{p}}=\diag(\boldsymbol{p})-\boldsymbol{p}\boldsymbol{p}'$. By the (functional) $\delta$-method \cite[Thms. 3.1 \& 20.8]{Vaart1998asymptotic}, we have that $$\sqrt{n}(\widehat{\boldsymbol{\theta}}-\boldsymbol{\theta})\stackrel{\mathcal{D}}{\to}\mathcal{N}(\boldsymbol{0},G\Sigma_{\boldsymbol{p}}G'), $$ where $G$ is $\boldsymbol{g}$'s Jacobian matrix $G=D_{\boldsymbol{p}}\boldsymbol{g}(\boldsymbol{p})$. Lastly, applying the component-wise logit map and again the $\delta$-method we conclude that $$\sqrt{n}(\widehat{\boldsymbol{\eta}}-\boldsymbol{\eta})\stackrel{\mathcal{D}}{\to}\mathcal{N}(\boldsymbol{0},HG\Sigma_{\boldsymbol{p}}G'H), $$ where $H=\diag(h(\boldsymbol{\theta}))$,  $h(x)=1/(x(1-x))$, and we understand $h(\boldsymbol{\theta})$ as applied component-wise. 

It only remains to compute $G$. To that extent, we begin by writing each component of $\boldsymbol{g}$ in terms of the $a$ and $b$ functions as $$\boldsymbol{g}_{(v_S,v,d)}(\boldsymbol{p})=\frac{1}{b(v,d)}\sum_{l,w}p(1,l,w,v_S,d)\frac{a(l,w,v,d)}{a(l,w,v_S,d)}.$$ Fix a $p(y',l',w',v',d')$ cell. Then  $\boldsymbol{g}_{(v_S,v,d)}(\boldsymbol{p})$ only involves $p(y',l',w',v,d)$ and $p(y',l',w',v_S,d)$  cells, so we can restrict to the case $d'=d$ and $v'\in\{v,v_S\}$. We then consider three cases. First, if $v_S\neq v'=v$ we have that $p(y',l',w',v,d)$ enters $\boldsymbol{g}_{(v_S,v,d)}(\boldsymbol{p})$ in $b(v,d)$ and $a(l,w,v,d)$, yielding by the chain rule $$\partial_{p(y',l',w',v',d)}\boldsymbol{g}_{(v_S,v,d)}(\boldsymbol{p})=-\frac{\boldsymbol{g}_{(v_S,v,d)}(\boldsymbol{p})}{b(v,d)}+\frac{p(1,l',w',v_S,d)}{b(v,d)a(l',w',v_S,d)}. $$ Second, if $v_S= v'\neq v$, $p(y',l',w',v_S,d)$ appears in $a(l,w,v_S,d)$ and potentially in $p(1,l,w,v_S,d)$ if $y'=1$, yielding 
\begin{align*}
    \partial_{p(y',l',w',v',d)}\boldsymbol{g}_{(v_S,v,d)}(\boldsymbol{p})=&\frac{I(y'=1)}{b(v,d)}\frac{a(l',w',v,d)}{a(l',w',v_S,d)}\\&-\frac{1}{b(v,d)}p(1,l',w',v_S,d)\frac{a(l',w',v,d)}{a(l',w',v_S,d)^2}\\=&\frac{1}{b(v,d)}\frac{a(l',w',v,d)}{a(l',w',v_S,d)}\left[I(y'=1)-\frac{p(1,l',w',v_S,d)}{a(l',w',v_S,d)}\right]
\end{align*}
and lastly, when  $v_S= v'= v$, we can first rewrite $$\boldsymbol{g}_{(v,v,d)}(\boldsymbol{p})=\frac{1}{b(v,d)}\sum_{l,w}p(1,l,w,v,d),$$ what leads to 
\begin{align*}
    \partial_{p(y',l',w',v,d)}\boldsymbol{g}_{(v,v,d)}(\boldsymbol{p})=&-\frac{1}{b(v,d)^2}\sum_{l,w}p(1,l,w,v,d)+\frac{I(y'=1)}{b(v,d)}\\=&\frac{1}{b(v,d)}\left[I(y'=1)-\boldsymbol{g}_{(v,v,d)}(\boldsymbol{p})\right].
\end{align*}
This concludes the proof.
\end{proof}
Each entry of $\Sigma_\eta$ as given in Theorem \ref{thm:CLT_FPD} can be consistently estimated using the plug-in principle and  the sample average estimate $\widehat{\boldsymbol{p}}$ of the vector of cell probabilities $\boldsymbol{p}$. As each individual null hypothesis $\mathcal{H}_{0,(v,d)}^{(w,w')}$ is equivalent to $\boldsymbol{\eta}_{(w,v,d)}-\boldsymbol{\eta}_{(w',v,d)}=0$, an asymptotically exact p-value for this individual null is $$2\left(1-\Phi\left(\sqrt{\frac{n}{\boldsymbol{\delta}_{(w,w',v,d)}'  \widehat{\Sigma_\eta}\boldsymbol{\delta}_{(w,w',v,d)}}}|\widehat{\boldsymbol{\eta}}'\boldsymbol{\delta}_{(w,w',v,d)}|\right)\right),$$ where $\boldsymbol{\delta}_{(w,w',v,d)}=\boldsymbol{e}(w,v,d)-\boldsymbol{e}(w',v,d)$ and $\boldsymbol{e}(v_S,v,d)$ is the ``one-hot'' vector indicator of component $(v_S,v,d)$. This p-value is symmetric in $(w,w')$, as so is the individual null $\mathcal{H}_{0,(v,d)}^{(w,w')}$. 

\subsubsection{Raw MMSE scores}\label{sec:App_Oasis_FDP_raw}
As explained in Section \ref{sec:OASIS3_FDP}, we will target the weak demographic insensitivity for the raw MMSE score at a threshold of $t=26$. Therefore, our goal  is to jointly estimate the counterfactual probabilities $\{\mathbb{P}(X^{w}\leq 26\mid V=v,D=d)\colon (v,w)\in\mathcal{V}^2,d=0,1\}$. As the socio-economic status is unlikely to be affected by $V_S$, we assume that the dismissible component conditions (Assumption \ref{ass:DCC}) hold under a partition such that $W=W_D$ and $W_S=\emptyset$. Under this assumption all effects of  $V_S$ on the raw MMSE score $X$ in the extended causal DAG are  not mediated by the cognitive impairment status $D$.  We can then estimate the counterfactual probabilities of interest using the simple estimator introduced in Appendix \ref{sec:FDP_Estimation}, with all the probabilities in the right-hand side of Equation \eqref{eq:Identification} estimated by sample averages. This is a consistent estimator as all the variables in the system have a discrete support.  By virtue of this fact, the weighted estimator $\widehat{\nu}_{weighted,X}(\cdot)$ numerically coincides with the simple one. Moreover, as no parametric models have been posited, there is no need to deploy the one-step estimator. We use Theorem \ref{thm:CLT_FPD} to compute asymptotically-valid p-values for the individual null hypotheses $\mathcal{H}_{0,(v,d)}^{(w,w')}$ introduced in Equation \eqref{eq:FDP_IndividualNullWeak}. These p-values can be seen in Table \ref{tab:pValues_FDP_X}. Applying a Bonferroni correction, we reject the global intersection null hypothesis of weak demographic insensitivity at a 5\% level. This means that we have significant evidence towards the fact that the raw MMSE score is not (weakly at $t=26$ or strongly) insensitive to  the age of individuals entering the OASIS-3 study. To provide a more comprehensive picture, we apply a Bonferroni-Holm correction \cite{holm1979simple} (which is equivalent to the closed testing principle \cite{marcus1976closed} with local unweighted Bonferroni adjustments) to test the intermediate intersection nulls $\mathcal{H}_{0,(v,d)}$ as defined in Equation \eqref{eq:GlobalIntersectionNull}. Again at a 5\% level we reject the intermediate intersection nulls $$\{\mathcal{H}_{0,(v,d=1)}\colon v\in\{\text{65-70, above 75}\}\}\cup\{\mathcal{H}_{0,(v,d=0)}\colon v\in\mathcal{V}\}.$$ This means that we have significant evidence that an individual could have had a different probability of being classified as having cognitive impairment, given their age group and  cognitive impairment status, had their sensitive attributes been different at the time of their MMSE assessment.

\begin{table}
\centering
\begin{tabular}{|cccc|}
\hline
\multicolumn{4}{|c|}{$d=0$, $v=$ below 65} \\ \hline
\multicolumn{1}{|c|}{}    & $w'=$ 65-70   & $w'=$ 70-75   & \multicolumn{1}{c|}{$w'=$ above 75} \\ \hline
\multicolumn{1}{|c|}{$w=$ below 65} & 0.126 & 0.039 & \multicolumn{1}{c|}{0.000} \\
\multicolumn{1}{|c|}{$w=$ 65-70}    &       & 0.582 & \multicolumn{1}{c|}{0.014} \\
\multicolumn{1}{|c|}{$w=$ 70-75}    &       &       & \multicolumn{1}{c|}{0.043} \\ \hline\hline
\multicolumn{4}{|c|}{$d=0$, $v=$ 65-70} \\ \hline
\multicolumn{1}{|c|}{}    & $w'=$ 65-70   & $w'=$ 70-75   & \multicolumn{1}{c|}{$w'=$ above 75} \\ \hline
\multicolumn{1}{|c|}{$w=$ below 65} & 0.108 & 0.015 & \multicolumn{1}{c|}{0.000} \\
\multicolumn{1}{|c|}{$w=$ 65-70}    &       & 0.358 & \multicolumn{1}{c|}{0.017} \\
\multicolumn{1}{|c|}{$w=$ 70-75}    &       &       & \multicolumn{1}{c|}{0.120} \\ \hline\hline
\multicolumn{4}{|c|}{$d=0$, $v=$ 70-75} \\ \hline
\multicolumn{1}{|c|}{}    & $w'=$ 65-70   & $w'=$ 70-75   & \multicolumn{1}{c|}{$w'=$ above 75} \\ \hline
\multicolumn{1}{|c|}{$w=$ below 65} & 0.163 & 0.021 & \multicolumn{1}{c|}{0.000} \\
\multicolumn{1}{|c|}{$w=$ 65-70}    &       & 0.299 & \multicolumn{1}{c|}{0.013} \\
\multicolumn{1}{|c|}{$w=$ 70-75}    &       &       & \multicolumn{1}{c|}{0.118} \\ \hline\hline
\multicolumn{4}{|c|}{$d=0$, $v=$ above 75} \\ \hline
\multicolumn{1}{|c|}{}    & $w'=$ 65-70   & $w'=$ 70-75   & \multicolumn{1}{c|}{$w'=$ above 75} \\ \hline
\multicolumn{1}{|c|}{$w=$ below 65} & 0.102 & 0.010 & \multicolumn{1}{c|}{0.000} \\
\multicolumn{1}{|c|}{$w=$ 65-70}    &       & 0.293 & \multicolumn{1}{c|}{0.022} \\
\multicolumn{1}{|c|}{$w=$ 70-75}    &       &       & \multicolumn{1}{c|}{0.189} \\ \hline\hline
\multicolumn{4}{|c|}{$d=1$, $v=$ below 65} \\ \hline
\multicolumn{1}{|c|}{}    & $w'=$ 65-70   & $w'=$ 70-75   & \multicolumn{1}{c|}{$w'=$ above 75} \\ \hline
\multicolumn{1}{|c|}{$w=$ below 65} & 0.840 & 0.759 & \multicolumn{1}{c|}{0.073} \\
\multicolumn{1}{|c|}{$w=$ 65-70}    &       & 0.892 & \multicolumn{1}{c|}{0.011} \\
\multicolumn{1}{|c|}{$w=$ 70-75}    &       &       & \multicolumn{1}{c|}{0.004} \\ \hline\hline
\multicolumn{4}{|c|}{$d=1$, $v=$ 65-70} \\ \hline
\multicolumn{1}{|c|}{}    & $w'=$ 65-70   & $w'=$ 70-75   & \multicolumn{1}{c|}{$w'=$ above 75} \\ \hline
\multicolumn{1}{|c|}{$w=$ below 65} & 0.990 & 0.865 & \multicolumn{1}{c|}{0.013} \\
\multicolumn{1}{|c|}{$w=$ 65-70}    &       & 0.847 & \multicolumn{1}{c|}{0.002} \\
\multicolumn{1}{|c|}{$w=$ 70-75}    &       &       & \multicolumn{1}{c|}{0.001} \\ \hline\hline
\multicolumn{4}{|c|}{$d=1$, $v=$ 70-75} \\ \hline
\multicolumn{1}{|c|}{}    & $w'=$ 65-70   & $w'=$ 70-75   & \multicolumn{1}{c|}{$w'=$ above 75} \\ \hline
\multicolumn{1}{|c|}{$w=$ below 65} & 0.914 & 0.870 & \multicolumn{1}{c|}{0.056} \\
\multicolumn{1}{|c|}{$w=$ 65-70}    &       & 0.949 & \multicolumn{1}{c|}{0.012} \\
\multicolumn{1}{|c|}{$w=$ 70-75}    &       &       & \multicolumn{1}{c|}{0.005} \\ \hline\hline
\multicolumn{4}{|c|}{$d=1$, $v=$ above 75} \\ \hline
\multicolumn{1}{|c|}{}    & $w'=$ 65-70   & $w'=$ 70-75   & \multicolumn{1}{c|}{$w'=$ above 75} \\ \hline
\multicolumn{1}{|c|}{$w=$ below 65} & 0.955 & 0.892 & \multicolumn{1}{c|}{0.024} \\
\multicolumn{1}{|c|}{$w=$ 65-70}    &       & 0.926 & \multicolumn{1}{c|}{0.004} \\
\multicolumn{1}{|c|}{$w=$ 70-75}    &       &       & \multicolumn{1}{c|}{0.001} \\ \hline
\end{tabular}
\caption{Unadjusted asymptotically-valid p-values for the individual weak  null hypotheses $\mathcal{H}_{0,(v,d)}^{(w,w')}$ of whether the raw score satisfies demographic insensitivity in the weak sense for age at the threshold $t=26$.}
\label{tab:pValues_FDP_X}
\end{table}
\subsubsection{Age-corrected MMSE scores} \label{sec:App_Oasis_FDP_corrected}
We now turn to age-corrected scores.  We consider the z-score correction \eqref{eq:AgeEduCorrection}, with means and standard deviations estimated  from the OASIS-3 dataset. To ensure independence of corrected scores across individuals so that our inference remains valid, we performed a 30\%-70\% random data split \cite{cox1975note}. We then used the smaller fold to compute the means and standard deviations among the non-impaired, and then used these to correct the raw scores of the individuals in the larger fold. It was this latter data that we used for our analysis. We target the probabilities describing whether the age-corrected score satisfies demographic insensitivity in the weak sense for age at a threshold of $t=-1.5$, commonly used in practice \cite[Tab. 1]{Bradfield2020-ae}. Analogous reasoning as  for the study of the raw score (Appendix \ref{sec:App_Oasis_FDP_raw}) leads to the individual unadjusted p-values found in Table \ref{tab:pValues_FDP_Z}. In this case, we do not reject the global null hypothesis of demographic insensitivity at a 5\% level, meaning we find no significant evidence indicating that the corrected score is sensitive to the age-quartiles; i.e., we find no evidence against the fact that the age-correction achieved demographic insensitivity.

\begin{table}
\centering
\begin{tabular}{|cccc|}
\hline
\multicolumn{4}{|c|}{$d=0$, $v=$ below 65} \\ \hline
\multicolumn{1}{|c|}{}    & $w'=$ 65-70   & $w'=$ 70-75   & \multicolumn{1}{c|}{$w'=$ above 75} \\ \hline
\multicolumn{1}{|c|}{$w=$ below 65} & 0.496 & 0.010 & \multicolumn{1}{c|}{0.003} \\
\multicolumn{1}{|c|}{$w=$ 65-70}    &       & 0.082 & \multicolumn{1}{c|}{0.031} \\
\multicolumn{1}{|c|}{$w=$ 70-75}    &       &       & \multicolumn{1}{c|}{0.556} \\ \hline\hline
\multicolumn{4}{|c|}{$d=0$, $v=$ 65-70} \\ \hline
\multicolumn{1}{|c|}{}    & $w'=$ 65-70   & $w'=$ 70-75   & \multicolumn{1}{c|}{$w'=$ above 75} \\ \hline
\multicolumn{1}{|c|}{$w=$ below 65} & 0.742 & 0.026 & \multicolumn{1}{c|}{0.008} \\
\multicolumn{1}{|c|}{$w=$ 65-70}    &       & 0.060 & \multicolumn{1}{c|}{0.022} \\
\multicolumn{1}{|c|}{$w=$ 70-75}    &       &       & \multicolumn{1}{c|}{0.535} \\ \hline\hline
\multicolumn{4}{|c|}{$d=0$, $v=$ 70-75} \\ \hline
\multicolumn{1}{|c|}{}    & $w'=$ 65-70   & $w'=$ 70-75   & \multicolumn{1}{c|}{$w'=$ above 75} \\ \hline
\multicolumn{1}{|c|}{$w=$ below 65} & 0.790 & 0.034 & \multicolumn{1}{c|}{0.025} \\
\multicolumn{1}{|c|}{$w=$ 65-70}    &       & 0.060 & \multicolumn{1}{c|}{0.047} \\
\multicolumn{1}{|c|}{$w=$ 70-75}    &       &       & \multicolumn{1}{c|}{0.779} \\ \hline\hline
\multicolumn{4}{|c|}{$d=0$, $v=$ above 75} \\ \hline
\multicolumn{1}{|c|}{}    & $w'=$ 65-70   & $w'=$ 70-75   & \multicolumn{1}{c|}{$w'=$ above 75} \\ \hline
\multicolumn{1}{|c|}{$w=$ below 65} & 0.772 & 0.031 & \multicolumn{1}{c|}{0.024} \\
\multicolumn{1}{|c|}{$w=$ 65-70}    &       & 0.056 & \multicolumn{1}{c|}{0.045} \\
\multicolumn{1}{|c|}{$w=$ 70-75}    &       &       & \multicolumn{1}{c|}{0.783} \\ \hline\hline
\multicolumn{4}{|c|}{$d=1$, $v=$ below 65} \\ \hline
\multicolumn{1}{|c|}{}    & $w'=$ 65-70   & $w'=$ 70-75   & \multicolumn{1}{c|}{$w'=$ above 75} \\ \hline
\multicolumn{1}{|c|}{$w=$ below 65} & 0.516 & 0.074 & \multicolumn{1}{c|}{0.072} \\
\multicolumn{1}{|c|}{$w=$ 65-70}    &       & 0.187 & \multicolumn{1}{c|}{0.175} \\
\multicolumn{1}{|c|}{$w=$ 70-75}    &       &       & \multicolumn{1}{c|}{0.803} \\ \hline\hline
\multicolumn{4}{|c|}{$d=1$, $v=$ 65-70} \\ \hline
\multicolumn{1}{|c|}{}    & $w'=$ 65-70   & $w'=$ 70-75   & \multicolumn{1}{c|}{$w'=$ above 75} \\ \hline
\multicolumn{1}{|c|}{$w=$ below 65} & 0.382 & 0.147 & \multicolumn{1}{c|}{0.062} \\
\multicolumn{1}{|c|}{$w=$ 65-70}    &       & 0.461 & \multicolumn{1}{c|}{0.217} \\
\multicolumn{1}{|c|}{$w=$ 70-75}    &       &       & \multicolumn{1}{c|}{0.709} \\ \hline\hline
\multicolumn{4}{|c|}{$d=1$, $v=$ 70-75} \\ \hline
\multicolumn{1}{|c|}{}    & $w'=$ 65-70   & $w'=$ 70-75   & \multicolumn{1}{c|}{$w'=$ above 75} \\ \hline
\multicolumn{1}{|c|}{$w=$ below 65} & 0.420 & 0.051 & \multicolumn{1}{c|}{0.057} \\
\multicolumn{1}{|c|}{$w=$ 65-70}    &       & 0.175 & \multicolumn{1}{c|}{0.192} \\
\multicolumn{1}{|c|}{$w=$ 70-75}    &       &       & \multicolumn{1}{c|}{0.769} \\ \hline\hline
\multicolumn{4}{|c|}{$d=1$, $v=$ above 75} \\ \hline
\multicolumn{1}{|c|}{}    & $w'=$ 65-70   & $w'=$ 70-75   & \multicolumn{1}{c|}{$w'=$ above 75} \\ \hline
\multicolumn{1}{|c|}{$w=$ below 65} & 0.397 & 0.086 & \multicolumn{1}{c|}{0.066} \\
\multicolumn{1}{|c|}{$w=$ 65-70}    &       & 0.277 & \multicolumn{1}{c|}{0.219} \\
\multicolumn{1}{|c|}{$w=$ 70-75}    &       &       & \multicolumn{1}{c|}{0.967} \\ \hline
\end{tabular}
\caption{Unadjusted asymptotically-valid p-values for the individual weak  null hypotheses $\mathcal{H}_{0,(v,d)}^{(w,w')}$ of whether the age-corrected score satisfies demographic insensitivity in the weak sense for age at the threshold $t=-1.5$.}
\label{tab:pValues_FDP_Z}
\end{table}

\end{document}